\newtheorem{theorem}{Theorem}[section]
\newtheorem{lemma}[theorem]{Lemma}
\newtheorem{proposition}[theorem]{Proposition}
\newtheorem{corollary}[theorem]{Corollary}
\newtheorem{definition}[theorem]{Definition}
\newenvironment{remark}[1][Remark]{\begin{trivlist}
\item[\hskip \labelsep {\bfseries #1}]}{\end{trivlist}}
\newcommand{\h}{\ensuremath{\mathcal{H}}}
\newcommand{\R}{\mathbb{R}}
\newcommand{\id}{\ensuremath{\mathds{1}}}
\begin{document}
\title{The Role of Topology in Quantum Tomography}
\author{Michael Kech}
\email{kech@ma.tum.de}
\affiliation{Department of Mathematics, Technische Universit\"{a}t M\"{u}nchen, 85748 Garching, Germany}
\author{P\'{e}ter Vrana}
\email{vranap@math.bme.hu}
\affiliation{Department of Geometry, Budapest University of Technology and Economics, Egry J\'{o}zsef u. 1., 1111 Budapest, Hungary
}
\author{Michael M. Wolf}
\email{wolf@ma.tum.de}
\affiliation{Department of Mathematics, Technische Universit\"{a}t M\"{u}nchen, 85748 Garching, Germany}

\date{\today}

\begin{abstract}
We investigate quantum tomography in scenarios where prior information restricts the state space to a smooth manifold of lower dimensionality. By considering stability we provide a general framework that relates the topology of the manifold to the minimal number of binary measurement settings that is necessary to discriminate any two states on the manifold. 

We apply these findings to cases where the subset of states under consideration is given by states with bounded rank, fixed spectrum, given unitary symmetry or taken from a unitary orbit. For all these cases we provide both upper and lower bounds on the minimal number of binary measurement settings necessary to discriminate any two states of these subsets.
\end{abstract}
\keywords{quantum tomography, prior information, topology, smooth embeddings\vspace*{-25pt}}

\maketitle

\tableofcontents
\section{Introduction}
The reconstruction of a quantum state from the outcome of an experiment, called quantum state tomography, is a task of fundamental importance in quantum information science. Already for small systems this task may be non-trivial, requiring many measurements and extensive postprocessing to reconstruct a state. With growing  system size this complexity becomes exceedingly relevant \cite{haffner2005scalable}. 

There are at least three kinds of resources that can be considered in this context: (i) the number of measurement settings or, mathematically equivalent, the number of measurement outcomes if a single generalized measurement is considered, (ii) the number of samples to be measured, i.e., the sampling complexity, which takes the statistics into account and (iii) the classical post-processing that is required to interpret the data. In this work we will focus on (i).

We are interested in cases where prior information is available that effectively restricts the state space to a submanifold of lower dimensionality. This information may concern the rank of the density operator, its spectrum, symmetry, energy, associated particle number or other properties and combinations thereof.

The question behind our analysis is: what is the minimal number of binary measurement settings that is required to uniquely identify the state under the assumption that it is taken from the given submanifold? Motivated by the results in \cite{heinosaari2013quantum} our aim is a better understanding of the relation between the minimal number of required measurement settings and the topology of the considered submanifold. Such a relation is most clear in low dimensional examples: suppose the submanifold forms a Klein bottle. Then, although it is two-dimensional, it requires at least four binary measurement settings to identify every point since, loosely speaking, in less than four dimensions the Klein bottle has no realization without self-intersections.

This topological reasoning was introduced in \cite{heinosaari2013quantum} and there applied successfully for instance to the case of pure state quantum tomography. The latter has been a topic of active research in quantum information theory \cite{heinosaari2013quantum, weigert1992pauli, amiet1999reconstructing, amiet1998reconstructing2, finkelstein2004pure, MV13, CarmeliTeikoJussi1, CarmeliTeikoJussi2, flammia2005minimal, QCS2, QCS3}, closely related to the problem of phase retrieval \cite{PhaseRetrieval, GrossPhaseLift}.

On a Hilbert space of dimension $d$, the set of pure states is of dimension $2d-2$, whereas the set of all states is of dimension $d^{2}-1$. Consequently, in order to uniquely identify an arbitrary state, one has to at least perform $d^{2}-1$ different binary measurements, whereas in the case of pure states one can hope that $\mathcal{O}(d)$ measurements suffice to uniquely identify a state.  In \cite{heinosaari2013quantum} it was shown that to leading order $4d$ binary measurements are necessary and sufficient to identify pure states and the compressed sensing approach of \cite{gross2010quantum, QCS2, QCS3} provides an algorithm based on  $\mathcal{O}(dr\log(d))$ binary measurements with which a $d\times d$ matrix of rank $r$ can be reliably identified.

The approach we take in this paper extends the results of \cite{heinosaari2013quantum} and gives a general framework for the validity of the topological reasoning in quantum tomography. Thereby, we show that the approach is applicable in the presence of statistical fluctuations, imprecise prior information or inaccuracies in the implementation of the measurement set-up. Moreover, we provide a detailed analysis of a variety of old and new examples of submanifolds.

\textit{Outline.} We consider measurements as smooth maps from a smooth submanifold of states into Euclidean space. The methods we deploy to find bounds on the number of measurement outcomes necessary to identify a state of a given submanifold uniquely rely on the technical assumption that this smooth map is a smooth embedding.

In section \ref{sec1} we give an operational meaning of the smooth embedding assumption and we determine the relation of quantum tomography to the embedding problem in differential topology.

First, in subsection \ref{sec11}, we justify the smooth embedding assumption by relating it to properties one would generally require of measurements. More precisely, we give two natural notions of stability and we show that these are in fact equivalent to the measurement being a smooth embedding. In the sense of these stability properties our approach is robust with respect to noise.

Secondly, in subsection \ref{sec12}, we generalize the measurement scheme by allowing for measurements on several copies of a state. We then show that any smooth embedding can be approximated by these generalized measurements. This proves that asking for the minimal number of measurement that is needed to identify all states of a given submanifold of states is equivalent to asking for the minimal dimension in which this manifold can be embedded.

Having justified our methods of finding bounds in section \ref{sec1}, we devote section \ref{sec2} to applying this method in concrete scenarios. We obtain upper and lower bounds on the number of measurement outcomes necessary to identify states of certain interesting submanifolds. The lower bounds result from topological obstructions, whereas the upper bounds rely on the explicit construction of measurement schemes. The methods used in this section are very different from the the ones used in the first two sections and from this point of view section \ref{sec2} can be read independently. 

First, we investigate states of fixed spectrum and we relate these to states of bounded rank in subsection \ref{sec21}. More precisely, we present lower bounds on the number of measurements necessary to identify states of fixed spectrum and these lower bounds turn out to be very close to the upper bounds for states of bounded rank obtained in \cite{heinosaari2013quantum}. In this way we obtain good upper and lower bounds for both the states of fixed spectrum and the states with bounded rank. 

In subsection \ref{unisym}, we obtain lower and upper bounds for states with a unitary symmetry and we use this to obtain both lower and upper bounds for states of fixed spectrum with a unitary symmetry in subsection \ref{sec23}. 

Finally, in subsection \ref{sec24}, we obtain upper and lower bounds for states in a bipartite system that lie in the Bob-unitary orbit of a certain pure state, i.e. we consider all states that can be reached from a given pure state by acting with a unitary matrix that just effects  Bob's subsystem.  Physically, this scenario may correspond to an interferometry experiment. Note that if the initial state is maximally entangled, this orbit is the set of maximally entangled states which may be interesting in its own right. Identifying a maximally entangled state is equivalent to determining the unitary matrix that acted on Bob's subsystem. So this method can also be used for process-tomography of unitary time evolutions, complementing the results in \cite{unitary1,unitary2}. 

Proofs of technical results can be found in the appendix.

\section{Preliminaries}

Let $\mathcal{H}$ be a finite dimensional Hilbert space. We denote by $\mathcal{B}(\mathcal{H})$ the complex vector space of linear operators on $\mathcal{H}$. $H(\mathcal{H})$ denotes the real vector space of hermitian operators on $\mathcal{H}$ and $H(\mathcal{H})_{0}$ denotes the real vector space of traceless hermitian matrices, i.e. $H(\mathcal{H})_{0}:=\{h\in H(\mathcal{H}):\text{tr}(h)=0\}$. Throughout we consider these spaces as inner product spaces equipping them with the Hilbert-Schmidt inner product. Furthermore, $\mathcal{S}(\mathcal{H})$ will denote the set of quantum states on $\mathcal{H}$, i.e. $\mathcal{S}(\mathcal{H}):=\{\rho\in H(\mathcal{H}):\rho\geq 0, \text{tr}(\rho)=1\}$.

A positive operator valued measure (POVM) corresponds to a set of positive semidefinite operators $P:=\{P_{1},...,P_{m}\}$ in $H(\mathcal{H})$ such that
\begin{align*}
\sum_{i=1}^{m}P_{i}=\id_\h.
\end{align*}
An element of $P$ is called an effect operator. We define the dimension of $P$ by $\dim P=m-1$. In quantum mechanics, POVMs are used to describe general measurements \cite{holevo2011probabilistic, busch1995operational}.

There is an operator system
\footnote{An operator system $\sigma\subseteq\mathcal{B}(\mathcal{H})$ is a linear subspace such that $\id_{\mathcal{H}}\in\sigma$ and $\sigma^{\dagger}=\sigma$.} 
$\sigma_{P}$ associated to each POVM $P$ given by the complex linear span of the operators of $P$. For an operator system $\sigma$ denote by $\sigma^{\R}$ the real vector space of hermitian operators in $\sigma$, i.e. $\sigma^{\R}:=\{h\in\sigma:h^{\dagger}=h\}$ \footnote{Note that $\sigma^{\R}$ determines $\sigma$ uniquely.}. In the following we assume the effect operators of a POVM $P$ to be linearly independent over $\mathbb{C}$. Note that by this convention $\dim P=\dim \sigma_{P}-1$. To each operator system $\sigma$ one can associate the orthogonal projection from $H(\h)$ to $\sigma^{\R}\subseteq H(\h)$. Throughout we denote this associated projection by $\pi_{\sigma}$.\\

\begin{definition}
A POVM $P:=\{P_{1},...,P_{m}\}$ induces a linear map
\begin{align*}
h_{P}:H(\mathcal{H})&\to \mathbb{R}^{m} \\
  \rho&\mapsto\big( \text{tr}(P_{1}\rho),...,\text{tr}(P_{m}\rho) \big).
\end{align*}
$P$ is called $\mathcal{R}$-complete for a subset $\mathcal{R}\subseteq \mathcal{S}(\mathcal{H})$ if $h_{P}|_{\mathcal{R}}$ is injective and it is called $\mathcal{P}$-embedding for a smooth submanifold $\mathcal{P}\subseteq \mathcal{S}(\mathcal{H})$ if $h_{P}|_{\mathcal{P}}$ is a smooth embedding \footnote{A smooth mapping $\psi:M\to N$ is called a smooth embedding if $d\psi_{x}$ is injective for all $x\in M$ and $\psi$ is a homeomorphism onto its image.}.
\end{definition}

Recall that the question behind our analysis concerns the minimal $m$ for which there is a $\mathcal{P}$-complete POVM for a given smooth submanifold  $\mathcal{P}\subseteq \mathcal{S}(\mathcal{H})$ that characterized the available prior information. From the dimension $D:=\dim\mathcal{P}$ alone one obtains that $m\geq D$ is necessary and $m=2D+1$ is generally sufficient \cite{heinosaari2013quantum}. For better bounds, one has to invoke more of the (topological) structure of the manifold.

In the following all manifolds and submanifolds are assumed to be smooth. Throughout we regard both $\mathcal{S}(\mathcal{H})$ and submanifolds $\mathcal{P}\subseteq\mathcal{S}(\mathcal{H})$ with $\mathcal{H}\simeq\mathbb{C}^{n}$ as submanifolds of $H(\h)\simeq\R^{n^2}$ equipped with the subspace topology and the standard smooth structure. We often use this picture to identify the tangent space at a point $\rho\in \mathcal{P}$, $T_{\rho}\mathcal{P}$, with a linear subspace in $H(\h)$, i.e. we think of tangent vectors $v\in T_{\rho}\mathcal{P}$ as hermitian operators. We assume submanifolds $\mathcal{P}\subseteq \mathcal{S}(\mathcal{H})$ to be closed and without boundary. In particular, this means that $\mathcal{P}$ is an embedded submanifold by the compactness of $\mathcal{S}(\mathcal{H})$, i.e. the inclusion is a homeomorphism onto its image. 

\section{Topological Analysis of Measurements}\label{sec1}
\subsection{Stable Measurements}\label{sec11}
Let $\mathcal{P}\subseteq \mathcal{S}(\mathcal{H})$ be a submanifold. In order for our methods for finding bounds on the dimension of $\mathcal{P}$-complete POVMs to apply, we need the technical requirement that these POVMs are $\mathcal{P}$-embeddings. In this section we justify this assumption. We develop two notions of stability for a $\mathcal{P}$-complete POVM and we show that these notions are equivalent to the POVM being a $\mathcal{P}$-embedding. These notions of stability are properties one would naturally require for $\mathcal{P}$-complete POVMs. Thus, under the premise of stability, $\mathcal{P}$-complete POVMs are $\mathcal{P}$-embeddings.\\
For a given POVM $P$ the notions of $\mathcal{P}$-embedding and $\mathcal{P}$-completeness just depend on its associated operator system $\sigma_{P}$ as the following proposition shows.

\begin{proposition}\label{propinj}
Let $P$ be a POVM, $h_{P}$ be the associated linear map, $\mathcal{P}\subseteq \mathcal{S}(\mathcal{H})$ be a submanifold and let $\pi_{P}:H(\mathcal{H})\to\sigma^{\R}_{P}$ be the orthogonal projection on $\sigma_{P}^{\R}$. \\
1. $P$ is $\mathcal{P}$-complete if and only if $\pi_{P}|_{\mathcal{P}}$ is injective.\\
2. $P$ is a $\mathcal{P}$-embedding if and only if $h_{P}$ is $\mathcal{P}$-complete and $d(\pi_{P}|_\mathcal{P})_{\rho}=\pi_{P}|_{T_{\rho}P}$ is injective for each $\rho\in\mathcal{P}$.
\end{proposition}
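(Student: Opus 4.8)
The plan is to relate the linear map $h_P \colon H(\h) \to \R^m$ to the projection $\pi_P$ by observing that the two differ only by a fixed linear isomorphism. First I would record the key algebraic fact: for any $\rho \in H(\h)$ and any effect operator $P_i$, the number $\tr(P_i \rho)$ is unchanged if $\rho$ is replaced by its orthogonal projection $\pi_P(\rho)$ onto $\sigma_P^\R$, since $P_i \in \sigma_P^\R$ and the complement of $\sigma_P^\R$ in $H(\h)$ is Hilbert–Schmidt orthogonal to $\sigma_P^\R$. Hence $h_P = h_P \circ \pi_P$, and since the effect operators are assumed linearly independent and span $\sigma_P$, the restriction $h_P|_{\sigma_P^\R} \colon \sigma_P^\R \to \R^m$ is injective, in fact a linear isomorphism onto its (closed) image — call it $\iota$. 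Thus $h_P = \iota \circ \pi_P$ with $\iota$ a linear isomorphism onto its image.

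Given this factorization, part 1 is immediate: $h_P|_{\cP}$ is injective if and only if $\pi_P|_{\cP}$ is injective, because composing with the injective linear map $\iota$ changes nothing. This is the $\cR$-completeness statement with $\cR = \cP$.

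For part 2, I would unwind the definition of smooth embedding (given in the footnote): $h_P|_{\cP}$ is a smooth embedding iff (a) $d(h_P|_{\cP})_\rho$ is injective for every $\rho \in \cP$, and (b) $h_P|_{\cP}$ is a homeomorphism onto its image. Since $h_P$ is linear, $d(h_P|_{\cP})_\rho = h_P|_{T_\rho \cP} = \iota \circ \pi_P|_{T_\rho \cP}$ under the identification of $T_\rho\cP$ with a subspace of $H(\h)$; as $\iota$ is a linear isomorphism onto its image, (a) holds iff $\pi_P|_{T_\rho\cP}$ is injective for each $\rho$. Likewise $h_P|_{\cP} = \iota \circ (\pi_P|_{\cP})$, and a continuous injection from a compact space (here $\cP$, which is compact since it is a closed submanifold of the compact set $\cS(\h)$) to a Hausdorff space is automatically a homeomorphism onto its image; so condition (b) is equivalent to mere injectivity of $h_P|_{\cP}$, i.e. to $\cP$-completeness, and again $\iota$ being a linear isomorphism lets us pass freely between $h_P|_{\cP}$ and $\pi_P|_{\cP}$. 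Combining, $P$ is a $\cP$-embedding iff $P$ is $\cP$-complete and $\pi_P|_{T_\rho\cP}$ is injective for all $\rho \in \cP$. Finally I would note that $d(\pi_P|_{\cP})_\rho = \pi_P|_{T_\rho\cP}$ is just the statement that the differential of the restriction of a linear map is that linear map restricted to the tangent space, which is the standard identification already set up in the Preliminaries.

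The only real subtlety — and the step I would be most careful about — is the homeomorphism part of (b): one must justify that injectivity of the differential plus injectivity of the map genuinely suffices for an embedding, and this is exactly where compactness of $\cP$ does the work, ruling out the usual pathologies (an injective immersion of a noncompact manifold need not be an embedding). Everything else is routine linear algebra once the factorization $h_P = \iota \circ \pi_P$ is in place.
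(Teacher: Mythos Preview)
Your proof is correct and follows essentially the same approach as the paper: both establish the factorization $h_P = h_P \circ \pi_P$ with $h_P$ injective on $\sigma_P^{\R}$, then read off both statements from this. You are slightly more explicit than the paper in invoking compactness of $\mathcal{P}$ to handle the homeomorphism part of the embedding condition, which the paper leaves implicit.
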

\begin{proof}
Since we equipped $H(\mathcal{H})$ with the Hilbert-Schmidt inner product, by the definition of $h_{P}$, we get $\sigma_{P}^{\R}=\text{span}_{\R}P\subseteq \ker (h_{P})^{\bot}$ and since $\text{rank} h_P=\dim \sigma_{P}^{\R}$, we get $\sigma_{P}^{\R}= \ker (h_{P})^{\bot}$ by dimensional reasons. So $h_{P}=h_{P}\circ\pi_{P}$.

For the first statement, let $P$ be $\mathcal{P}$-complete and $h_{P}$ be the associated linear map. Since $h_{P}|_{\mathcal{P}}$ is injective and $h_{P}=h_{P}\circ\pi_{P}$, we get that $\pi_{P}|_{\mathcal{P}}$ is injective.

Conversely, let $\pi_{P}|_{\mathcal{P}}$ be injective. Then, since $h_{P}=h_{P}\circ\pi_{P}$, $h_{P}|_{\mathcal{P}}$ is injective because $\pi_{P}|_{\mathcal{P}}$ is injective and $h_{P}$ is injective restricted to the image of $\pi_{P}$.\\
Noting that by linearity we have $d(h_{P}|_{\mathcal{P}})_{\rho}=dh_{P}|_{T_{\rho}\mathcal{P}}=h_{P}|_{T_{\rho}\mathcal{P}}$, the above reasoning also applies for the second statement.
\end{proof}
For a submanifold $\mathcal{P}\subseteq\mathcal{S}(\mathcal{H})$, $\mathcal{P}$-completeness and being a $\mathcal{P}$-embedding are the only properties of a POVM we are interested in. Thus, by proposition \ref{propinj}, there is a natural equivalence relation on the set of POVMs, namely
\begin{align*}
P\sim P^{\prime}\ \ \ \Leftrightarrow\ \ \ \sigma_{P}=\sigma_{P^{\prime}}.
\end{align*}
Since every $n$-dimensional operator system is generated by an $(n-1)$-dimensional POVM \cite{heinosaari2013quantum}, the operator systems are precisely the equivalence classes.

Since the proofs we give are easier to formulate using operator system we often state our results in terms of operator systems and then transfer them to POVMs.

Let $\Sigma (n)$ be the set of $n$-dimensional operator systems.  For a subset $\mathcal{R}\subseteq\mathcal{S}(\mathcal{H})$ we call $\sigma\in\Sigma(n)$ $\mathcal{R}$-complete if $\pi_{\sigma}|_{\mathcal{R}}$ is injective and for a submanifold $\mathcal{P}\subseteq\mathcal{S}(\mathcal{H})$ we call $\sigma\in\Sigma(n)$ a $\mathcal{P}$-embedding if $\pi_{\sigma}|_{\mathcal{P}}$ is a smooth embedding.
\\
 A metric on $\Sigma (n)$, which is natural for our purpose, can be defined in terms of any norm on the corresponding linear map $\pi_\sigma$. For an arbitrary linear map $L:H(\h)\to H(\h)$ we consider
\begin{align*}
\|L\|_{op}=\sup_{B\in H(\mathcal{H}),\|B\|\leq 1}\|L(B)\|
\end{align*}
where $\|\cdot\|$ denotes the Hilbert-Schmidt norm. The sought metric is then given by $d(\sigma,\sigma^{\prime})=\|\pi_{\sigma}-\pi_{\sigma^{\prime}}\|_{op}$. The following definition refers to the metric topology induced on $\Sigma (n)$.
\begin{definition}(Stability).
Let $\mathcal{R}\subseteq\mathcal{S}(\mathcal{H})$ be a subset. An $\mathcal{R}$-complete operator system $\sigma\in\Sigma (n)$ is called stably $\mathcal{R}$-complete if there exists a neighbourhood $N\subseteq\Sigma (n)$ of $\sigma$ such that every $\sigma^{\prime}\in N$ is an $\mathcal{R}$-complete operator system. A POVM $P$ is called stably $\mathcal{R}$-complete if its associated operator system $\sigma_{P}$ is $\mathcal{P}$-complete .
\end{definition}
\begin{remark}
In the following we will see that closeness of POVMs is equivalent to closeness of the associated operators systems. Thus, this definition says that a stably $\mathcal{P}$-complete POVM $P$ is robust against inaccuracy in its implementation in the sense that every close enough POVM is also $\mathcal{P}$-complete.
\end{remark}
The intuition behind this definition is best envisioned by thinking of operator systems as planes in $H(\h)\simeq \R^{d^2}$, see figure \ref{fig1}.
\begin{figure}[ht]
\includegraphics[width=7cm]{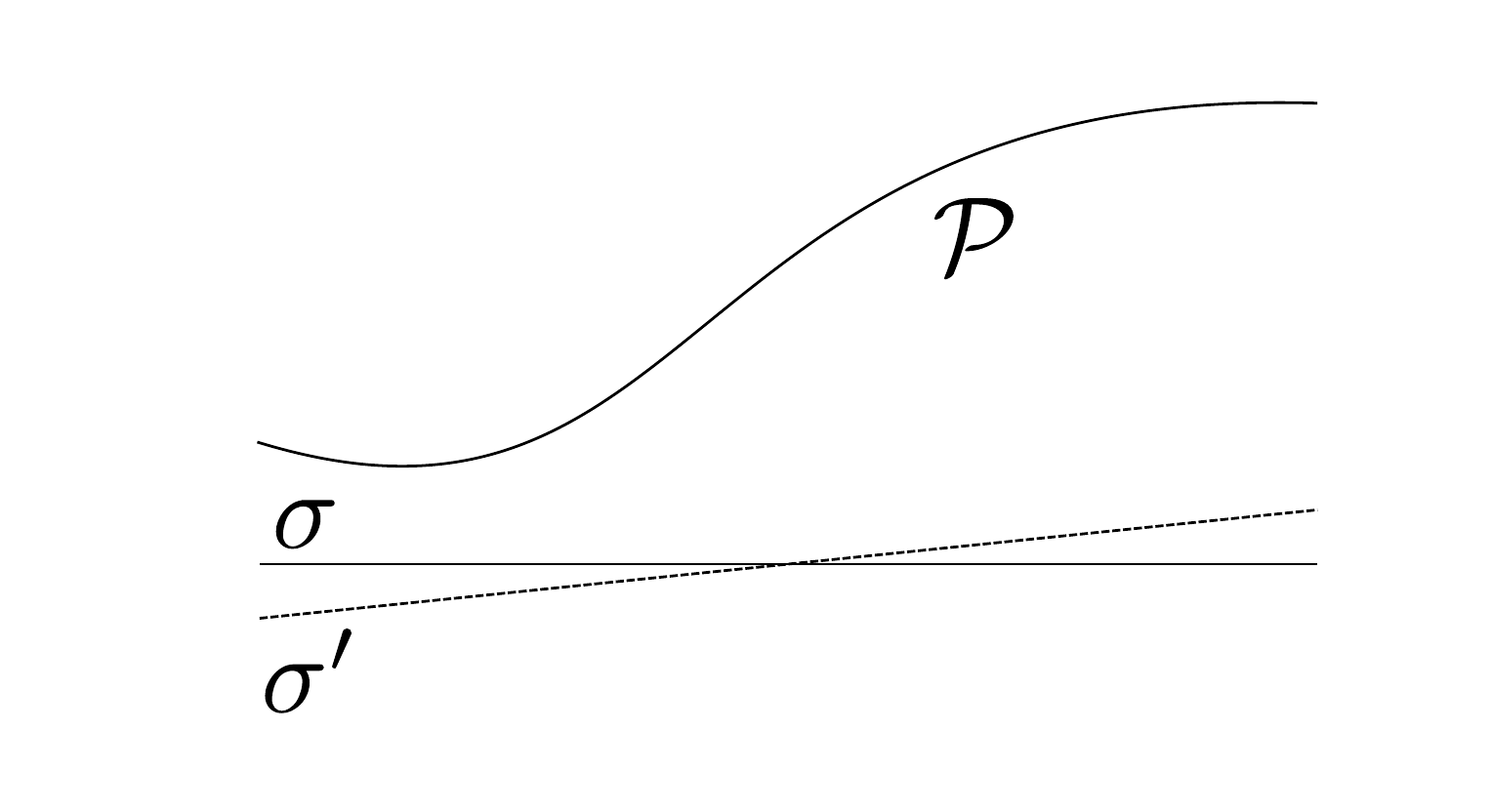}

\caption{\label{fig1}This figure shows a submanifold $\mathcal{P}\subseteq H(\h)\simeq \R^{d^2}$ and two close operator systems $\sigma,\sigma^{\prime}\subseteq H(\h)\simeq \R^{d^2}$ that are $\mathcal{P}$-complete.}
\end{figure}

Now we are in a position to state one of the main results of this section.
\begin{theorem}\label{thmstability}(Stable measurements are embeddings).
Let $\mathcal{P}\subseteq\mathcal{S}(\mathcal{H})$ be a closed submanifold and let $\sigma\in\Sigma (n)$ be $\mathcal{P}$-complete. Then $\sigma$ is stably $\mathcal{P}$-complete if and only if it is a $\mathcal{P}$-embedding.
\end{theorem}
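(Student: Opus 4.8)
The plan is to recast both notions as a \emph{separation} statement between the kernel $(\sigma^{\R})^{\bot}=\ker\pi_{\sigma}$ and a compact set of directions attached to $\mathcal{P}$. For $\rho\neq\rho'$ in $\mathcal{P}$ I set $s(\rho,\rho')=(\rho-\rho')/\|\rho-\rho'\|$, a traceless hermitian unit vector, and let $S\mathcal{P}$ be the closure, in the unit sphere of $H(\h)$, of the set of all such secant directions. Using compactness of $\mathcal{P}$ and a first-order Taylor expansion in a chart, I would verify that $S\mathcal{P}$ is compact and equals the union of the secant directions of $\mathcal{P}$ with the unit tangent vectors of $\mathcal{P}$: a convergent sequence of secants whose endpoints stay apart limits to a secant, while one whose endpoints merge limits to a unit tangent vector. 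Since $\mathcal{P}$ is compact an injective immersion out of it is an embedding, so Proposition~\ref{propinj} then gives the two clean equivalences that drive the proof: $\sigma$ is $\mathcal{P}$-complete $\iff$ no secant direction lies in $(\sigma^{\R})^{\bot}$, and $\sigma$ is a $\mathcal{P}$-embedding $\iff$ $S\mathcal{P}\cap(\sigma^{\R})^{\bot}=\emptyset$.

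For the implication ``$\mathcal{P}$-embedding $\Rightarrow$ stably $\mathcal{P}$-complete'', I would use that $S\mathcal{P}$ and $(\sigma^{\R})^{\bot}\cap S$ are then disjoint compacta, so $\varepsilon_{0}:=\mathrm{dist}\big(S\mathcal{P},(\sigma^{\R})^{\bot}\cap S\big)>0$, and then bound, for $\sigma'$ with $d(\sigma,\sigma')<\delta$, every unit $w\in(\sigma'^{\R})^{\bot}$: since $\pi_{\sigma'}(w)=0$ one gets $\|\pi_{\sigma}(w)\|=\|(\pi_{\sigma}-\pi_{\sigma'})(w)\|<\delta$, so $w$ lies within $2\delta$ of $(\sigma^{\R})^{\bot}\cap S$, whence $w\notin S\mathcal{P}$ once $\delta<\varepsilon_{0}/2$. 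This shows $(\sigma'^{\R})^{\bot}\cap S\mathcal{P}=\emptyset$, so the whole $\delta$-ball around $\sigma$ consists of $\mathcal{P}$-embeddings, in particular of $\mathcal{P}$-complete operator systems.

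For the converse I would argue contrapositively: assume $\sigma$ is $\mathcal{P}$-complete but not a $\mathcal{P}$-embedding. By Proposition~\ref{propinj} there is then $\rho_{0}\in\mathcal{P}$ and $0\neq v\in T_{\rho_{0}}\mathcal{P}$ with $v\in(\sigma^{\R})^{\bot}$. Pick a smooth curve $\gamma$ in $\mathcal{P}$ with $\gamma(0)=\rho_{0}$, $\dot\gamma(0)=v$; for small $t>0$ the direction $u_{t}=(\gamma(t)-\gamma(-t))/\|\gamma(t)-\gamma(-t)\|$ satisfies $u_{t}\to v/\|v\|\in(\sigma^{\R})^{\bot}$ by Taylor expansion, hence $\pi_{\sigma}(u_{t})\to0$. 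Both $u_{t}$ and the normalized projection $q_{t}=\pi_{(\sigma^{\R})^{\bot}}(u_{t})/\|\pi_{(\sigma^{\R})^{\bot}}(u_{t})\|$ lie in the unit sphere of $\id^{\bot}$ (secants of states are traceless and $\id\in\sigma^{\R}$) and converge to the same limit, so there is a rotation $R_{t}$ of $\id^{\bot}$ with $R_{t}(q_{t})=u_{t}$ and $R_{t}\to\id$. Extending $R_{t}$ by the identity on $\R\id$ and setting $(\sigma'_{t})^{\R}:=R_{t}(\sigma^{\R})$ produces an element of $\Sigma(n)$ with $d(\sigma,\sigma'_{t})=\|R_{t}\pi_{\sigma}R_{t}^{-1}-\pi_{\sigma}\|_{op}\to0$, and since $q_{t}\in(\sigma^{\R})^{\bot}$ one gets $\gamma(t)-\gamma(-t)\parallel u_{t}=R_{t}(q_{t})\in((\sigma'_{t})^{\R})^{\bot}$, so $\pi_{\sigma'_{t}}$ identifies the distinct points $\gamma(t),\gamma(-t)$ and $\sigma'_{t}$ is not $\mathcal{P}$-complete. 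Thus $\sigma$ is not stably $\mathcal{P}$-complete, and together with the previous paragraph this proves the theorem.

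I expect the first direction to be essentially routine compactness. The main obstacle is the converse: turning a bare failure of the immersion condition into a \emph{genuine} nearby operator system — one still containing $\id$ and of the correct dimension $n$ — whose kernel contains an \emph{exact} secant of $\mathcal{P}$. The idea that makes it go through is that non-immersiveness supplies secant directions arbitrarily close to a vector already in $\ker\pi_{\sigma}$, so an arbitrarily small $\id$-fixing rotation of $\sigma^{\R}$ can be arranged to absorb one such secant exactly; the points needing care are the estimate that this rotation is $o(1)$ in operator norm and the check that $(\sigma'_{t})^{\R}$ really defines an element of $\Sigma(n)$.
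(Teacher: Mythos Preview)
Your proof is correct and takes a cleaner route than the paper's. The unifying device you introduce---the compact set $S\mathcal{P}$ of secant and unit tangent directions---reduces both implications to a separation question between $\ker\pi_\sigma$ and $S\mathcal{P}$. For ``embedding $\Rightarrow$ stable'' this yields a direct compactness argument, whereas the paper defers to Corollary~\ref{linap}, which in turn relies on Hirsch's $C^1$-stability lemma for embeddings (Lemma~\ref{lememb}); your version avoids that forward reference entirely. For the converse both proofs share the same core idea: rotate $\sigma$ by an element of $\mathcal{SO}(H(\h))_{\id}$ so that the kernel of the perturbed operator system contains an exact secant of $\mathcal{P}$. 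The paper establishes this by showing the curve $\gamma$ enters the open cone $\{\rho+\lambda Ov:\lambda>0,\ \|\id-O\|_{op}<\delta\}$ (via openness of the orbit map onto the sphere), while you construct the rotation explicitly by sending the normalized kernel-projection $q_t$ to the secant direction $u_t$. Your construction is more elementary and self-contained; the paper's route has the minor advantage that the $C^1$-stability machinery it invokes is reused later in the proof of Lemma~\ref{lemapprox}.
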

Since the proof of this theorem is rather lengthy we relegated it to appendix \ref{appendixthmstability}.

Theorem \ref{thmstability} is a statement about operator systems. In order to provide it with an operational meaning, we prove the corresponding stability result for POVMs in the following.

\begin{corollary}
Let $\mathcal{P}\subseteq\mathcal{S}(\mathcal{H})$ be a closed submanifold and let $P:=\{P_1,...,P_m\}$ be a $\mathcal{P}$-complete POVM of dimension $m-1$ with associated linear map $h_P$. $P$ is a $\mathcal{P}$-embedding if and only if there is an $\epsilon>0$ such that every POVM $Q$ with $\sup_{v\in H(\h),\|v\|\leq 1}\|(h_{P}-h_{Q})v\|<\epsilon$ is $\mathcal{P}$-complete.
\end{corollary}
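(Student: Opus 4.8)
The plan is to transfer Theorem~\ref{thmstability}, which is a statement about operator systems, to POVMs by setting up a two-way comparison between the metric on $\Sigma(n)$ and the metric on POVMs used in the corollary. Two observations will organize the argument. First, by Proposition~\ref{propinj} together with the definitions, the property ``$P$ is a $\mathcal{P}$-embedding'' depends only on $\sigma_P$ and is equivalent to ``$\sigma_P$ is a $\mathcal{P}$-embedding'', and likewise ``$Q$ is $\mathcal{P}$-complete'' is equivalent to ``$\sigma_Q$ is $\mathcal{P}$-complete''. Second, the quantity $\sup_{\|v\|\le 1}\|(h_P-h_Q)v\|$ is, up to factors $\sqrt{m}$, the same as $\max_i\|P_i-Q_i\|$ (its adjoint sends $(a_i)\mapsto\sum_i a_i(P_i-Q_i)$), so the $\epsilon$-neighbourhood of POVMs appearing in the statement is comparable to the set of $Q$ whose effect operators are individual $\epsilon$-perturbations of those of $P$.

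For the forward direction I would assume $P$ is a $\mathcal{P}$-embedding, so that $\sigma_P$ is a $\mathcal{P}$-embedding; Theorem~\ref{thmstability} then supplies $\delta>0$ with the property that every $\sigma'\in\Sigma(n)$ satisfying $\|\pi_{\sigma_P}-\pi_{\sigma'}\|_{op}<\delta$ is $\mathcal{P}$-complete. Using that the $P_i$ are linearly independent and that the orthogonal projection onto a subspace depends continuously on a basis of it (away from degenerate configurations), I would pick $\epsilon>0$ small enough that any POVM $Q$ with $\max_i\|P_i-Q_i\|<\epsilon$ still has linearly independent effects, so that $\sigma_Q\in\Sigma(n)$, and satisfies $\|\pi_{\sigma_P}-\pi_{\sigma_Q}\|_{op}<\delta$. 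Then $\sigma_Q$, and hence $Q$, is $\mathcal{P}$-complete.

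For the converse I would assume the $\epsilon$-robustness of $P$ and first reduce to the case of strictly positive effects by replacing $P_i$ with $\tilde P_i:=(1-t)P_i+\tfrac{t}{m}\id_\h$ for small $t>0$: this keeps the effects linearly independent and (now strictly) positive, has $\sum_i\tilde P_i=\id_\h$, leaves $\sigma_{\tilde P}=\sigma_P$ since $\id_\h\in\sigma_P$, and, being arbitrarily close to $P$, inherits a slightly weaker robustness. Then, given $\sigma'\in\Sigma(n)$ with $\eta:=\|\pi_{\sigma_P}-\pi_{\sigma'}\|_{op}$ small, I would set $Q_i:=\pi_{\sigma'}(\tilde P_i)$ and verify that $Q$ is an honest POVM realizing $\sigma'$: we have $Q_i\in(\sigma')^{\R}$; $\sum_i Q_i=\pi_{\sigma'}(\id_\h)=\id_\h$ because $\id_\h\in(\sigma')^{\R}$; $\|Q_i-\tilde P_i\|=\|(\pi_{\sigma'}-\pi_{\sigma_P})\tilde P_i\|\le\eta\,\|\tilde P_i\|$, so $Q_i>0$ once $\eta$ is small; and $\pi_{\sigma'}$ is injective on $\sigma_P^{\R}$ for small $\eta$, so the $Q_i$ span $(\sigma')^{\R}$ and $\sigma_Q=\sigma'$. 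Since $Q$ is close to $\tilde P$ it is $\mathcal{P}$-complete, hence $\sigma'=\sigma_Q$ is $\mathcal{P}$-complete; thus $\sigma_P$ is stably $\mathcal{P}$-complete, and Theorem~\ref{thmstability} makes $\sigma_P$, hence $P$, a $\mathcal{P}$-embedding.

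The only step that is not formal bookkeeping, and hence the main obstacle, is this transport step in the converse: turning an operator system near $\sigma_P$ into a genuine POVM (positive effects summing to $\id_\h$) that is simultaneously close to $P$ and has the prescribed operator system. The key point is that $\pi_{\sigma'}$ fixes $\id_\h$, which yields the normalization for free and, after the preliminary reduction to strictly positive effects, also positivity; the accompanying quantitative facts---continuity of subspace projections, openness of linear independence, and near-isometry of $\pi_{\sigma'}$ on $\sigma_P^{\R}$ for nearby $\sigma'$---are routine perturbation estimates that I would package as a short lemma.
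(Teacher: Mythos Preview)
Your proposal is correct, and for the converse direction (robustness $\Rightarrow$ embedding) it is essentially identical to the paper's proof: the paper also first perturbs $P$ to a POVM $\tilde P$ with strictly positive effects and the same operator system (using $\tilde P_i:=\frac{\sqrt n}{\sqrt n+\eta}(P_i+\frac{\eta}{\sqrt n}\id_\h)$, which is the same idea as your convex combination), then sets $P'_i:=\pi_{\sigma'}(\tilde P_i)$ and checks positivity, normalization $\sum_i P'_i=\pi_{\sigma'}(\id_\h)=\id_\h$, and linear independence, exactly as you outline. You correctly identified this transport step as the only nontrivial point.

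For the forward direction (embedding $\Rightarrow$ robustness) the paper takes a shorter route: rather than invoking Theorem~\ref{thmstability} plus the continuity of $\sigma\mapsto\pi_\sigma$, it applies Corollary~\ref{linap} directly to the linear map $h_P$, which immediately gives that every linear map $h_Q$ close to $h_P$ restricts to an embedding (hence is injective) on $\mathcal P$. Your argument also works but requires the auxiliary perturbation lemma about projections onto nearby subspaces; the paper avoids this by staying at the level of the linear maps $h_P,h_Q$ themselves.
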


\begin{proof}
Let $\mathcal{H}=\mathbb{C}^{n}$ and let $\pi_{P}$ be the orthogonal projection associated to $P:=\{P_{1},...,P_{m}\}$.

Let $\epsilon>0$ such that every POVM $Q$ with $\sup_{v\in H(\h),\|v\|\leq 1}\|(h_{P}-h_{Q})v\|_{2}<\epsilon$ is $\mathcal{P}$-complete. We show, that there is $\delta>0$ such that for all $\sigma^{\prime}\in \Sigma(m)$ with
\begin{align*}
\|\pi_{P}-\pi_{\sigma^{\prime}}\|_{op}<\delta
\end{align*}
there is a POVM $P^{\prime}:=\{P_{1}^\prime,...,P_{m}^\prime\}$ with $\sigma_{P^{\prime}}=\sigma^{\prime}$ and $\sup_{v\in H(\h),\|v\|\leq 1}\|(h_{P}-h_{P^{\prime}})v\|_{2}<\epsilon$, hence $\pi_{\sigma^{\prime}}$ is $\mathcal{P}$-complete by proposition \ref{propinj}.

For every $\eta>0$, we can slightly deform $P$ to a POVM $\tilde{P}:=\{\tilde{P}_{1},...,\tilde{P}_{m}\}$ with full rank effect operators such that $\|P_{i}-\tilde{P}_{i}\|<\eta$ and $\sigma_{P}=\sigma_{\tilde{P}}$: Let $\tilde{P}_i:=\frac{\sqrt{n}}{\sqrt{n}+\eta}(P_i+\frac{\eta}{\sqrt{n}} \id_{\mathcal{H}})$ for $i=1,...,m$. Then, for $i=1,...,m$, $\|P_i-\tilde{P}_i\|=\eta\frac{\sqrt{n}}{\sqrt{n}+\eta}<\eta$. Note that we also ensured that the smallest eigenvalue of $\tilde{P}_{i}$ is bigger than $\eta/2$ for $i=1,...,m$ and $\eta$ small enough.

For some $\sigma^{\prime}\in \Sigma(m)$ with
\begin{align*}
\|\pi_{P}-\pi_{\sigma^{\prime}}\|_{op}<\delta
\end{align*}
let $P^{\prime}_{i}:=\pi_{\sigma^{\prime}}(\tilde{P}_{i})$, $i=1,...,m$. Then, for $i=1,...,m$,
\begin{align*}
\|\tilde{P}_{i}-P^{\prime}_{i}\|=\|\pi_{P}(\tilde{P}_{i})-\pi_{\sigma^{\prime}}(\tilde{P}_{i})\|<\sqrt{n}\delta
\end{align*}
and thus the $P^{\prime}_{i}$ are positive for $\sqrt{n}\delta<\eta/2$. Furthermore, $\sum_{i=1}^{m}P^{\prime}_{i}=\pi_{\sigma^{\prime}}(\sum_{i=1}^{m}\tilde{P}_{i})=\pi_{\sigma^{\prime}}(\id_{\mathcal{H}})=\id_{\mathcal{H}}$. For small enough $\delta$, the $P^{\prime}_{i}$ are linearly independent because the $P_{i}$ are linearly independent by assumption. Thus $\sigma^{\prime}=\sigma_{P^{\prime}}$ by dimensional reasoning. Finally, 
\begin{align*}
&\|h_{P}-h_{P^{\prime}}\|_{op}^{2}\\
\leq& \sum_{i=1}^{m}\|P_{i}-P_{i}^{\prime}\|^{2}=\sum_{i=1}^{n}\|P_{i}-\tilde{P}_{i}+\tilde{P}_{i}-P_{i}^{\prime}\|^{2}\\
\leq&\sum_{i=1}^{m}(\|P_{i}-\tilde{P}_{i}\|+\|\tilde{P}_{i}-P_{i}^{\prime}\|)^{2}\\
<&m(\eta^{2}+n\delta^2).
\end{align*}
By choosing $\eta$ and $\delta$ so small that $m(\eta^{2}+n\delta^2)<\epsilon^2$, $h_{P^{\prime}}$ is injective by assumption and thus $\sigma^{\prime}$ is injective by \ref{propinj}. Thus, $\sigma_{P}$ is stably $\mathcal{P}$-complete and \ref{thmstability} concludes the proof of this direction.

Conversely, suppose $\sigma$ is a $\mathcal{P}$-embedding. Corollary \ref{linap} states, that there is an $\epsilon>0$ such that every POVM $Q$ with $\sup_{B\in H(\mathcal{H}),\ \|B\|\leq 1}\|h_{P}(B)-h_{Q}(B)\|<\epsilon$
is a $\mathcal{P}$-embedding and thus in particular $\mathcal{P}$-complete.
\end{proof}

However, the notion of stability for measurements developed so far may not be satisfactory yet since it just considers inaccuracy in the implementation of the measurement set-up. Noisiness of the outcome, resulting from e.g. dissipation or finite statistics, or noisiness of the input, originating from e.g. inaccurate prior information, are inevitable but not considered in the definition.

In the remainder of this section we show that also from this point of view, stably $\mathcal{P}$-complete is a operationally meaningful property.

The idea of the following lemma, which is the essential ingredient for the second theorem of this section, is to construct a neighbourhood for every point of submanifold $\mathcal{P}\subseteq\mathcal{S}(\h)$ that can be approximated by the tangent space at that point. Let $\pi^{T}:\mathcal{P}\to \mathcal{B}(H(\mathcal{H}))$ be the mapping that associates to each point $\rho\in\mathcal{P}$ its orthogonal projection $\pi^{T}_{\rho}$ to $T_{\rho}\mathcal{P}\subseteq H(\h)$ and let $\pi^{N}$ be the analogue mapping for the normal space, i.e. $\pi^{N}_{\rho}+\pi^{T}_{\rho}=\text{id}_{H(\h)}$ for all $\rho\in\mathcal{P}$. Furthermore, $B_{x}(\epsilon)$ denotes the open ball with center $x\in H(\mathcal{H})$ and radius $\epsilon>0$, i.e. $B_{x}(\epsilon):=\{y\in H(\mathcal{H}):\|x-y\|<\epsilon\}$ and $d$ denotes the metric induced by $\|\cdot\|$.
\begin{lemma}\label{lemtangentapp}
Let $\mathcal{P}\subseteq\mathcal{S}(\mathcal{H})$ be a submanifold. For every $\eta>0$ there is an $\epsilon>0$ such that for all $\rho\in\mathcal{P}$,
\begin{align*}
\rho^{\prime}\in B_{\epsilon}(\rho)\cap\mathcal{P}\Rightarrow\|\pi^{N}_{\rho}(\rho^{\prime}-\rho)\|<\eta\|\rho-\rho^{\prime}\|.
\end{align*}
\end{lemma}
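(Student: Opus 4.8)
The statement is a quantitative, uniform version of the elementary fact that a submanifold is well-approximated by its tangent space near each point; the content is that the error estimate can be made uniform over all of $\mathcal{P}$, which is where compactness enters. The plan is to argue by contradiction: if the conclusion fails for some $\eta>0$, then for every $\epsilon=1/k$ there are points $\rho_k\in\mathcal{P}$ and $\rho_k'\in B_{1/k}(\rho_k)\cap\mathcal{P}$ with $\rho_k'\neq\rho_k$ and $\|\pi^N_{\rho_k}(\rho_k'-\rho_k)\|\geq\eta\|\rho_k-\rho_k'\|$. Since $\mathcal{P}$ is closed in the compact set $\mathcal{S}(\mathcal{H})$, hence compact, we may pass to a subsequence so that $\rho_k\to\rho_*\in\mathcal{P}$; then $\rho_k'\to\rho_*$ as well since $\|\rho_k-\rho_k'\|<1/k$.

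The second step is to rescale. Write $v_k:=(\rho_k'-\rho_k)/\|\rho_k'-\rho_k\|\in H(\mathcal{H})$, a unit vector; passing to a further subsequence, $v_k\to v_*$ with $\|v_*\|=1$. First I would show $v_*\in T_{\rho_*}\mathcal{P}$: working in a single coordinate chart around $\rho_*$ (valid for $k$ large), $\mathcal{P}$ is locally the graph of a smooth map whose derivative at $\rho_*$ has image $T_{\rho_*}\mathcal{P}$, so the secant directions $v_k$ between nearby points of $\mathcal{P}$ converge into $T_{\rho_*}\mathcal{P}$ — this is the standard "secants converge to tangents" argument, made rigorous by Taylor-expanding the graph map. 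Next I would use that $\rho\mapsto\pi^N_\rho$ is continuous on $\mathcal{P}$ (the normal projection depends smoothly, in particular continuously, on the basepoint — this follows from $\mathcal{P}$ being a smooth embedded submanifold, so $\rho\mapsto T_\rho\mathcal{P}$ is a continuous map into the Grassmannian and $\pi^N_\rho = \mathrm{id} - \pi^T_\rho$ inherits continuity). Hence $\pi^N_{\rho_k}(v_k)\to\pi^N_{\rho_*}(v_*)$. But $v_*\in T_{\rho_*}\mathcal{P}$ means $\pi^N_{\rho_*}(v_*)=0$, while the assumed inequality gives $\|\pi^N_{\rho_k}(v_k)\|\geq\eta$ for all $k$, so in the limit $0=\|\pi^N_{\rho_*}(v_*)\|\geq\eta>0$, a contradiction.

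The main obstacle — and the only step that needs genuine care rather than bookkeeping — is establishing $v_*\in T_{\rho_*}\mathcal{P}$, i.e. that secant directions between points of $\mathcal{P}$ that both converge to $\rho_*$ have all their limit points in $T_{\rho_*}\mathcal{P}$. One has to be slightly careful because $\rho_k$ and $\rho_k'$ are \emph{both} moving; the clean way is to fix the chart at $\rho_*$, write $\mathcal{P}$ locally as $\{(x,f(x))\}$ in suitable orthogonal coordinates with $f(0)=0$, $Df(0)=0$, and compute the secant between $(x_k,f(x_k))$ and $(x_k',f(x_k'))$: its direction is proportional to $(x_k'-x_k,\,f(x_k')-f(x_k))$, and by the mean value inequality $\|f(x_k')-f(x_k)\|\leq \sup_{|\xi|\leq\delta}\|Df(\xi)\|\cdot\|x_k'-x_k\|$, which tends to $0$ relative to $\|x_k'-x_k\|$ as $\delta\to0$; hence the normalized secant converges to a vector of the form $(w,0)$, which lies in $T_{\rho_*}\mathcal{P}$. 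Everything else (compactness, extracting subsequences, continuity of $\pi^N$, the final contradiction) is routine.
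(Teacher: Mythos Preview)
Your argument is correct and complete; the secant-to-tangent step is handled carefully, and the continuity of $\rho\mapsto\pi^N_\rho$ together with compactness of $\mathcal{P}$ delivers the contradiction cleanly.

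The paper takes a genuinely different route. Rather than arguing by contradiction, it gives a direct, constructive estimate: for each $\rho_0\in\mathcal{P}$ it finds a small ball in which the tangent projections $\pi^T_\rho$ vary by at most $\eta/4$, then connects $\rho$ to $\rho'$ by a \emph{geodesic} $\gamma$ of $\mathcal{P}$ (viewed as a submanifold of $H(\mathcal{H})$), uses the geodesic equation $\pi^T_{\gamma(t)}(\ddot\gamma(t))=0$ and a uniform bound on $\|\ddot\gamma\|$ (from compactness) to integrate twice and control $\|\pi^N_\rho(\rho'-\rho)\|$ explicitly in terms of the geodesic length; a finite subcover then yields the uniform $\epsilon$. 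Your approach is more elementary---it avoids Riemannian machinery entirely, needing only the local graph representation and the mean-value inequality---and is shorter. The paper's approach, on the other hand, extracts more geometric information as a byproduct: the geodesic analysis shows that $B_{\tilde\epsilon}(\rho)\cap\mathcal{P}$ is connected for all $0<\tilde\epsilon<\epsilon$, which the contradiction argument does not yield (though this extra fact is not needed for the applications in the paper).
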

 Since the proof of this lemma in rather technical, it is relegated to appendix \ref{appendixlemtangentapp}.

The following theorem is the second main result of this section. It is formulated in terms of operator systems but again the result transfers to POVMs. Since the interpretation of the theorem may not be obvious let us first give some intuition and motivation: Adding small perturbations to states of a submanifold $\mathcal{P}\subseteq\mathcal{S}(\h)$ can be thought of as blowing up $\mathcal{P}$ to a small tubular neighbourhood $\mathcal{P}_\epsilon=\{\rho\in\mathcal{S}(\h):d(\rho,\mathcal{P})<\epsilon\}$. The dimension of $\mathcal{P}_\epsilon$ is then equal to the dimension of $H(\h)$. Thus, one cannot expect $\mathcal{P}$-complete POVMs to stay injective when allowing for small errors. However, one can hope for being able to separate points in $\mathcal{P}_\epsilon$ that are sufficiently far away, in the sense that $\pi_P(\rho)\neq\pi_P(\rho^{\prime})$ for $\|\rho-\rho^{\prime}\|>C\epsilon$ with $C>0$ a constant. For a given small enough $\epsilon$ such a $C$ obviously exists however it is not immediate that $C$ can be chosen independent of $\epsilon$. The following theorem asserts that for $\epsilon$ smaller than a certain fixed value, $C$ is independent of $\epsilon$. The existence of a  $C$ independent of $\epsilon$ means that the measurement can be made arbitrarily precise by reducing the errors.

\begin{theorem}\label{propstab}
Let $\mathcal{P}\subseteq \mathcal{S}(\mathcal{H})$ be a submanifold and let $P$ be a POVM with associated orthogonal projection $\pi_{P}$. $P$ is stably $\mathcal{P}$-complete if and only if there exists $\epsilon_{0}>0$ and $C>2$ such that for all $\epsilon$ with $0<\epsilon<\epsilon_{0}$,
\begin{align*}
\rho,\rho^{\prime}\in\mathcal{P}\text{  with }\|\rho-\rho^{\prime}\|>\epsilon C\ \Rightarrow\ \pi_{P}(B_{\epsilon}(\rho))\cap\pi_{P}(B_{\epsilon}(\rho^{\prime}))=\emptyset\ \ \ .
\end{align*}
\end{theorem}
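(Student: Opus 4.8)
The plan is to prove both directions by relating the separation condition to the two equivalent characterizations already at our disposal: stable $\mathcal{P}$-completeness is equivalent to being a $\mathcal{P}$-embedding (Theorem \ref{thmstability}), which by Proposition \ref{propinj} amounts to $\pi_P|_{\mathcal{P}}$ being injective and $\pi_P|_{T_\rho\mathcal{P}}$ being injective for every $\rho$. So I will show the $\epsilon_0$-$C$ separation property is likewise equivalent to ``injective plus injective differential,'' and the theorem follows.

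\textbf{The ``if'' direction.} Suppose the separation property holds for some $\epsilon_0>0$ and $C>2$. Injectivity of $\pi_P|_{\mathcal{P}}$ is immediate: if $\rho\neq\rho'$ then taking any $\epsilon$ with $\|\rho-\rho'\|/C<\epsilon<\min(\|\rho-\rho'\|,\epsilon_0)$ (possible since $C>2$, wait—more carefully, pick $\epsilon<\epsilon_0$ small enough that $\|\rho-\rho'\|>\epsilon C$), the sets $\pi_P(B_\epsilon(\rho))$ and $\pi_P(B_\epsilon(\rho'))$ are disjoint, so in particular $\pi_P(\rho)\neq\pi_P(\rho')$. For the differential: fix $\rho\in\mathcal{P}$ and suppose $0\neq v\in T_\rho\mathcal{P}$ with $\pi_P(v)=0$. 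Using the submanifold structure there is a smooth curve $\rho_t\in\mathcal{P}$ with $\rho_0=\rho$, $\dot\rho_0=v$, so $\rho_t=\rho+tv+o(t)$. Then $\pi_P(\rho_t-\rho)=t\,\pi_P(v)+o(t)=o(t)$, i.e. $\|\pi_P(\rho_t)-\pi_P(\rho)\|=o(t)$ while $\|\rho_t-\rho\|\sim|t|\,\|v\|$. Now for small $t$ choose $\epsilon_t$ slightly larger than $\|\pi_P(\rho_t)-\pi_P(\rho)\|$ but still $o(t)$ — precisely, one wants $\epsilon_t$ such that $\pi_P(\rho)\in\pi_P(B_{\epsilon_t}(\rho_t))$ and $\pi_P(\rho)\in\pi_P(B_{\epsilon_t}(\rho))$ simultaneously, so the two $\pi_P$-images intersect, forcing $\|\rho-\rho_t\|\leq\epsilon_t C$; but $\|\rho-\rho_t\|\sim|t|\,\|v\|$ is of strictly larger order than $\epsilon_t=o(t)$, a contradiction for $t$ small. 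Hence $\pi_P|_{T_\rho\mathcal{P}}$ is injective, and by Proposition \ref{propinj} and Theorem \ref{thmstability}, $P$ is stably $\mathcal{P}$-complete.

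\textbf{The ``only if'' direction.} Assume $P$ is stably $\mathcal{P}$-complete, hence a $\mathcal{P}$-embedding. This is where I expect the real work, and where Lemma \ref{lemtangentapp} is the essential tool. Since $\pi_P|_{T_\rho\mathcal{P}}$ is injective on each tangent space and $\mathcal{P}$ is compact, there is a uniform constant $c>0$ with $\|\pi_P(v)\|\geq c\|v\|$ for all $\rho\in\mathcal{P}$ and $v\in T_\rho\mathcal{P}$ (continuity of the smallest singular value of $\pi_P|_{T_\rho\mathcal{P}}$ over the compact $\mathcal{P}$). Apply Lemma \ref{lemtangentapp} with $\eta$ chosen so that $\eta\|\pi_P\|_{op}<c/2$, say: there is $\delta>0$ such that $\rho'\in B_\delta(\rho)\cap\mathcal{P}$ implies $\|\pi^N_\rho(\rho'-\rho)\|<\eta\|\rho-\rho'\|$, whence $\|\pi_P(\rho'-\rho)\|\geq\|\pi_P(\pi^T_\rho(\rho'-\rho))\|-\|\pi_P(\pi^N_\rho(\rho'-\rho))\|\geq c\|\pi^T_\rho(\rho'-\rho)\|-\|\pi_P\|_{op}\eta\|\rho-\rho'\|\geq (c/2-\text{lower order})\|\rho-\rho'\|$; more precisely $\|\pi^T_\rho(\rho'-\rho)\|\geq(1-\eta)\|\rho-\rho'\|$ since $\|\pi^N_\rho(\rho'-\rho)\|<\eta\|\rho-\rho'\|$, giving a clean bound $\|\pi_P(\rho-\rho')\|\geq c'\|\rho-\rho'\|$ for nearby $\rho,\rho'\in\mathcal{P}$, with $c'=c(1-\eta)-\|\pi_P\|_{op}\eta>0$. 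For pairs that are not $\delta$-close, compactness together with injectivity of $\pi_P|_{\mathcal{P}}$ (so $\pi_P|_{\mathcal{P}}$ is a homeomorphism onto its image) gives a uniform lower bound: $\inf\{\|\pi_P(\rho-\rho')\| : \rho,\rho'\in\mathcal{P},\ \|\rho-\rho'\|\geq\delta\}=:m>0$. Combining, there is $\alpha>0$ with $\|\pi_P(\rho-\rho')\|\geq\alpha\|\rho-\rho'\|$ whenever $\|\rho-\rho'\|\leq\text{diam}(\mathcal{P})$, i.e. always. Now set $C:=\max(3,\ 3/\alpha)$ and let $\epsilon_0>0$ be arbitrary. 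If $\|\rho-\rho'\|>\epsilon C$ and $y\in\pi_P(B_\epsilon(\rho))\cap\pi_P(B_\epsilon(\rho'))$, write $y=\pi_P(x)=\pi_P(x')$ with $\|x-\rho\|<\epsilon$, $\|x'-\rho'\|<\epsilon$ (here $x,x'\in H(\mathcal{H})$, not necessarily in $\mathcal{P}$, so I use that $\pi_P$ is a global linear map and estimate $\|\pi_P(\rho-\rho')\|=\|\pi_P(\rho-x)+\pi_P(x'-\rho')\|\leq\|\pi_P\|_{op}(\|\rho-x\|+\|x'-\rho'\|)<2\|\pi_P\|_{op}\epsilon$), while $\|\pi_P(\rho-\rho')\|\geq\alpha\|\rho-\rho'\|>\alpha\epsilon C$; choosing $C$ additionally $\geq 2\|\pi_P\|_{op}/\alpha+1$ makes these incompatible. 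Since we may take $C>2$ (enlarging it only helps), this establishes the separation property and completes the proof. The main obstacle is packaging the local estimate from Lemma \ref{lemtangentapp} with the global homeomorphism property into a single uniform linear lower bound $\|\pi_P(\rho-\rho')\|\geq\alpha\|\rho-\rho'\|$ over all of $\mathcal{P}\times\mathcal{P}$; once that is in hand the $\epsilon$-$C$ bookkeeping is routine.
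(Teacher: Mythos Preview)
Your proof is correct and follows essentially the same approach as the paper's: both directions hinge on Theorem~\ref{thmstability} (stably complete $\Leftrightarrow$ embedding), the ``if'' direction deduces injectivity and immersion directly from the separation hypothesis via a curve argument, and the ``only if'' direction combines Lemma~\ref{lemtangentapp} (local tangent-space estimate) with a compactness argument for far-apart pairs. The only organizational difference is that you package the two regimes into a single global bi-Lipschitz bound $\|\pi_P(\rho-\rho')\|\ge\alpha\|\rho-\rho'\|$ before concluding, whereas the paper keeps the near/far case split to the end and uses $l:=\max_\rho\|\pi_P\circ\pi^T_\rho-\pi^T_\rho\|_{op}<1$ in place of your smallest-singular-value constant $c$; these are equivalent reformulations.
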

\begin{proof}
Let $P$ be stably $\mathcal{P}$-complete and thus a $\mathcal{P}$-embedding by theorem \ref{thmstability} and let $\rho\in\mathcal{P}$. Let $l:=\max_{\rho\in\mathcal{P}}\|\pi_{P}\circ\pi_{\rho}^{T}-\pi_{\rho}^{T}\|_{op}$ and let $\eta$ as well as $\tilde{\epsilon}$ be as in lemma \ref{lemtangentapp}. Shrink $\epsilon_{0}$ such that $C\epsilon_{0}<\tilde{\epsilon}$.

Note that $l<1$ because $\pi_{P}$ is an immersion. Without the immersion property we could not assume $l<1$ and in fact this is the essential idea of this proof \footnote{The bigger $l$ is, the steeper the tangent spaces can be with respect to the operator system $\sigma_P$. By the previous lemma we saw that small neighbourhoods around a point $\rho\in\mathcal{P}$ can be  approximated by the tangent space at that point. We can ensure that these approximations are so good that the fluctuations around the steepest tangent space have no component orthogonal to $\sigma_P$ and in this sense we can locally think of $\mathcal{P}$ as a plane.}. 

Noting that $B=\{(\rho,\rho^{\prime})\in\mathcal{P}^{2}:\|\rho-\rho^{\prime}\|\geq\epsilon_{0}C\}$ is compact, $\kappa:=\min_{(\rho,\rho^{\prime})\in B}\|\pi_{P}(\rho-\rho^{\prime}\|$ is attained and thus $\kappa>0$ by the injectivity of $\pi_{P}$. If necessary, shrink $\epsilon_{0}$ such that  $\epsilon_{0}<\kappa/2$. Then, for $\rho^{\prime}\in\mathcal{P}-(B_{C\epsilon_{0}}(\rho)\cap\mathcal{P})$ the claim holds because $\|\pi_{P}(\rho-\rho^{\prime})\|\geq\kappa> 2\epsilon_{0}$ and $\pi_{P}(B_{\epsilon}(\rho))=B_{\epsilon}(\pi_{P}(\rho))$.

Finally, let $\rho^{\prime}\in B_{C\epsilon_{0}}(\rho)\cap\mathcal{P}$ and $\|\rho-\rho^{\prime}\|>C\epsilon$, $0<\epsilon<\epsilon_{0}$. Then,
\begin{align*}
\|\pi_{P}(\rho-\rho^{\prime})\|&\geq\|\pi_{P}(\pi^{T}_{\rho}(\rho-\rho^{\prime}))\|-\|\pi_{P}(\pi^{N}_{\rho}(\rho-\rho^{\prime}))\|\\
&\geq\|\pi_{P}(\pi^{T}_{\rho}(\rho-\rho^{\prime}))-\pi^{T}_{\rho}(\rho-\rho^{\prime})+\pi^{T}_{\rho}(\rho-\rho^{\prime})\|-\eta\|\rho-\rho^{\prime}\|\\
&\geq \|\pi^{T}_{\rho}(\rho-\rho^{\prime})\|-\|\pi_{P}(\pi^{T}_{\rho}(\rho-\rho^{\prime}))-\pi^{T}_{\rho}(\rho-\rho^{\prime})\|-\eta\|\rho-\rho^{\prime}\|\\
&>\|\rho-\rho^{\prime}\|\left(\sqrt{1-\eta^{2}}-l-\eta\right)>\epsilon C\left(\sqrt{1-\eta^{2}}-l-\eta\right),
\end{align*}
where we used the fact that, for $0\leq l<1$, we can choose $\eta$ small enough such that $\left(\sqrt{1-\eta^{2}}-l-\eta\right)>0$. Furthermore, we can choose $C>0$ such that $C\left(\sqrt{1-\eta^{2}}-l-\eta\right)>2$. 
Since $\pi_{P}(B_{\epsilon}(\rho))=B_{\epsilon}(\pi_{P}(\rho))$, this proves the statement.\\
\\
For the converse, let $\rho,\rho^{\prime}\in\mathcal{P}$ and $\rho\neq\rho^{\prime}$. Choosing $\epsilon=\min\{\epsilon_{0},\|\rho-\rho^{\prime}\|/(2C)\}$, we find $\pi_{P}(B_{\epsilon}(\rho))\cap\pi_{P}(B_{\epsilon}(\rho^{\prime}))=\emptyset$ and thus $\pi_{P}(\rho)\neq\pi_{P}(\rho^{\prime})$. 

Finally, assume $\pi_{P}|_{\mathcal{P}}$ is not an immersion at some $\rho\in\mathcal{P}$. Let $\gamma:(-1,1)\to\mathcal{P}\subseteq H(\mathcal{H})$ be a smooth curve with $\gamma(0)=\rho$ and $\frac{d}{dt}\gamma(0)=v\in \ker\,\pi_{P}$. Let $C>0$ as in the theorem, then
\begin{align*}
2/C\leq\lim_{t\to 0}\frac{\|\pi_{P}(\gamma(t)-\rho)\|}{\|\gamma(t)-\rho\|}\leq\lim_{t\to 0}\frac{\|\pi_{P}(\gamma(t)-\rho)\|}{t/2}=2\|\pi_{P}(v)\|=0,
\end{align*}
a contradiction. Here we assumed $\|\gamma(t)-\rho\|>t/2$ which is clearly true for $t$ small enough by lemma \ref{lemtangentapp}.
\end{proof}
\begin{remark}
Note that it is essentially the constant $l$ that determines $C$. For small $l$, i.e. in the case where the tangent spaces are not steep with respect to $\sigma_P$, we can ensure that $C$ is close to $2$ (if we make $\epsilon_0$ small enough). On the other hand if $l$ is close to one $C$ has to be big and in this sense $l$ is a measure for the stability of the POVM $P$.

$\epsilon_0$ is mainly determined by the constant $\kappa$, which is more of ``global'' nature. Loosely speaking it is a measure for how bad $\mathcal{P}$ wiggles around in $H(\h)$.
\end{remark}
It is worth noting, that if $\pi_{P}$ fails to be an immersion, $C\rightarrow\infty$ for $\epsilon\rightarrow 0$ and from this point of view, stably $\mathcal{P}$-complete measurements are the ones that can be made arbitrarily precise.

This theorem transfers to the corresponding theorem for POVMs as the following corollary shows. 

\begin{corollary}\label{corstab}
Let $\mathcal{P}\subseteq \mathcal{S}(\mathcal{H})$ be a submanifold and let $P$ be a POVM with associated linear map $h_{P}$. $P$ is a smooth embedding if and only if there exists $\epsilon_{0}>0$ and $C>2$ such that for all $\epsilon$ with $0<\epsilon<\epsilon_{0}$,
\begin{align*}
\rho,\rho^{\prime}\in\mathcal{P}\text{  with }\|\rho-\rho^{\prime}\|>\epsilon C\ \Rightarrow\ h_{P}(B_{\epsilon}(\rho))\cap h_{P}(B_{\epsilon}(\rho^{\prime}))=\emptyset\ \ \ .
\end{align*}
\end{corollary}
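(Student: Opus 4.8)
The plan is to reduce Corollary~\ref{corstab} to Theorem~\ref{propstab} by noting that $h_P$ and the orthogonal projection $\pi_P$ differ only by a fixed linear isomorphism. Recall from the proof of Proposition~\ref{propinj} that $\ker h_P=(\sigma_P^{\R})^{\bot}$ and $h_P=h_P\circ\pi_P$. Since the effect operators of $P$ are assumed linearly independent, $\operatorname{rank}h_P=\dim\sigma_P^{\R}=m$, so the restriction $L:=h_P|_{\sigma_P^{\R}}\colon\sigma_P^{\R}\to\R^m$ is a linear bijection; viewing $\sigma_P^{\R}$ as a linear (hence closed embedded) submanifold of $H(\h)$, the map $L$ is a diffeomorphism onto $\R^m$, and $h_P=L\circ\pi_P$.

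From this factorization I would extract two facts. First, since $L$ is a diffeomorphism, $h_P|_{\mathcal{P}}=L\circ(\pi_P|_{\mathcal{P}})$ is a smooth embedding if and only if $\pi_P|_{\mathcal{P}}$ is one; equivalently, $P$ is a smooth embedding exactly when the operator system $\sigma_P$ is a $\mathcal{P}$-embedding. Second, for every $\rho\in\mathcal{P}$ and $\epsilon>0$ one has $h_P(B_\epsilon(\rho))=L\bigl(\pi_P(B_\epsilon(\rho))\bigr)$, and $\pi_P(B_\epsilon(\rho))$ is the open ball of radius $\epsilon$ about $\pi_P(\rho)$ inside $\sigma_P^{\R}$ (the $\pi_P$-preimage of $w\in\sigma_P^{\R}$ is the affine subspace $w+(\sigma_P^{\R})^{\bot}$, whose distance to $\rho$ equals $\|w-\pi_P(\rho)\|$; this is the identity already invoked in the proof of Theorem~\ref{propstab}). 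As $L$ is a bijection, $L(A)\cap L(B)=L(A\cap B)$ and $L(S)=\emptyset$ iff $S=\emptyset$, whence for all $\rho,\rho'\in\mathcal{P}$ and $\epsilon>0$,
\[
h_P(B_\epsilon(\rho))\cap h_P(B_\epsilon(\rho'))=\emptyset\ \Longleftrightarrow\ \pi_P(B_\epsilon(\rho))\cap\pi_P(B_\epsilon(\rho'))=\emptyset .
\]

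The corollary then follows by chaining equivalences. By Theorem~\ref{thmstability}, $\sigma_P$ is stably $\mathcal{P}$-complete iff it is a $\mathcal{P}$-embedding, which by the first fact is iff $P$ is a smooth embedding. By Theorem~\ref{propstab}, $\sigma_P$ being stably $\mathcal{P}$-complete is equivalent to the existence of $\epsilon_0>0$ and $C>2$ such that $\pi_P(B_\epsilon(\rho))\cap\pi_P(B_\epsilon(\rho'))=\emptyset$ whenever $\|\rho-\rho'\|>\epsilon C$ and $0<\epsilon<\epsilon_0$; by the displayed equivalence this is precisely the stated condition on $h_P$, with the same $\epsilon_0$ and $C$. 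Combining the two chains proves the claim.

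I do not expect a genuine obstacle. The two points that need a little care are: verifying that $L$ maps onto all of $\R^m$ --- which is exactly where linear independence of the effect operators enters, via $\operatorname{rank}h_P=m$ --- and reading ``$P$ is a smooth embedding'' in the statement as ``$h_P|_{\mathcal{P}}$ is a smooth embedding'' (i.e.\ $P$ is a $\mathcal{P}$-embedding), so that the first fact above applies. One should also keep in mind that, since all norms on the finite-dimensional space $\R^m\cong\sigma_P^{\R}$ are equivalent, none of the separation statements depends on the particular norm placed on the codomain of $h_P$.
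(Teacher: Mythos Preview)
Your proof is correct and follows the same overall reduction as the paper, namely passing from $h_P$ to $\pi_P$ via the factorization $h_P=L\circ\pi_P$ with $L=h_P|_{\sigma_P^{\R}}$ a linear isomorphism, and then invoking Theorem~\ref{propstab} together with Theorem~\ref{thmstability}.

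The one notable difference is how the ball condition is transferred. The paper introduces the smallest singular value $\lambda=\min_{v\in\operatorname{supp}\pi_P,\,\|v\|=1}\|h_P(v)\|$ and argues via the inequality $\|h_P(\rho-\rho')\|\ge\lambda\|\pi_P(\rho-\rho')\|$, concluding that ``the proposition holds for $h_P$ by replacing $C$ with $C/\lambda$.'' Your argument is sharper: since $L$ is a bijection, $h_P(B_\epsilon(\rho))\cap h_P(B_\epsilon(\rho'))=\emptyset$ is \emph{literally equivalent} to $\pi_P(B_\epsilon(\rho))\cap\pi_P(B_\epsilon(\rho'))=\emptyset$, so the \emph{same} $\epsilon_0$ and $C$ work, and no rescaling is needed. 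This also makes the converse direction transparent without having to rerun the immersion argument of Theorem~\ref{propstab} for $h_P$ separately. In short, your version is a cleaner execution of the same idea.
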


\begin{proof}
Let $\pi_P$ be the orthogonal projection associated to $P$. Let $\lambda:=\min_{v\in \text{supp}\,\pi_{P}, \|v\|=1}\|h_{P}(v)\|$ and observe that $\lambda>0$ since $h_{P}$ is injective on the support of $\pi_{P}$. Thus $\|h_{P}(\rho-\rho^{\prime})\|=\|h_{P}\circ\pi_{P}(\rho-\rho^{\prime})\|>\lambda\|\pi_{P}(\rho-\rho^{\prime})\|$. Then, the proposition holds for $h_{P}$ by replacing $C$ with $C/\lambda$.
\end{proof}

Finally, this result also incorporates robustness against noisiness of the outcome as the following corollary shows.
\begin{corollary}
Let $\mathcal{P}\subseteq\mathcal{S}(\h)$ be a submanifold and let $P$ be a stably $\mathcal{P}$-complete POVM with associated linear map $h_{P}$. There exists $\epsilon_{0}>0$ and $C>2$ such that for all $\epsilon$ with $0<\epsilon<\epsilon_{0}$,
\begin{align*}
\mathcal{P}\cap h_{P}^{-1}(B_{2\epsilon}(h_{P}(\rho))\subseteq B_{C\epsilon}(\rho)\text{ for all }\rho\in\mathcal{P}.
\end{align*}
\end{corollary}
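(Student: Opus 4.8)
The plan is to read this off Corollary~\ref{corstab} by contraposition together with a rescaling of the ball radius. Since $P$ is stably $\mathcal{P}$-complete it is a $\mathcal{P}$-embedding, so Corollary~\ref{corstab} applies and provides constants $\epsilon_0'>0$ and $C'>2$ such that, for every $\delta$ with $0<\delta<\epsilon_0'$,
\[
\rho,\rho'\in\mathcal{P},\ \|\rho-\rho'\|>\delta C'\ \Longrightarrow\ h_P(B_\delta(\rho))\cap h_P(B_\delta(\rho'))=\emptyset .
\]
I would also fix the constant $\lambda:=\min_{v\in\sigma_P^{\R},\,\|v\|=1}\|h_P(v)\|$, which is positive because $h_P$ is injective on $\sigma_P^{\R}=\ker(h_P)^{\perp}$ (exactly as in the proof of Corollary~\ref{corstab}), and record the elementary inequality $\|\pi_P(w)\|\le\lambda^{-1}\|h_P(w)\|$ for all $w\in H(\h)$, which follows from $h_P=h_P\circ\pi_P$.

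Next I would set $\epsilon_0:=\lambda\epsilon_0'$ and $C:=\max\{2,\,C'/\lambda\}+1$ (so that $C>2$ and $C>C'/\lambda$), fix $0<\epsilon<\epsilon_0$ and $\rho\in\mathcal{P}$, and prove the inclusion by contradiction: assume $\rho'\in\mathcal{P}$ satisfies $\|h_P(\rho')-h_P(\rho)\|<2\epsilon$ but $\|\rho'-\rho\|\ge C\epsilon$. Put $\delta:=\epsilon/\lambda$; then $0<\delta<\epsilon_0'$ and $\|\rho'-\rho\|\ge C\epsilon>(C'/\lambda)\epsilon=C'\delta$, so the implication above gives $h_P(B_\delta(\rho))\cap h_P(B_\delta(\rho'))=\emptyset$.

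The contradiction comes from exhibiting an explicit common point of the two image balls. Let $w:=\tfrac12\pi_P(\rho'-\rho)$. From $\|\pi_P(\rho'-\rho)\|\le\lambda^{-1}\|h_P(\rho'-\rho)\|<2\epsilon/\lambda$ we get $\|w\|<\epsilon/\lambda=\delta$, hence $\rho+w\in B_\delta(\rho)$ and $\rho'-w\in B_\delta(\rho')$. On the other hand, using $h_P=h_P\circ\pi_P$,
\[
h_P(\rho+w)=h_P(\rho)+\tfrac12 h_P(\rho'-\rho)=\tfrac12\big(h_P(\rho)+h_P(\rho')\big)=h_P(\rho')-\tfrac12 h_P(\rho'-\rho)=h_P(\rho'-w),
\]
so the midpoint $\tfrac12\big(h_P(\rho)+h_P(\rho')\big)$ lies in $h_P(B_\delta(\rho))\cap h_P(B_\delta(\rho'))$, contradicting disjointness. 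Therefore $\|\rho'-\rho\|<C\epsilon$, i.e.\ $\mathcal{P}\cap h_P^{-1}\big(B_{2\epsilon}(h_P(\rho))\big)\subseteq B_{C\epsilon}(\rho)$, which is the claimed inclusion.

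I do not expect a real obstacle here: the statement is essentially a repackaging of Corollary~\ref{corstab}. The only points needing a little care are the bookkeeping of the constants --- choosing the auxiliary radius $\delta=\epsilon/\lambda$ so that the gap hypothesis $\|\rho-\rho'\|>C'\delta$ of Corollary~\ref{corstab} is met while $\delta$ stays below $\epsilon_0'$ --- and the verification that the Euclidean midpoint is genuinely attained inside both $h_P(B_\delta(\cdot))$, which is where the lower bound $\lambda$ and the identity $h_P=h_P\circ\pi_P$ enter. One could equally well bypass Corollary~\ref{corstab} and run the same estimate directly from Theorem~\ref{propstab} (the analogous statement for $\pi_P$) composed with $\|h_P(w)\|\ge\lambda\|\pi_P(w)\|$; this yields the same constants.
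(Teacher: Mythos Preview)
Your proof is correct and follows the same route as the paper: both argue by contradiction from Corollary~\ref{corstab}. The paper's argument is the terse two-liner ``$\|\rho-\rho'\|>C\epsilon$ forces $\|h_P(\rho)-h_P(\rho')\|>2\epsilon$, contradicting $\rho'\in h_P^{-1}(B_{2\epsilon}(h_P(\rho)))$''; this implicitly relies on the inequality $\|h_P(\rho-\rho')\|\ge\lambda\|\pi_P(\rho-\rho')\|$ established inside the proof of Corollary~\ref{corstab}, rather than on its stated conclusion about disjoint image sets. Your version is more careful: you take the stated conclusion of Corollary~\ref{corstab} at face value, rescale via $\delta=\epsilon/\lambda$, and explicitly exhibit the midpoint as a common point of $h_P(B_\delta(\rho))$ and $h_P(B_\delta(\rho'))$. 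This extra bookkeeping is exactly what is needed to make the argument self-contained from the \emph{statement} (not the proof) of Corollary~\ref{corstab}, and your final remark about going through Theorem~\ref{propstab} directly recovers the paper's shortcut.
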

\begin{proof}
Let $C$, $\epsilon_{0}$ be as in corollary \ref{corstab}. Let $\rho\in\mathcal{P}$ and $\rho^{\prime}\in h_{P}^{-1}(B_{2\epsilon}(h_{P}(\rho))$ with $\|\rho-\rho^{\prime}\|>C\epsilon$. Then, $2\epsilon<\|h_{P}(\rho)-h_{P}(\rho^{\prime})\|<2\epsilon$, a contradiction.
\end{proof}

\subsection{Generalized Measurements and Smooth Embeddings}\label{sec12}
Linear measurements are clearly not sufficient to realize all smooth embeddings. More precisely, if there is a smooth embedding $\phi:\mathcal{P}\subseteq\mathcal{S}(\mathcal{H})\to\mathbb{R}^{m}$, then there need not be an $m$-dimensional POVM that is a $\mathcal{P}$-embedding. For example the set $N:=\{(x,y)\in\mathbb{R}^{2}:x^{2}+y^{2}=1,x\geq -0.5\}$ can clearly be embedded in $\mathbb{R}^{1}$, but an injective orthogonal projection has to have rank two. However, the embedding cannot get arbitrarily bad because from Whitney's embedding theorem we know that there is a $\mathcal{P}$-embedding in Euclidean space of twice the dimension of $\mathcal{P}$.

In this section we generalize our approach to measurements of the type
\begin{align*}
\text{tr}(\rho^{\otimes n}P_{i})
\end{align*}
and we show that these measurements can approximate any smooth embedding. This means that if there exists a smooth embedding $\psi:\mathcal{P}\subseteq\mathcal{S}(\h)\to\R^m$, then there is POVM of dimension $m$ that is a $\mathcal{P}$-embedding. Thus, the problem described in the beginning of this section can be circumvented by this generalized measurement scheme.

Let us fix some notation. 
\begin{definition}
A measurement $P:=\{P_{1},...,P_{m}\}$ on $k$ copies is a POVM  on $H(\mathcal{H})^{\otimes k}$. $P$ induces a linear map
\begin{align*}
h_{P}: H(\mathcal{H})^{\otimes k}&\to \mathbb{R}^{m} \\
\rho&\mapsto\big( \text{tr}(P_{1}\rho^{\otimes k}),...,\text{tr}(P_{m}\rho^{\otimes k}) \big).
\end{align*}
Let $i:H(\mathcal{H})\to H(\mathcal{H})^{\otimes k},\ \rho\mapsto\rho^{\otimes k}$. $P$ is called $\mathcal{R}$-complete for a subset $\mathcal{R}\subseteq \mathcal{S}(\mathcal{H})$ if $h_{P}\circ i|_{\mathcal{R}}=h_{P}|_{i(\mathcal{R})}$ is injective and it is called a $\mathcal{P}$-embedding for a submanifold $\mathcal{P}\subseteq \mathcal{S}(\mathcal{H})$ if $h_{P}|_{i(\mathcal{P})}$ is a smooth embedding.
\end{definition}

The following proposition makes the connection to the theory developed in the last section.
\begin{proposition}\label{propinc}
The mapping $i:H(\mathcal{H})\to H(\mathcal{H})^{\otimes k},\ \rho\mapsto\rho^{\otimes k}$ is smooth. Furthermore, for a smooth closed submanifold $\mathcal{P}\subseteq\mathcal{S}(\mathcal{H})$, $i|_{\mathcal{P}}$ is a smooth embedding.
\end{proposition}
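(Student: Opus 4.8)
The plan is to establish the two assertions of Proposition \ref{propinc} separately, with the smoothness of $i$ being essentially immediate and the embedding claim requiring a bit more care. First I would observe that $i:H(\mathcal{H})\to H(\mathcal{H})^{\otimes k}$, $\rho\mapsto\rho^{\otimes k}$, is the restriction to the real subspace $H(\mathcal{H})$ of the polynomial map $A\mapsto A^{\otimes k}$ on $\mathcal{B}(\mathcal{H})$; since each entry of $\rho^{\otimes k}$ is a degree-$k$ monomial in the entries of $\rho$, this map is smooth (indeed real-analytic). Its differential at $\rho$ is the symmetrized Leibniz expression $d i_\rho(v)=\sum_{j=1}^{k}\rho^{\otimes(j-1)}\otimes v\otimes\rho^{\otimes(k-j)}$, which I will need below.

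For the embedding claim, I would use the standard fact that a smooth injective immersion from a compact manifold is a smooth embedding (it is automatically a homeomorphism onto its image). Since $\mathcal{P}\subseteq\mathcal{S}(\mathcal{H})$ is closed without boundary, it is compact, so it suffices to check (i) $i|_{\mathcal{P}}$ is injective and (ii) $di_\rho$ is injective on $T_\rho\mathcal{P}$ for every $\rho\in\mathcal{P}$. For (i): if $\rho^{\otimes k}=\sigma^{\otimes k}$ for states $\rho,\sigma$, then taking partial traces over the last $k-1$ factors gives $\rho=\sigma$ directly (the partial trace of $\rho^{\otimes k}$ over $k-1$ copies is $\rho$ because $\mathrm{tr}(\rho)=1$). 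For (ii): suppose $v\in T_\rho\mathcal{P}\subseteq H(\mathcal{H})_0$ satisfies $\sum_{j=1}^{k}\rho^{\otimes(j-1)}\otimes v\otimes\rho^{\otimes(k-j)}=0$. Applying the partial trace over the last $k-1$ tensor factors to this identity, and using $\mathrm{tr}(\rho)=1$ and $\mathrm{tr}(v)=0$, every term with $j\geq 2$ traces out to a multiple of $\mathrm{tr}(v)=0$ in one of the traced slots, while the $j=1$ term yields $v$; hence $v=0$. This shows $di_\rho$ is injective, completing the argument.

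The main obstacle — really the only nonroutine point — is making the partial-trace bookkeeping in step (ii) precise: one must be careful that in the $j$-th term of $di_\rho(v)$, when tracing out factors $2,\dots,k$, exactly one traced slot may carry the operator $v$ (when $j\geq 2$), producing a factor $\mathrm{tr}(v)=0$, whereas the $j=1$ term has $v$ in the untraced slot and all traced slots carrying $\rho$, contributing $v\cdot(\mathrm{tr}\rho)^{k-1}=v$. Once this is set up correctly the conclusion is immediate. Alternatively, one could avoid case analysis by noting that $i$ factors through the symmetric subspace and invoking the injectivity of $i$ together with the fact that an injective real-analytic map between real-analytic manifolds of which one is compact, together with connectedness/dimension considerations, forces immersivity — but the direct partial-trace computation is cleaner and self-contained, so that is the route I would take.
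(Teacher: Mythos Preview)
Your proof is correct and follows the same overall architecture as the paper's: show $i$ is smooth (polynomial in coordinates), then verify injectivity and immersivity on $\mathcal{P}$, and conclude via compactness that $i|_{\mathcal{P}}$ is an embedding. The differences lie in how you handle the two nontrivial steps. For injectivity, you trace out $k-1$ factors and use $\tr\rho=1$; the paper instead argues that $\rho^{\otimes k}=\sigma^{\otimes k}$ forces $\sigma=a\rho$ with $a^k=1$, then uses hermiticity and positivity to pin down $a=1$. Your argument is shorter and avoids the (unproven in the paper) claim about tensor-power equality. For immersivity, you again apply the partial trace to $di_\rho(v)$ and use $\tr v=0$ (valid since $T_\rho\mathcal{P}\subseteq H(\mathcal{H})_0$); the paper instead writes $v=\alpha\rho+w$ with $\langle\rho,w\rangle=0$ and observes that the resulting summands $k\alpha\rho^{\otimes k}$ and $\rho^{\otimes(j-1)}\otimes w\otimes\rho^{\otimes(k-j)}$ are mutually orthogonal in the Hilbert--Schmidt inner product, so each must vanish. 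Your route is more direct for the stated proposition; the paper's orthogonality argument has the advantage of not using $\tr v=0$, so it actually shows $i$ is an immersion on all of $H(\mathcal{H})\setminus\{0\}$, which the paper records as a remark.

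One small caution: the ``alternative'' you sketch at the end---that injectivity of a real-analytic map from a compact manifold forces immersivity---is not a valid general principle (think of $t\mapsto(t^2,t^3)$ on a suitable compact domain). Since you do not rely on it, this does not affect your proof, but I would drop that sentence.
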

The proof of this proposition is relegated to appendix \ref{appendixpropinc}.

\begin{remark}\label{remgenemb}
Let $\Sigma (n,k)$ be the set of $n$-dimensional operator systems on $\mathcal{B}(\mathcal{H})^{\otimes k}$. Each $n$-dimensional measurement on $k$ copies $P$ generates an operator system $\sigma_{P}=\text{span}\{P_{i}\}_{P_{i}\in P}\in\Sigma (n,k)$. If $\mathcal{P}\subseteq\mathcal{S}(\mathcal{H})$ is a closed submanifold, $i(\mathcal{P})\subseteq\mathcal{S}(\mathcal{H}^{\otimes k})=\mathcal{S}(\mathcal{H})^{\otimes k}$ is a closed submanifold by the previous proposition. So the ideas and results of the last section can be naturally applied to measurements on $k$ copies. In particular for a submanifold $\mathcal{P}\subseteq\mathcal{S}(\mathcal{H})$ the notions of $\mathcal{P}$-embedding and $\mathcal{P}$-complete naturally apply to the equivalence classes $\sigma\in\Sigma (n,k)$ of measurements on $k$ copies ($P\sim P^{\prime}\Leftrightarrow\sigma_{P}=\sigma_{P^{\prime}}$).
\end{remark}

\begin{theorem}\label{thmgenstability}
Let $\mathcal{P}\subseteq\mathcal{S}(\mathcal{H})$ be a closed submanifold and let $\sigma\in\Sigma (n,k)$ be $\mathcal{P}$-complete. Then $\sigma$ is stably $\mathcal{P}$-complete if and only if it is a $\mathcal{P}$-embedding.
\end{theorem}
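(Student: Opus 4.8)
The plan is to reduce Theorem \ref{thmgenstability} to Theorem \ref{thmstability} by noting that measurements on $k$ copies are, after applying the fixed embedding $i$, literally the linear measurements studied in the previous section, just on the larger Hilbert space $\mathcal{H}^{\otimes k}$. Concretely, an element $\sigma\in\Sigma(n,k)$ is an $n$-dimensional operator system on $\mathcal{B}(\mathcal{H})^{\otimes k}$, and by Proposition \ref{propinc} the map $i|_{\mathcal{P}}\colon\mathcal{P}\to\mathcal{S}(\mathcal{H}^{\otimes k})$ is a smooth embedding onto a closed submanifold $i(\mathcal{P})\subseteq\mathcal{S}(\mathcal{H}^{\otimes k})$ (closedness follows from compactness of $\mathcal{P}$, as remarked in Remark \ref{remgenemb}). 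By definition, $\sigma$ is $\mathcal{P}$-complete (resp. a $\mathcal{P}$-embedding) precisely when $\pi_\sigma|_{i(\mathcal{P})}$ is injective (resp. a smooth embedding), i.e. when $\sigma$, viewed as an operator system on $\mathcal{H}^{\otimes k}$, is $i(\mathcal{P})$-complete (resp. an $i(\mathcal{P})$-embedding) in the sense of section \ref{sec11}.

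First I would make this dictionary precise: the metric on $\Sigma(n,k)$ is defined exactly as in section \ref{sec11}, using $d(\sigma,\sigma')=\|\pi_\sigma-\pi_{\sigma'}\|_{op}$ with the Hilbert--Schmidt norm on $H(\mathcal{H}^{\otimes k})$, so the notion of a neighbourhood in $\Sigma(n,k)$, and hence of stable $\mathcal{P}$-completeness, coincides with stable $i(\mathcal{P})$-completeness for operator systems on $\mathcal{H}^{\otimes k}$. Then I would simply invoke Theorem \ref{thmstability} applied to the closed submanifold $i(\mathcal{P})\subseteq\mathcal{S}(\mathcal{H}^{\otimes k})$: it states that an $i(\mathcal{P})$-complete $\sigma\in\Sigma(n)$ (over $\mathcal{H}^{\otimes k}$) is stably $i(\mathcal{P})$-complete if and only if it is an $i(\mathcal{P})$-embedding. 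Translating back through the dictionary yields exactly the claim: $\sigma$ is stably $\mathcal{P}$-complete iff it is a $\mathcal{P}$-embedding.

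The one point requiring a little care — and the main (modest) obstacle — is verifying that the hypothesis of Theorem \ref{thmstability}, namely that $i(\mathcal{P})$ is a \emph{closed} submanifold of $\mathcal{S}(\mathcal{H}^{\otimes k})$, genuinely holds, and that $\pi_\sigma|_{i(\mathcal{P})}$ being a smooth embedding is equivalent to $h_P|_{i(\mathcal{P})}$ being a smooth embedding (so that the operator-system statement really does capture the POVM notion of $\mathcal{P}$-embedding from the definition preceding this theorem). The first is handled by Proposition \ref{propinc} together with compactness of $\mathcal{P}$ (which forces the image to be compact, hence closed, and the embedding to be proper). The second is the analogue of Proposition \ref{propinj}, whose proof goes through verbatim on $\mathcal{H}^{\otimes k}$ since it only used that $\sigma_P^{\R}=\ker(h_P)^\perp$ and $h_P=h_P\circ\pi_P$, facts that hold equally for measurements on $k$ copies. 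With these two observations in place the theorem is an immediate corollary of Theorem \ref{thmstability}, so I would keep the write-up short, spelling out the identification and then citing Theorem \ref{thmstability}.
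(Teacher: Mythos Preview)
Your proposal is correct and follows exactly the paper's approach: use Proposition \ref{propinc} (and Remark \ref{remgenemb}) to see that $i(\mathcal{P})\subseteq\mathcal{S}(\mathcal{H}^{\otimes k})$ is a closed submanifold, then apply Theorem \ref{thmstability} to $i(\mathcal{P})$ and $\sigma$. The paper's own proof is just these two sentences, so your extra care about the dictionary (metric on $\Sigma(n,k)$, Proposition \ref{propinj} carrying over) is sound but more detailed than what the authors wrote.
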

\begin{proof}
By the previous remark $i(\mathcal{P})\subseteq\mathcal{S}(\mathcal{H})^{\otimes k}$ is a closed submanifold. Then the claim follows by applying \ref{thmstability} to $i(\mathcal{P})$ and $\sigma$.
\end{proof}

Choosing an othonormal basis $\{\sigma_{i}\}_{i\in \{1,...,d^{2}\}}$ of $H(\mathcal{H})$ with $\sigma_{1}=\id_\h$ gives an identification $H(\mathcal{H})\simeq\mathbb{R}^{d^2}$. Under this identification we can think of elements in $H(\h)^{\otimes k}$ as elements in $P^{k}(\mathbb{R}^{d^2})$, the vector space of polynomial functions of degree $k$ on $H(\h)\simeq\mathbb{R}^{d^2}$ \footnote{Note that by viewing $H(\mathcal{H})$ as a smooth manifold this corresponds to choosing a particular coordinate system $(x_{1},...,x_{d^{2}})$.}.

More precisely, let $\text{Sym}(H(\mathcal{H}),k)\subseteq H(\h)^{\otimes k}$ be the vector space of symmetric elements of degree $k$ in $H(\h)^{\otimes k}$. Then, use the identification $H(\h)\simeq\mathbb{R}^{d^2}$ to define a linear map 
\begin{align}\label{phi}
\phi: \text{Sym}(H(\mathcal{H}),k)\to P^{k}(\mathbb{R}^{d^2})
\end{align}
by the relation
\begin{align*}
\phi (\eta)(x)=\text{tr}\left(\eta \left(\sum_{i=1}^{n^{2}}x_{i}\sigma_{i}\right)^{\otimes n}\right)
\end{align*}
where $\eta\in \text{Sym}(H(\mathcal{H}),k)$ and $x\in \mathbb{R}^{d^2}$.
\begin{lemma}\label{lemmutl}
The mapping $\phi$ is an isomorphism.
\end{lemma}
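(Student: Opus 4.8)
The plan is to establish that $\phi$ is an isomorphism by first observing that both source and target are finite-dimensional vector spaces of the same dimension, so it suffices to prove that $\phi$ is injective (equivalently surjective). I would begin by fixing the dimension count: $\mathrm{Sym}(H(\mathcal{H}),k)$, the space of fully symmetric tensors of degree $k$ over $H(\h) \simeq \mathbb{R}^{d^2}$, has dimension $\binom{d^2+k-1}{k}$, which is exactly the dimension of $P^k(\mathbb{R}^{d^2})$, the space of homogeneous polynomials of degree $k$ in $d^2$ variables. (One should check whether the paper means homogeneous polynomials or all polynomials of degree $\le k$; from the defining relation $\phi(\eta)(x) = \tr(\eta(\sum_i x_i\sigma_i)^{\otimes k})$, the output is manifestly homogeneous of degree $k$, so $P^k$ must mean the homogeneous ones, and the dimensions match.) Linearity of $\phi$ in $\eta$ is immediate from linearity of the trace.

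For injectivity, suppose $\phi(\eta) = 0$, i.e. $\tr\big(\eta\, (\sum_i x_i \sigma_i)^{\otimes k}\big) = 0$ for all $x \in \mathbb{R}^{d^2}$. Write $A(x) = \sum_i x_i \sigma_i$, which is an arbitrary Hermitian operator as $x$ ranges over $\mathbb{R}^{d^2}$. The idea is to polarize: the map $x \mapsto A(x)^{\otimes k}$ has image spanning $\mathrm{Sym}(H(\mathcal{H}),k)$, because the $k$-th symmetric power of a vector space is spanned by the symmetrized tensor products $v^{\otimes k}$ (this is the standard polarization identity, expressing $v_1 \odot \cdots \odot v_k$ as a linear combination of terms $(\sum_j \epsilon_j v_j)^{\otimes k}$ with signs $\epsilon_j = \pm 1$). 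Consequently the condition $\langle \eta, A(x)^{\otimes k}\rangle = 0$ for all $x$ — where $\langle\cdot,\cdot\rangle$ is the Hilbert–Schmidt inner product on $H(\h)^{\otimes k}$ — forces $\eta$ to be orthogonal to a spanning set of $\mathrm{Sym}(H(\mathcal{H}),k)$, hence $\eta = 0$ (using that $\eta$ is itself symmetric, so it lies in that subspace). This gives injectivity, and combined with the dimension count, $\phi$ is an isomorphism.

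The main obstacle I anticipate is purely bookkeeping: making the identification between the trace pairing $\tr(\eta\, B^{\otimes k})$ and the Hilbert–Schmidt inner product $\langle \eta, B^{\otimes k}\rangle$ clean, and being careful that $\sigma_1 = \id_\h$ is part of the basis so that constant and lower-degree contributions do not sneak in — but since everything is honestly homogeneous of degree $k$ in $x$, this is not a real difficulty. The one genuine ingredient is the polarization fact that $\{v^{\otimes k} : v \in V\}$ spans $\mathrm{Sym}(V,k)$, which is classical; I would cite it or give the one-line sign-sum identity rather than prove it from scratch. An alternative, slightly more computational route would be to write $\phi(\eta)(x)$ explicitly in coordinates: if $\eta = \sum_{\mathbf{i}} \eta_{\mathbf{i}} \,\sigma_{i_1}\otimes\cdots\otimes\sigma_{i_k}$ then $\phi(\eta)(x) = \sum_{\mathbf{i}} \eta_{\mathbf{i}}\, x_{i_1}\cdots x_{i_k}$ (using $\tr(\sigma_a\sigma_b)$ data — actually using $\tr$ on the tensor factors directly gives the monomial), so $\phi$ literally sends the symmetric-tensor coordinates of $\eta$ to the coefficients of the associated polynomial, which is visibly bijective onto $P^k(\mathbb{R}^{d^2})$. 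I would likely present this coordinate version as the proof since it is the most transparent and avoids invoking polarization as a black box.
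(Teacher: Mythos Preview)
Your proposal is correct. The paper's own proof is essentially your \emph{alternative} coordinate route: after the same dimension count, it proves surjectivity directly by exhibiting, for each monomial $x_{i_1}\cdots x_{i_k}$ in the standard basis of $P^k(\mathbb{R}^{d^2})$, the explicit preimage $\sigma_{i_1}\cdot\ldots\cdot\sigma_{i_k}$ (symmetric product). This is exactly the computation you sketch at the end, just phrased as ``find a preimage of each basis element'' rather than ``read off the matrix of $\phi$ in coordinates''.

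Your primary route---injectivity via the polarization identity (the span of $\{v^{\otimes k}\}$ equals $\mathrm{Sym}^k$)---is a genuinely different argument. It trades the explicit basis-matching for an appeal to a classical fact; a mild advantage is that it does not depend on the orthonormality of $\{\sigma_i\}$, whereas the paper's preimage $\sigma_{i_1}\cdot\ldots\cdot\sigma_{i_k}\mapsto x_{i_1}\cdots x_{i_k}$ works cleanly only because $\tr(\sigma_a\sigma_b)=\delta_{ab}$. Both arguments are short and equally valid here.
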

\begin{proof}
Let $d=\dim H(\mathcal{H})$. Note that,
\begin{align*}
\dim\ \text{Sym}(H(\mathcal{H}),k)={n+k-1\choose k}=\dim\ P^{k}(\mathbb{R}^{d^2}).
\end{align*}
Then, by linearity of $\phi$, it is enough to check that $\phi$ is surjective. Under the identification $H(\mathcal{H})\simeq\mathbb{R}^{d^2}$, a basis of $P^{k}(\mathbb{R}^{d^2})$ is given by polynomials of the form $x_{i_{1}}...x_{i_{k}},\ i_{j}\in\{1,...,d\},\ i_{1}\leq...\leq i_{k}$. For each such polynomial $p=x_{i_{1}}...x_{i_{k}}$ there is a $\eta\in \text{Sym}(H(\mathcal{H}),k)$ such that $\phi(\eta)(x)=p$, namely $\eta=\sigma_{i_{1}}\cdot...\cdot\sigma_{i_{k}}$, where $\,\cdot\,$ denotes the symmetric product. 
\end{proof}

\begin{remark}
Note that every $\rho\in\mathcal{S}(\mathcal{H})$ decomposes as $\rho=\id_{\mathcal{H}}+\sum_{i=2}^{d^{2}}\sigma_{i}$ and thus $x_{1}=1$ on $\mathcal{S}(\mathcal{H})$. From this point of view $P^{k}(\mathbb{R}^{d^2})$ corresponds to $P^{\leq k}(\mathbb{R}^{d^2-1})$, the set of polynomials of degree $d\leq k$ in $x_{2},...,x_{d^{2}}$.
\end{remark}

The following lemma is the crucial ingredient of the main theorem of this section. Let $\id_{\mathcal{H}}+H(\mathcal{H})_{0}:=\{\id_\h+h:h\in H(\h)_0\}$.
\begin{lemma}\label{lemapprox}
Let $\mathcal{P}\subseteq\mathcal{S}(\mathcal{H})\subseteq \id_{\mathcal{H}}+H(\mathcal{H})_{0}\simeq \mathbb{R}^{n\times n-1}$ be a closed submanifold and $\psi:\mathcal{P}\to\mathbb{R}^{m}$ be a smooth embedding. Then, there is a $k\in\mathbb{N}$ and a map $\tilde{\psi}^{\prime}=(p_{1},...,p_{m}),\ p_{i}\in P^{\leq k}(\mathbb{R}^{n\times n-1})$, such that $\psi^{\prime}=\tilde{\psi}^{\prime}|_{\mathcal{P}}$ is a smooth embedding.
\end{lemma}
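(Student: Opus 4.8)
The plan is to approximate the smooth embedding $\psi$ by a polynomial map and argue that polynomial maps sufficiently close to $\psi$ in an appropriate $C^1$ sense remain smooth embeddings. First I would invoke the Stone--Weierstrass theorem: since $\mathcal{P}$ is compact (being a closed submanifold of the compact set $\mathcal{S}(\mathcal{H})$), each component $\psi_j$ of $\psi$ can be uniformly approximated on $\mathcal{P}$ by polynomials in the coordinates $x_2,\dots,x_{n^2}$. However, uniform approximation of the values is not enough to preserve the immersion property, so I would instead work with a simultaneous approximation of $\psi$ and its first derivatives. The clean way to do this is to first extend $\psi$ to a smooth map $\Psi$ defined on a tubular neighbourhood $U$ of $\mathcal{P}$ in $\R^{n^2-1}$ (for instance $\Psi=\psi\circ r$ where $r$ is a smooth retraction of $U$ onto $\mathcal{P}$, which exists because $\mathcal{P}$ is a closed embedded submanifold), then approximate $\Psi$ together with all its first partial derivatives uniformly on a compact neighbourhood $K\subseteq U$ of $\mathcal{P}$ by a polynomial map $\tilde\psi'=(p_1,\dots,p_m)$. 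That such a joint $C^1$ polynomial approximation exists follows from Stone--Weierstrass applied to the partials, followed by integration, or directly from the multivariate Bernstein/Weierstrass theorem for $C^1$ functions.

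Next I would verify that for a good enough approximation, $\psi'=\tilde\psi'|_{\mathcal{P}}$ is a smooth embedding. The immersion property is an open condition that is controlled by the first derivatives: at each $\rho\in\mathcal{P}$, $d\psi_\rho$ is injective, so $\min_{\|v\|=1,\,v\in T_\rho\mathcal{P}}\|d\psi_\rho(v)\|$ is bounded below by some $c>0$ uniformly in $\rho$ by compactness; if the $C^1$ distance between $\tilde\psi'$ and $\Psi$ on $K$ is less than $c$, then $d\psi'_\rho$ is injective for every $\rho\in\mathcal{P}$, so $\psi'$ is an immersion. For injectivity of $\psi'$ itself, I would argue that $\psi$, being a smooth embedding of a compact manifold, is ``uniformly injective'': there is $\delta>0$ such that $\|\rho-\rho'\|\ge\delta$ implies $\|\psi(\rho)-\psi(\rho')\|\ge\kappa>0$ (minimum of a positive continuous function on the compact set $\{(\rho,\rho')\in\mathcal{P}^2:\|\rho-\rho'\|\ge\delta\}$), while for $\|\rho-\rho'\|<\delta$ the immersion estimate plus a standard tubular-neighbourhood / mean-value argument gives $\|\psi(\rho)-\psi(\rho')\|\gtrsim\|\rho-\rho'\|$; a uniformly $C^1$-close polynomial map $\psi'$ inherits both lower bounds (with slightly worse constants), hence is injective. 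Finally, an injective continuous map from a compact space to a Hausdorff space is automatically a homeomorphism onto its image, so $\psi'$ is a smooth embedding. Taking $k$ to be the degree of the approximating polynomial map $\tilde\psi'$ completes the argument.

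The main obstacle I expect is the uniform injectivity step: one must pass from the pointwise injectivity in the definition of a smooth embedding to a quantitative, scale-uniform statement of the form $\|\psi(\rho)-\psi(\rho')\|\ge c\,\|\rho-\rho'\|$ for nearby points, and then show this survives a $C^1$-small perturbation. This is exactly the kind of estimate packaged by Lemma \ref{lemtangentapp} and the reasoning in Theorem \ref{propstab}, so I would lean on those: Lemma \ref{lemtangentapp} lets me approximate $\mathcal{P}$ near each point by its tangent space, reducing the local injectivity of $\psi'$ to the injectivity of $d\psi'_\rho$ on $T_\rho\mathcal{P}$, which I already control. The extension-to-a-tubular-neighbourhood bookkeeping and the verification that the joint $C^1$ polynomial approximation is legitimate are routine and I would not belabour them.
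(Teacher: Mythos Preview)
Your proposal is correct and follows the same overall architecture as the paper's proof: extend $\psi$ smoothly to the ambient affine space, approximate it in $C^{1}$ by a polynomial map, and then argue that a $C^{1}$-close map remains an immersion and is injective by splitting into a ``far'' regime (compactness gives a positive lower bound on $\|\psi(\rho)-\psi(\rho')\|$) and a ``near'' regime (handled by the immersion property). The paper carries this out slightly differently in the details: it extends $\psi$ to a compactly supported map on all of $\R^{n^{2}-1}$ rather than via a tubular retraction, cites a specific simultaneous-approximation theorem of Bagby et al.\ for the $C^{1}$ polynomial approximation, and for the local step covers $\mathcal{P}$ by finitely many coordinate balls and invokes Hirsch's Lemma~\ref{lememb} as a black box to get that $\psi'$ is an embedding on each ball. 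Your route---bounding $\min_{\rho}\min_{\|v\|=1}\|d\psi_{\rho}(v)\|$ directly by compactness for the immersion, and leaning on Lemma~\ref{lemtangentapp} for the near-points injectivity---is a legitimate alternative that stays internal to the paper's toolkit and avoids the atlas bookkeeping, at the cost of reproving by hand what Hirsch's lemma packages. One small caution: the ``Stone--Weierstrass on the partials, then integrate'' idea does not straightforwardly produce a single polynomial whose partials are close to the given ones (path-independence is the issue), so you should rely on the Bernstein-type $C^{1}$ approximation you mention, as the paper effectively does.
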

The proof of this lemma can be found in appendix \ref{appendixlemapprox}.

\begin{theorem}
Let $\mathcal{P}\subseteq\mathcal{S}(\mathcal{H})$ be a closed submanifold. There is a smooth embedding of $\mathcal{P}$ in $\mathbb{R}^{m}$ if and only if, for some $k\in\mathbb{N}$, there exists a stably $\mathcal{P}$-complete $m$-dimensional measurement on $k$ copies.
\end{theorem}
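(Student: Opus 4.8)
The plan is to obtain the theorem by assembling three results already in hand---the isomorphism $\phi$ of Lemma~\ref{lemmutl} between symmetric $k$-tensors and polynomials of degree at most $k$ on the state space, the polynomial approximation of Lemma~\ref{lemapprox}, and the equivalence, in Theorem~\ref{thmgenstability}, of being stably $\mathcal{P}$-complete with being a $\mathcal{P}$-embedding---together with Proposition~\ref{propinc}, which says $i|_{\mathcal{P}}\colon\rho\mapsto\rho^{\otimes k}$ is a smooth embedding. The only genuinely new work here is the bookkeeping that turns an arbitrary tuple of hermitian operators into a legitimate POVM without destroying injectivity or immersivity, together with matching the dimension conventions on the two sides.

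For the direction ``stable measurement $\Rightarrow$ embedding'' I would start from a stably $\mathcal{P}$-complete $m$-dimensional measurement $P=\{P_1,\dots,P_{m+1}\}$ on $k$ copies, apply Theorem~\ref{thmgenstability} to get that $h_P|_{i(\mathcal{P})}$ is a smooth embedding, and precompose with the smooth embedding $i|_{\mathcal{P}}$ to produce a smooth embedding $h_P\circ i\colon\mathcal{P}\to\mathbb{R}^{m+1}$. (Only injectivity and immersivity need to be tracked through these compositions; the homeomorphism-onto-image condition is automatic since $\mathcal{P}$ is compact.) Because $\sum_j P_j=\id_{\mathcal{H}^{\otimes k}}$ forces $\sum_j\mathrm{tr}(P_j\rho^{\otimes k})=(\mathrm{tr}\,\rho)^k=1$ for every state, the image lies in an affine hyperplane of $\mathbb{R}^{m+1}$ that is affinely isomorphic to $\mathbb{R}^m$; post-composing with that isomorphism yields the sought embedding $\mathcal{P}\to\mathbb{R}^m$.

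For ``embedding $\Rightarrow$ stable measurement'' I would proceed in four steps. First, apply Lemma~\ref{lemapprox} to replace a given smooth embedding $\mathcal{P}\hookrightarrow\mathbb{R}^m$ by a polynomial one $\psi'=(p_1,\dots,p_m)|_{\mathcal{P}}$ with $p_i\in P^{\leq k}(\mathbb{R}^{n\times n-1})$. Second, homogenise each $p_i$ using the extra coordinate $x_1$ and push it through $\phi^{-1}$ (Lemma~\ref{lemmutl}) to obtain hermitian operators $\eta_i\in\mathrm{Sym}(H(\mathcal{H}),k)$ with $\mathrm{tr}(\eta_i\rho^{\otimes k})=p_i(\rho)$ on $\mathcal{S}(\mathcal{H})$, using $x_1=1$ there. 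Third, rescale and shift: pick $c>0$ with $-c\,\id\leq\eta_i\leq c\,\id$ for all $i$, set $\tilde\eta_i=\tfrac{1}{2cm}(\eta_i+c\,\id)$ for $i\leq m$ and $\tilde\eta_{m+1}=\id-\sum_{i\leq m}\tilde\eta_i$, so that $\{\tilde\eta_1,\dots,\tilde\eta_{m+1}\}$ is a POVM on $\mathcal{H}^{\otimes k}$ whose induced map on $i(\mathcal{P})$ is $\psi'$ post-composed with an injective affine map into $\mathbb{R}^{m+1}$, hence a smooth embedding; that is, this measurement is a $\mathcal{P}$-embedding. Fourth, conclude via Theorem~\ref{thmgenstability} that it is stably $\mathcal{P}$-complete.

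The step I expect to require the most care is not any of these constructions individually but the dimension accounting that makes the two sides of the equivalence refer to the same $m$: an ``$m$-dimensional measurement'' has $m+1$ linearly independent effects and one coordinate of its output is pinned by normalisation, which is exactly why it should correspond to an embedding into $\mathbb{R}^m$ rather than $\mathbb{R}^{m+1}$. In particular I must ensure the constructed effects $\tilde\eta_i$ are linearly independent; if $\{1,p_1,\dots,p_m\}$ happens to be linearly dependent I would first perturb $(p_1,\dots,p_m)$ slightly inside $P^{\leq k}(\mathbb{R}^{n\times n-1})^m$---harmless, since being a smooth embedding is an open condition on the compact manifold $\mathcal{P}$---or, failing that, enlarge the operator system $\sigma_P$ by an arbitrary spare hermitian direction, which preserves being a $\mathcal{P}$-embedding. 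The conceptually hard content is entirely deferred to Lemma~\ref{lemapprox} (polynomial approximation of embeddings, presumably via Stone--Weierstrass together with compactness to retain immersivity and injectivity) and to Theorem~\ref{thmgenstability}.
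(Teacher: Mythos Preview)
Your proposal is correct and follows essentially the same route as the paper: invoke Lemma~\ref{lemapprox} to replace the given embedding by a polynomial one, push the polynomials through $\phi^{-1}$ (Lemma~\ref{lemmutl}) to get hermitian operators on $\mathcal{H}^{\otimes k}$, and then appeal to Theorem~\ref{thmgenstability} for the equivalence with stability; the reverse direction is likewise just Theorem~\ref{thmgenstability} plus the affine-hyperplane reduction. The only notable difference is that the paper works at the level of operator systems---it takes $\sigma=\operatorname{span}_{\mathbb{R}}\{\id,\phi^{-1}(p_1),\dots,\phi^{-1}(p_m)\}$ and invokes the earlier fact that every operator system is generated by a POVM---whereas you build an explicit POVM by hand via the shift-and-rescale trick; your version is more detailed about the dimension bookkeeping (the hyperplane reduction and the linear-independence issue), which the paper leaves implicit.
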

\begin{proof}
\ref{thmgenstability} gives one direction. For the other direction, let $\psi:\mathcal{P}\to\mathbb{R}^{m}$ be a smooth embedding. Then, by \ref{lemapprox}, there is a smooth embedding $\psi^{\prime}=(p_{1},...,p_{m})|_{\mathcal{P}},\ p_{i}\in P^{\leq k}(\mathbb{R}^{n\times n-1})$. $\sigma=\text{span}_\R\{ \id_{\mathcal{H}},\phi^{-1}(p_{1}),...,\phi^{-1}(p_{m})\}$ ($\phi$ form \ref{lemmutl}) is clearly an operator system whose dimension is less or equal to $m+1$. $\sigma$ is a $\mathcal{P}$-embedding because $\psi^{\prime}|_{\mathcal{P}}$ is an embedding and thus stably $\mathcal{P}$-complete by \ref{thmgenstability}.
\end{proof}

Thus, under the premise of stability, asking for the minimal dimension of a $\mathcal{P}$-complete POVM is equivalent to the related problem in differential topology of finding the smallest $m$ such that $\mathcal{P}$ can be smoothly embedded in $\mathbb{R}^{m}$.

\section{Upper and Lower Bounds for Concrete Submanifolds}\label{sec2}

In this section we obtain lower as well as upper bounds on the dimension of complete and stable POVMs on some interesting submanifolds of states. The procedure is to first relate the submanifolds to well-known homogeneous spaces and then to obtain or use existing non-immersion results for these. Upper bounds are obtained by directly constructing POVMs.\\
First, we give bounds for the set of states with fixed spectrum. Thereby we also obtain bounds for the closely related set of states with bounded rank.\\
Then, we give a brief analysis of states with an underlying unitary symmetry which is needed in the next section, where we generalize the previous results to states of fixed spectrum with an underlying symmetry.\\
Finally, we obtain bounds for the set of pure states of bipartite systems, obtained from the action of the unitary group of the second system on some fixed pure state.

In the following let $\mathcal{H}=\mathbb{C}^{n}$.

\subsection{States of Fixed Spectrum and States of Bounded Rank}\label{sec21}

First, we consider the set of states in $\mathcal{S}(\mathbb{C}^{n})$ with fixed spectrum $s=(s_{1},...,s_{n})$ \footnote{By spectrum we mean the set of eigenvalues order increasingly together with their multiplicities.} and we denote by $D_{s}:=\text{diag}(s_{1},...,s_{n})$ the diagonal matrix with entries from $s$.\\
The set of all states with spectrum $s$, $\mathcal{S}(\mathbb{C}^{n})_{s}$, is the orbit of $D_{s}$ with respect to the action $G$ of $U(n)$ on $\mathcal{S}(\mathbb{C}^{n})$ by conjugation, i.e.
\begin{align*}
\mathcal{S}^{n}_{s}:=\{U D_{s} U^{\dagger}:\ U\in U(n)\}.
\end{align*}
The isotropy group of $\rho$ under this action is $U(n_{1})\times...\times U(n_{k})$, where $n_{i}$ is the multiplicity of the $i$-th biggest eigenvalue. Note that $\sum_{j=1}^{k}n_{j}=n$. By theorem 3.62 of \cite{warner1971foundations}, factoring the orbit map over this isotropy group induces a diffeomorphism
\begin{align*}
U(n)/U(n_{1})\times...\times U(n_{k})\simeq\mathcal{S}(\mathbb{C}^{n})_{s}.
\end{align*}
Thus, $\mathcal{S}^{n}_{s}$ can be identified with a complex flag manifold.\\
In \cite{walgenbach2001lower}, Walgenbach obtains lower bounds for the immersion dimension of complex flag manifolds. To present his result, we first introduce some notation.
\begin{definition}
Let $n\in\mathbb{N}$, $k\in\{0,1,...,n\}$.\\
$\alpha(n):=$number of ones in the binary expansion of $n$,\\
$\alpha_{1}(n):=\sum_{i=0}^{n-1}\alpha(i)$,\\
$\beta(n,k):=\alpha_{1}(n)-\alpha_{1}(k)-\alpha_{1}(n-k)$,\\
\end{definition}
Let $\{n_{1},...,n_{k}\}$ be a partition of $n$. Let $K$ be some subset of $\{1,...,k\}$ and set $m=\sum_{i\in K}n_{i}$.
\begin{proposition}\cite{walgenbach2001lower}\label{propflag}
The complex flag manifold $U(n)/U(n_{1})\times...\times U(n_{k})$ cannot be immersed in Euclidean Space of dimension $4m(n-m)-2\beta(n,m)-1$ and it cannot be embedded in Euclidean space of dimension $4k(n-m)-2\beta(n,m)$.
\end{proposition}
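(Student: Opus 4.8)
The plan is to derive both bounds from characteristic-class obstructions to immersions and embeddings, using the homogeneous structure of the flag manifold to cut the problem down to a Grassmannian. Write $F:=U(n)/(U(n_1)\times\cdots\times U(n_k))$ and regroup the blocks indexed by $K$ into one block of size $m$ and the remaining blocks into one of size $n-m$; this produces a smooth fibration $\pi\colon F\to G$ onto the complex Grassmannian $G:=U(n)/(U(m)\times U(n-m))$, of real dimension $2m(n-m)$, with fibre a product of two smaller flag manifolds. The first step I would carry out is to record the tangent bundle splitting: realizing $TF$ as a real bundle as $\bigoplus_{i<j}\mathrm{Hom}_{\mathbb C}(\xi_i,\xi_j)$ for the tautological bundles $\xi_1,\dots,\xi_k$, one gets $TF\cong\pi^{*}TG\oplus Q$, where $\pi^{*}TG=\mathrm{Hom}_{\mathbb C}(\xi_K,\xi_{K^{c}})$ with $\xi_K:=\bigoplus_{i\in K}\xi_i$, $\xi_{K^{c}}:=\bigoplus_{i\notin K}\xi_i$, and $Q$ collects the remaining $\mathrm{Hom}$-summands. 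This reduction is what puts the factor $4m(n-m)=2\dim_{\mathbb R}G$ into the bounds.

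Next I would establish the corresponding non-immersion bound for $G$ itself, which is the technical core. Stiefel--Whitney classes are too weak here (for instance $\bar w(\mathbb{CP}^{3})=1$), so one works in $KO$-theory: compute the stable class $-[TG]\in\widetilde{KO}(G)$ and bound its geometric dimension (the least rank of a bundle representing it). Using the known $2$-primary structure of $\widetilde{KO}(G)$ together with the $\gamma$-operations, this becomes a question about which monomials survive mod $2$, governed by Lucas'/Kummer's theorem on binomial coefficients -- and this is exactly where $\alpha$, $\alpha_1$, and hence $\beta(n,m)=\alpha_1(n)-\alpha_1(m)-\alpha_1(n-m)$ enter. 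The outcome to aim for is that a suitable $\gamma$-characteristic class of $-[TG]$ is nonzero in degree $2m(n-m)-2\beta(n,m)$, so $\mathrm{gd}(-[TG])\ge 2m(n-m)-2\beta(n,m)$; since an immersion of $G$ in $\mathbb{R}^{N}$ has normal bundle of rank $N-\dim_{\mathbb R}G$ representing $-[TG]$, this forces $N\ge 4m(n-m)-2\beta(n,m)$. I expect this combinatorial identification of the top nonvanishing obstruction class to be the main obstacle; it is essentially Walgenbach's key lemma and the only place where real work is required.

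With this in hand, transferring the bound to $F$ is formal. If $F$ immersed in $\mathbb{R}^{N}$ with normal bundle $\nu$, then $TF\oplus\nu$ is trivial, so $Q\oplus\nu$ is a stable inverse of $\pi^{*}TG$ of rank $N-2m(n-m)$; that is, $\pi^{*}(-[TG])$ has geometric dimension at most $N-2m(n-m)$. Because $\pi$ is a flag bundle, $\pi^{*}$ is injective on the relevant $KO$-group by a Leray--Hirsch argument, so by naturality the $\gamma$-class detecting the Grassmannian bound remains nonzero on $F$; hence $N-2m(n-m)\ge 2m(n-m)-2\beta(n,m)$, i.e.\ $N\ge 4m(n-m)-2\beta(n,m)$, which gives the non-immersion claim.

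Finally, for the embedding statement I would rerun the same scheme but invoke the sharper obstruction valid for embeddings of closed manifolds -- the vanishing of the top normal characteristic class, equivalently the ``$+1$'' improvement of the geometric-dimension estimate for embeddings -- and keep track of the contribution of the extra summand $Q$, which is where the block count enters. This upgrades the inequality by one and yields the stated non-embeddability bound. The routine verifications (the tangent splitting, the injectivity of $\pi^{*}$, and the bookkeeping of ranks) I would leave to the write-up; the one genuinely substantial point remains the $KO$-theoretic computation on $G$.
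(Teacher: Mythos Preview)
The paper does not prove this proposition; it quotes it from Walgenbach and only sketches the method later, in Appendix~\ref{appendixb}. That method is \emph{not} the $KO$-theoretic geometric-dimension argument you outline. Walgenbach's proof goes through Mayer's theorem: for a compact oriented even-dimensional $X$ one chooses $d\in H^{2}(X;\mathbb{Z})$ and $z\in\mathrm{ch}(X)$, forms the Hilbert polynomial $H(t)=(z^{(t)}e^{d/2}\hat A(X))[X]$, and reads off that $X$ has no immersion in dimension $-2\nu_{2}(H(\tfrac12))-1$ and no embedding in dimension $-2\nu_{2}(H(\tfrac12))$. Walgenbach then evaluates $H(\tfrac12)$ on the flag manifold by explicit combinatorics, and $\beta(n,m)$ emerges as the $2$-adic valuation of the resulting rational number. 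In particular the embedding bound is automatically the immersion bound plus one; the ``$4k$'' in the displayed statement is almost certainly a typo for ``$4m$''.

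Your route---reduce to the Grassmannian via the flag fibration, bound $\mathrm{gd}(-[TG])$ in $\widetilde{KO}(G)$ with $\gamma$-operations, and pull back using injectivity of $\pi^{*}$---is a genuinely different and in principle viable strategy, but two points need correcting. First, the core $KO$-computation on $G_{m}(\mathbb{C}^{n})$ is not ``essentially Walgenbach's key lemma'': his argument never touches $\widetilde{KO}$ or $\gamma$-operations, so you would have to supply or cite that Grassmannian bound independently (results of this type exist, but Walgenbach is not the reference). Second, your embedding step is off. There is no mechanism in your scheme by which the summand $Q$ or the block count $k$ sharpens the bound, and none is needed: the standard ``top normal class vanishes for an embedding'' upgrade yields exactly $+1$, which is precisely what Mayer's theorem gives. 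Drop the sentence about $Q$ contributing the block count and state the embedding bound with $4m(n-m)$.
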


Next, we want to obtain upper bounds on the dimension of stably $\mathcal{S}^{n}_{s}$-complete POVMs. 
Let $\sigma$ be the function that associates to each $h\in H(\mathbb{C}^{n})$ its spectrum. For $A\subseteq H(\mathbb{C}^{n})$, let $\text{Spec}(A):=\{D_{s}: s=\sigma(M), M\in A\}$  and let $G(A):=\{UMU^{\dagger}:U\in U(n),\ M\in A\}$.

\begin{lemma}\label{lemunitray}
$\Delta\mathcal{S}^{n}_{s}=G\left(\text{Spec}(\Delta\mathcal{S}^{n}_{s})\right)$ and $T\mathcal{S}^{n}_{s}=G\left(\text{Spec}(T_{D_{s}}\mathcal{S}^{n}_{s})\right)$ as sets. Furthermore, let $r$ be the biggest multiplicity of an eigenvalue in $s$, then $rank(M)<2(n-r)+1$ for $M\in \Delta\mathcal{S}^{n}_{s}\cup T\mathcal{S}^{n}_{s}$.
\end{lemma}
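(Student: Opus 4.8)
Here is how I would prove Lemma~\ref{lemunitray}.

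The plan is to exploit that $\mathcal{S}^{n}_{s}$ is a single $U(n)$-orbit, so that both the difference set $\Delta\mathcal{S}^{n}_{s}$ and the tangent set $T\mathcal{S}^{n}_{s}$ are invariant under the conjugation action $G$. The first step is an abstract observation: if $A\subseteq H(\mathbb{C}^{n})$ satisfies $G(A)=A$, then $A=G(\text{Spec}(A))$. For the inclusion ``$\supseteq$'' one takes $D_{t}\in\text{Spec}(A)$, so $t=\sigma(M)$ for some $M\in A$; diagonalising $M=WD_{t}W^{\dagger}$ gives $D_{t}=W^{\dagger}MW\in A$ by $G$-invariance, whence $UD_{t}U^{\dagger}\in A$ for every $U\in U(n)$. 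For ``$\subseteq$'', diagonalise an arbitrary $M\in A$ as $M=UD_{\sigma(M)}U^{\dagger}$; then $D_{\sigma(M)}\in A$ by $G$-invariance, hence $D_{\sigma(M)}\in\text{Spec}(A)$ and $M\in G(\text{Spec}(A))$. I would also record that conjugation preserves spectra, so $\text{Spec}(G(A))=\text{Spec}(A)$ for every $A$.

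Next I would verify the two invariance claims. If $M=\rho-\rho^{\prime}$ with $\rho,\rho^{\prime}\in\mathcal{S}^{n}_{s}$, then $VMV^{\dagger}=V\rho V^{\dagger}-V\rho^{\prime}V^{\dagger}$ is again a difference of two states of spectrum $s$, so $G(\Delta\mathcal{S}^{n}_{s})=\Delta\mathcal{S}^{n}_{s}$, and the observation yields the first identity. For the tangent set, conjugation by $V$ is the restriction to $\mathcal{S}^{n}_{s}$ of the linear map $A\mapsto VAV^{\dagger}$ on $H(\mathbb{C}^{n})$, so under the identification $T_{\rho}\mathcal{S}^{n}_{s}\subseteq H(\mathbb{C}^{n})$ its differential at $\rho$ is that same map; since $\mathcal{S}^{n}_{s}$ is homogeneous this gives $T_{V\rho V^{\dagger}}\mathcal{S}^{n}_{s}=V\,(T_{\rho}\mathcal{S}^{n}_{s})\,V^{\dagger}$ and hence $T\mathcal{S}^{n}_{s}=G(T_{D_{s}}\mathcal{S}^{n}_{s})$, which is $G$-invariant. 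The observation then gives $T\mathcal{S}^{n}_{s}=G(\text{Spec}(T\mathcal{S}^{n}_{s}))$, and $\text{Spec}(T\mathcal{S}^{n}_{s})=\text{Spec}(G(T_{D_{s}}\mathcal{S}^{n}_{s}))=\text{Spec}(T_{D_{s}}\mathcal{S}^{n}_{s})$, which is the second identity.

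For the rank bound, since rank is conjugation-invariant it suffices by the two identities to bound the rank of an arbitrary element of $T_{D_{s}}\mathcal{S}^{n}_{s}$ and of an arbitrary matrix $D_{s}-UD_{s}U^{\dagger}$ (every $M\in\Delta\mathcal{S}^{n}_{s}$ is $G$-conjugate to such a matrix, after conjugating by the unitary that diagonalises one of the two states). For the tangent space one has $T_{D_{s}}\mathcal{S}^{n}_{s}=\{XD_{s}-D_{s}X:X\in\mathfrak{u}(n)\}$ with $(XD_{s}-D_{s}X)_{ij}=(s_{j}-s_{i})x_{ij}$, so every $M\in T_{D_{s}}\mathcal{S}^{n}_{s}$ vanishes on the block of indices $B$ carrying an eigenvalue of maximal multiplicity $r$; thus the $r$ columns of $M$ indexed by $B$ all lie in the $(n-r)$-dimensional coordinate subspace spanned by $\{e_{i}:i\notin B\}$, and together with the remaining $n-r$ columns they span a space of dimension at most $2(n-r)$. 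For $D_{s}-UD_{s}U^{\dagger}$, let $E$ be the $r$-dimensional eigenspace of $D_{s}$ for that eigenvalue $\lambda$; then $UE$ is the $\lambda$-eigenspace of $UD_{s}U^{\dagger}$, and every $v\in E\cap UE$ satisfies $(D_{s}-UD_{s}U^{\dagger})v=\lambda v-\lambda v=0$, so $E\cap UE\subseteq\ker(D_{s}-UD_{s}U^{\dagger})$ and hence $\text{rank}(D_{s}-UD_{s}U^{\dagger})\leq n-\dim(E\cap UE)\leq n-\max(0,2r-n)$. The only point that needs a little care is finishing this estimate: when $2r\geq n$ the displayed bound is already $2(n-r)$, whereas when $2r<n$ the intersection $E\cap UE$ may be trivial and one instead uses the crude estimate $\text{rank}\leq n$ together with $n\leq 2(n-r)$; in both cases $\text{rank}<2(n-r)+1$, which is the claim. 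Everything else is bookkeeping with the orbit structure, so I expect the argument to be short.
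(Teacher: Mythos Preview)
Your argument is correct and handles the set identities essentially the same way as the paper (the paper just says the first identity is ``true by definition'' and derives the second from the commutator description $T_{D_s}\mathcal{S}^{n}_{s}=\{i[h,D_s]:h\in H(\mathbb{C}^n)\}$ together with the conjugation isomorphism $T_{D_s}\mathcal{S}^{n}_{s}\to T_{UD_sU^{\dagger}}\mathcal{S}^{n}_{s}$, which is your $G$-invariance step).

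The genuine difference is in the rank bound. The paper uses a single uniform trick: for the eigenvalue $\lambda$ of multiplicity $r$ one writes
\[
UD_sU^{\dagger}-VD_sV^{\dagger}=U(D_s-\lambda\id)U^{\dagger}-V(D_s-\lambda\id)V^{\dagger},
\qquad
[h,D_s]=[h,D_s-\lambda\id],
\]
and since $D_s-\lambda\id$ has rank $n-r$, each expression is a difference of two matrices of rank at most $n-r$, giving rank $\leq 2(n-r)$ in both cases at once. Your approach instead treats the two cases separately: for the tangent space you read off the block-zero pattern of $(s_j-s_i)x_{ij}$ and count column spans, and for differences you bound $\dim(E\cap UE)$ and then split on whether $2r\geq n$. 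This is perfectly valid but slightly longer; the paper's $\lambda$-shift avoids the case split and the eigenspace-intersection estimate entirely, and treats $\Delta\mathcal{S}^{n}_{s}$ and $T\mathcal{S}^{n}_{s}$ symmetrically.
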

\begin{proof}
The first claim is essentially true by definition. For the second claim, let us compute the tangent space at $M_{s}$. Let $h\in H(\mathbb{C}^{n})$ and consider the curve $\gamma:t\mapsto e^{iht}D_{s}e^{-iht}$. The derivative at $t=0$ of this curve is then an element of $T_{D_{s}}S^{n}_{s}$ and we find
\begin{align*}
\frac{d}{dt}\gamma |_{t=0}=\frac{d}{dt}|_{t=0}e^{iht}D_{s}e^{-iht}|_{t=0}=i[h,D_{s}].
\end{align*}
In the canonical basis, these elements are of the form
\begin{align*}
\begin{pmatrix}
0 & iA & iB & \dots \\
iA^{\dagger} & 0 & iC &  \\
iB^{\dagger} & iC^{\dagger} & 0 &  \\
\vdots &  &  & \ddots
\end{pmatrix}
\end{align*}
Thus, by dimensional reasons, all elements of $T_{D_{s}}S^{n}_{s}$ are of this form. Furthermore for $U\in U(n)$, observe that
\begin{align*}
U[h,D_{s}]U^{\dagger}=[UhU^{\dagger},UD_{s}U^{\dagger}]
\end{align*}
and thus
\begin{align*}
c_U: T_{D_{s}}S^{n}_{s}&\to T_{UD_{s}U^{\dagger}}S^{n}_{s} \\
     v&\mapsto UvU^{\dagger}
\end{align*}
is an isomorphism. This proves the second claim. 
To prove the last claim, observe that for $U,V\in U(n)$
\begin{align*}
UD_{s}U^{\dagger}-VD_{s}V^{\dagger}=U(D_{s}-\lambda\id)U^{\dagger}-V(D_{s}-\lambda\id)V^{\dagger}
\end{align*}
and similarly 
\begin{align*}
[h,D_{s}]=[h,D_{s}]-[h,\lambda\id]=[h,D_{s}-\lambda\id].
\end{align*}
Choosing $\lambda$ to be the eigenvalue in $s$ with the biggest multiplicity $r$, the expressions above are differences of rank $n-r$ matrices and thus maximally of rank $2(n-r)$.
\end{proof}
\begin{remark}
It is immediate that a POVM $P$, that is injective on the set of hermitian operators with rank smaller than $r$, $\mathcal{P}_{r}$, is an $\mathcal{S}^{n}_{s}$-embedding. This is because $\Delta\mathcal{P}_{r}=\mathcal{P}_{2r}$ and thus $\Delta\mathcal{S}^{n}_{s}\cup T\mathcal{S}^{n}_{s}=\mathcal{P}_{2r}=\Delta\mathcal{P}_{r}$ by \ref{lemunitray}.
\end{remark}
As a consequence, the POVM constructed in \cite{heinosaari2013quantum} for states of bounded rank is also a $\mathcal{S}^{n}_{s}$-embedding and we obtain the following upper bounds.
\begin{proposition}\label{upflag}
Let $m$ be the biggest multiplicity of an eigenvalue in the spectrum $s$ and let $r:=n-m$. Then, there is a POVM $P$ of dimension $4r(n-r)-1$ that is a $\mathcal{S}^{n}_{s}$-embedding.
\end{proposition}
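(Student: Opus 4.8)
The plan is to obtain the bound from the bounded-rank construction of \cite{heinosaari2013quantum} by way of Lemma \ref{lemunitray}. First I would set up what has to be checked: the orbit $\mathcal{S}^n_s$ is compact (it is the image of $U(n)$ under the conjugation action), so a POVM $P$ is an $\mathcal{S}^n_s$-embedding exactly when $h_P|_{\mathcal{S}^n_s}$ is injective and an immersion. By Proposition \ref{propinj}, injectivity amounts to $\ker h_P\cap\Delta\mathcal{S}^n_s=\{0\}$ and the immersion condition to $\ker h_P\cap T_\rho\mathcal{S}^n_s=\{0\}$ for every $\rho$; hence it is enough that $\ker h_P$ contain no nonzero hermitian matrix in $\Delta\mathcal{S}^n_s\cup T\mathcal{S}^n_s$. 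Lemma \ref{lemunitray} says every matrix in that set has rank at most $2(n-m)=2r$. Since a hermitian matrix of rank at most $2r$ is a difference of two hermitian matrices of rank at most $r$ (split its spectral decomposition into two batches of at most $r$ eigenvectors), it therefore suffices to exhibit a POVM $P$ of dimension $4r(n-r)-1$ whose map $h_P$ is injective on the set of hermitian matrices of rank at most $r$.

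Second, I would import the construction of \cite{heinosaari2013quantum}: the POVM built there to identify states of bounded rank $r$ has exactly $4r(n-r)$ effect operators, i.e.\ dimension $4r(n-r)-1$, and separates hermitian matrices of rank at most $r$ in the sense required above (the rank-one instance being the $\sim 4n$ pure-state bound recalled in the introduction). A routine point to record is that this POVM satisfies the standing linear-independence convention for effect operators --- or, if not, one replaces it by a representative of the same operator system, which changes neither $\mathcal{S}^n_s$-completeness nor the embedding property --- so that ``dimension $4r(n-r)-1$'' is consistent with $\dim\sigma_P-1$.

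Combining the two steps, this POVM satisfies $\ker h_P\cap(\Delta\mathcal{S}^n_s\cup T\mathcal{S}^n_s)=\{0\}$, so $h_P|_{\mathcal{S}^n_s}$ is injective and an immersion, hence --- by compactness of $\mathcal{S}^n_s$ --- a smooth embedding; this is exactly the assertion that $P$ is an $\mathcal{S}^n_s$-embedding of dimension $4r(n-r)-1$.

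The step I expect to require the most care is the rank bookkeeping across Lemma \ref{lemunitray}, the spectral-splitting reduction, and the imported construction: one must make sure the ``$r$'' in ``rank at most $2r$'' from the Lemma, the ``$r$'' in the bounded-rank POVM of \cite{heinosaari2013quantum}, and $r=n-m$ all denote the same integer, and that no off-by-one slips in when moving between ``rank smaller than'' and ``rank at most'' at the boundary of the rank stratum. Everything else is a mechanical translation through Proposition \ref{propinj} and compactness of the orbit.
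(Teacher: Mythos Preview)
Your proposal is correct and follows essentially the same route as the paper: the paper's argument, given in the remark preceding the proposition, observes via Lemma \ref{lemunitray} that $\Delta\mathcal{S}^n_s\cup T\mathcal{S}^n_s$ consists of matrices of rank at most $2r$, notes that these are exactly differences of rank-$\leq r$ hermitian matrices, and then imports the bounded-rank POVM of \cite{heinosaari2013quantum} (whose construction, based on \cite{cubitt2008dimension}, produces an operator system whose complement contains no nonzero matrix of rank $\leq 2r$). Your write-up is in fact more explicit than the paper's about the reduction through Proposition \ref{propinj} and the compactness step, and your caution about the variable clash between the $r$ of Lemma \ref{lemunitray} (the largest multiplicity) and the $r=n-m$ of the proposition is well placed.
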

\begin{remark}
Note that for $r=n/2$ the dimension of the POVM is $4n/2(n-n/2)-1=n^2-1$. Thus, it is the trivial POVM that can identify all states and hence we also get a $\mathcal{S}^{n}_{s}$-embedding for $r>n/2$.
\end{remark}
The construction of the POVM is based on \cite{cubitt2008dimension}. The idea is to use a totally non-singular matrix, like e.g. the Vandermonde-matrix, to construct a linear subspace of $M(\mathbb{C},n)$ that just contains matrices of rank bigger than $2r$.

 \begin{table}[h]
  \setlength{\tabcolsep}{10mm}
 \begin{tabular}{lcccccccccccc}
l\textbackslash k   &         2 & 3         & 4\\
\\
                 5  & 22/34;39 \\
  \\
                 6  & 26/40;47\\ 
  \\
                 7  & 30/50;55 & 48/76;83\\
 \\
                 8  & 34/60;63 & 54/90;95\\
 \\
                 9  & 38/66;71 & 60/98;107  & 84/134;143\\
 \\
                 10 & 42/72;79 & 66/110;119 & 92/148;159\\

 \end{tabular}
 \caption{Dimension/ Lower bounds on immersion dimension \cite{walgenbach2001lower}; Upper bound on embedding dimension \ref{upflag} for $U(l+k)/U(l)\times U(1)^{k}$.}
\label{constantspectrumbounds}
\end{table}

As presented in the table \ref{constantspectrumbounds}, these results are rather close to the lower bounds of \cite{walgenbach2001lower}. Thus, the POVM of \cite{heinosaari2013quantum} gives good upper bounds on $\mathcal{S}^{n}_{s}$ and in addition we have indirectly obtained good lower bounds on the dimension of a POVM that is complete with respect to the states of bounded rank.

\subsection{States with Unitary Symmetry}\label{unisym}
Next, we shortly discuss subsets of states invariant under some unitary subgroup.

More precisely, we analyze the structure of the fix point sets of the action by conjugation $G_{H}$ of some subgroup $H\subseteq U(n)$, i.e.
\begin{align*}
\mathcal{S}(\mathbb{C}^{n})_{H}:=\{\rho\in\mathcal{S}(\mathbb{C}^{n}):\ U\rho U^{\dagger}=\rho,\forall U\in H\}.
\end{align*}

Consider the sets $\mathcal{B}(\mathbb{C}^{n})_{H}:=\{B\in\mathcal{B}(\mathbb{C}^{n}):\ UBU^{\dagger}=B,\forall U\in H\}$. $\mathcal{B}(\mathbb{C}^{n})_{H}$ is a $C^{*}$ algebra, since it is certainly a vector space and closed under the *-involution by the unitarity of $H$ and thus the structure theorem \cite{davidson1996c} yields that $\mathcal{B}(\mathbb{C}^{n})_{H}$ is unitarily equivalent to $\bigoplus_{i=1}^{k}M(n_{i},\mathbb{C})\otimes\id_{m_{i}}$.

Observe that the linear isomorphism
\begin{align}
\begin{split}
\iota:\mathcal{S}\left(\bigoplus_{i=1}^{k}M(n_{i},\mathbb{C})\right)&\to \bigoplus_{i=1}^{k}M(n_{i},\mathbb{C})\otimes \id_{m_{i}}\\
(\rho_{1},...,\rho_{k})&\mapsto (\frac{1}{m_{1}}\rho_{1}\otimes \id_{m_{i}},...,\frac{1}{m_{k}}\rho_{k}\otimes \id_{m_{k}}).
\label{eqiota}
\end{split}
\end{align}
descends to a diffeomorphism on states. Form this we immediately get the following proposition.
\begin{proposition}
There is a POVM $P$ with $\dim P=\dim \mathcal{S}(\mathbb{C}^{n})_{H}$ that is stably $\mathcal{S}(\mathbb{C}^{n})_{H}$-complete.
\end{proposition}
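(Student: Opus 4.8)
The plan is to take for the underlying operator system the fixed-point algebra itself, $\sigma:=\mathcal{B}(\mathbb{C}^{n})_{H}$. First I would note that $\sigma$ is indeed an operator system: it is a complex linear subspace, $\id_{\mathbb{C}^{n}}\in\sigma$, and $\sigma^{\dagger}=\sigma$ since $H$ is unitary. By the structure theorem quoted above $\sigma$ is unitarily equivalent to $\bigoplus_{i=1}^{k}M(n_{i},\mathbb{C})\otimes\id_{m_{i}}$, so its complex dimension is $d':=\sum_{i=1}^{k}n_{i}^{2}$. Using the fact (already invoked in \cite{heinosaari2013quantum}) that every $d'$-dimensional operator system is generated by a $(d'-1)$-dimensional POVM with linearly independent effect operators, fix such a POVM $P$ with $\sigma_{P}=\sigma$; then $\dim P=d'-1$. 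On the other hand, the isomorphism $\iota$ of (\ref{eqiota}) descends to a diffeomorphism identifying $\mathcal{S}(\mathbb{C}^{n})_{H}$ with $\mathcal{S}\big(\bigoplus_{i=1}^{k}M(n_{i},\mathbb{C})\big)$, which is a convex body in its affine hull and hence has dimension $\sum_{i=1}^{k}n_{i}^{2}-1=d'-1$. So $\dim P=\dim\mathcal{S}(\mathbb{C}^{n})_{H}$, as required, and it remains to check that $\sigma$ (equivalently $P$) is stably $\mathcal{S}(\mathbb{C}^{n})_{H}$-complete.

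That $\sigma$ is $\mathcal{S}(\mathbb{C}^{n})_{H}$-complete is immediate: every $H$-invariant state is an $H$-invariant hermitian operator, so $\mathcal{S}(\mathbb{C}^{n})_{H}\subseteq\sigma^{\R}$, and therefore $\pi_{\sigma}$ restricted to $\mathcal{S}(\mathbb{C}^{n})_{H}$ is just the inclusion, which is injective. For stability I would argue directly from the definition rather than via theorem \ref{thmstability}, since $\mathcal{S}(\mathbb{C}^{n})_{H}$ is not a boundaryless submanifold. Put $H(\mathbb{C}^{n})_{H,0}:=\{h\in\sigma^{\R}:\tr h=0\}$. Because all states have trace one and are $H$-invariant, $\mathcal{S}(\mathbb{C}^{n})_{H}\subseteq\frac{1}{n}\id_{\mathbb{C}^{n}}+H(\mathbb{C}^{n})_{H,0}$, so every difference of two points of $\mathcal{S}(\mathbb{C}^{n})_{H}$ lies in the fixed finite-dimensional subspace $H(\mathbb{C}^{n})_{H,0}$. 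Hence, for any operator system $\sigma'$, injectivity of $\pi_{\sigma'}$ on $H(\mathbb{C}^{n})_{H,0}$ already forces injectivity of $\pi_{\sigma'}$ on $\mathcal{S}(\mathbb{C}^{n})_{H}$, i.e. makes $\sigma'$ an $\mathcal{S}(\mathbb{C}^{n})_{H}$-complete operator system. Now $\pi_{\sigma}$ restricted to $H(\mathbb{C}^{n})_{H,0}\subseteq\sigma^{\R}$ is the identity, hence injective; and injectivity of a linear map on a fixed finite-dimensional subspace is an open condition in the operator norm (if $c:=\min\{\|\pi_{\sigma}(v)\|:v\in H(\mathbb{C}^{n})_{H,0},\ \|v\|=1\}>0$ then every $L$ with $\|L-\pi_{\sigma}\|_{op}<c$ is still injective there). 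Since $\sigma'\mapsto\pi_{\sigma'}$ is an isometry for the metric on the space $\Sigma(d')$ of $d'$-dimensional operator systems, there is a neighbourhood of $\sigma$ on which $\pi_{\sigma'}|_{H(\mathbb{C}^{n})_{H,0}}$ stays injective, hence on which every $\sigma'$ is $\mathcal{S}(\mathbb{C}^{n})_{H}$-complete. Thus $\sigma$ is stably $\mathcal{S}(\mathbb{C}^{n})_{H}$-complete, and since closeness of POVMs is equivalent to closeness of their operator systems, so is $P$.

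I do not expect a genuine obstacle: once one takes $\sigma=\mathcal{B}(\mathbb{C}^{n})_{H}$ the projection $\pi_{\sigma}$ acts as the identity on $\mathcal{S}(\mathbb{C}^{n})_{H}$ and the dimension count works out exactly. The one step needing a little care is the stability argument, where the boundary of $\mathcal{S}(\mathbb{C}^{n})_{H}$ prevents a direct appeal to theorem \ref{thmstability}; the remedy is to exploit that $\mathcal{S}(\mathbb{C}^{n})_{H}$ lies in an affine translate of the subspace $H(\mathbb{C}^{n})_{H,0}\subseteq\sigma^{\R}$, which reduces stable injectivity on the state set to the manifestly open condition of linear injectivity on $H(\mathbb{C}^{n})_{H,0}$.
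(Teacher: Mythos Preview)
Your argument is correct and follows the same underlying idea as the paper: choose the operator system $\sigma=\mathcal{B}(\mathbb{C}^{n})_{H}$, so that $\pi_{\sigma}$ acts as the identity on $\mathcal{S}(\mathbb{C}^{n})_{H}$, and then invoke the structure theorem together with the diffeomorphism induced by $\iota$ to match the dimensions. The paper simply declares the proposition to follow ``immediately'' from the diffeomorphism $\iota$, whereas you spell out the dimension count and, more importantly, supply a direct stability argument. Your observation that $\mathcal{S}(\mathbb{C}^{n})_{H}$ is a convex body with boundary---so that Theorem~\ref{thmstability} does not apply verbatim---is well taken, and your workaround (reduce to linear injectivity on the fixed subspace $H(\mathbb{C}^{n})_{H,0}$, which is an open condition since $\pi_{\sigma}$ is the identity there) is the natural fix and is correct.
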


\subsection{Unitarily Invariant States of Fixed Spectrum}\label{sec23}
Now, given some unitary subgroup $H\subseteq U(n)$, we want to identify $G_{H}$-invariant (c.f. \ref{unisym}) states of fixed spectrum $s$\footnote{Here $s$ has to be compatible with the decomposition illustrated in section \ref{unisym}.}, i.e. 
\begin{align*}
\mathcal{S}(\mathbb{C}^{n})_{H,s}:=\{\rho\in\mathcal{S}(\mathbb{C}^{n}):\ U\rho U^{\dagger}=\rho,\forall U\in H,\ \text{spec}(\rho)=s\}.
\end{align*}
Via the map $\iota$ defined in (\ref{eqiota}), there is natural action of $U_{n_{1}}\times...\times U_{n_{k}}$ on $B(\mathbb{C}^{n})_{H}$ coming from its action on $\bigoplus_{i=1}^{k}M(n_{i},\mathbb{C})$. $\mathcal{S}(\mathbb{C}^{n})_{H}$ is then the orbit of this action on the set
\begin{align*}
\mathcal{D}:=\{(D_{M_{1}},...,D_{M_{k}}): [(M_{1})^{c_{1}}\cup...\cup (M_{k})^{c_{k}}]=s \}.
\end{align*}
Here $M_{i}$ is a multiset of order $n_{i}$ and $(M_{i})^{c_{i}}$ is the union of $c_{i}$ copies of $M_{i}$. By the same argument as in the previous section the orbit of some $\rho\in \mathcal{D}$ under $G_{H}$ is diffeomorphic to a product of complex flag manifolds $\prod_{i=1}^{k}U(n_{i})/\prod_{j=1}^{k_{j}}U(n^{i}_{j})$. Since $\mathcal{D}$ is clearly finite, $\mathcal{S}(\mathbb{C}^{n})_{H,s}$ is a disjoint union of products of complex flag manifolds. Thus, it is enough to look at one of these components at a time to get non-immersion results.

For some component, let $m_{i}$ be the number associated to the $i$-th factor in the product, that is constructed just like the number $m$ for \ref{propflag}.
\begin{proposition}\label{propflagprod}
The product of complex flag manifolds $\prod_{i=1}^{k}U(n_{i})/\prod_{j=1}^{k_{j}}U(n^{i}_{j})$ cannot be immersed in Euclidean Space of dimension $\sum_{i=1}^{k}(4m_{i}(n-m_{i})-2\beta(n,m_{i}))-1$ and it cannot be embedded in Euclidean space of dimension $\sum_{i=1}^{k}4m_{i}(n-m_{i})-2\beta(n,m_{i})$.
\end{proposition}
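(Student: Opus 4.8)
The plan is to reduce Proposition \ref{propflagprod} to the single-factor case, Proposition \ref{propflag}, by exploiting the product structure together with the additivity of the relevant dimension counts. The key observation is that a non-immersion result for a factor lifts to a non-immersion result for the product: if $M = \prod_{i=1}^{k} F_i$ and one of the factors, say $F_{i_0}$, cannot be immersed in $\mathbb{R}^{N}$, then $M$ cannot be immersed in $\mathbb{R}^{N + \sum_{i \neq i_0} \dim F_i}$. Indeed, an immersion of $M$ into $\mathbb{R}^{N + \sum_{i\neq i_0}\dim F_i}$, restricted to a slice $\{p_i\}_{i\neq i_0} \times F_{i_0}$, would give an immersion of $F_{i_0}$ into that Euclidean space; composing with a generic linear projection onto $\mathbb{R}^{N + \dim F_{i_0}}$ (or invoking Whitney, since $N \geq 2\dim F_{i_0}$ in the cases at hand) still yields an immersion, contradicting the bound from Proposition \ref{propflag}. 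The analogous statement holds for embeddings.

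First I would recall that each factor $U(n_i)/\prod_j U(n^i_j)$ is, by Proposition \ref{propflag} applied with the subset $K$ determining $m_i$, not immersible in Euclidean space of dimension $4m_i(n_i-m_i) - 2\beta(n_i,m_i) - 1$. Here one must be careful: Walgenbach's bound for the $i$-th factor is naturally phrased in terms of $n_i$ and $m_i$, giving $4m_i(n_i - m_i) - 2\beta(n_i,m_i) - 1$, whereas the proposition as stated uses $4m_i(n - m_i)$. So the second step is to check that the stated expression is a \emph{lower} bound is still valid — i.e. that $4m_i(n-m_i) \geq 4m_i(n_i - m_i)$, which holds since $n \geq n_i$; thus the weaker (larger) non-immersion dimension claimed in the proposition follows a fortiori from the sharper single-factor statement. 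Here $m_i$ should be read, consistently with the text, as $\sum_{i\in K_i} n^i_j$ for an appropriate index subset within the $i$-th factor.

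Next I would assemble the factors. Suppose for contradiction that $\prod_{i=1}^k F_i$ immerses in $\mathbb{R}^{D}$ with $D = \sum_{i=1}^k \big(4m_i(n-m_i) - 2\beta(n,m_i)\big) - 1$. Restricting to the slice through a point in all factors except the first shows $F_1$ immerses in $\mathbb{R}^D$; since $D \geq \dim F_1 + \big(4m_1(n-m_1) - 2\beta(n,m_1) - 1\big)$ — here one uses that each summand $4m_i(n-m_i) - 2\beta(n,m_i)$ is at least $\dim F_i$, which is exactly the content of Walgenbach's bound being a genuine obstruction above the dimension — a transversality/general-position argument (or iterated Whitney projection) produces an immersion of $F_1$ into $\mathbb{R}^{4m_1(n-m_1) - 2\beta(n,m_1) - 1}$, contradicting Proposition \ref{propflag}. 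The embedding statement is handled identically, using the embedding half of Proposition \ref{propflag} and the fact that a slice of an embedding is an embedding, followed by a generic projection that preserves injectivity and immersivity down to the required dimension.

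The main obstacle is the projection/general-position step: one needs that an immersion (resp. embedding) of a compact manifold $F_1$ into a high-dimensional Euclidean space can be pushed down, by composing with a linear projection, to the target dimension $4m_1(n-m_1)-2\beta(n,m_1)-1$ without destroying the immersion (resp. embedding) property. This is exactly the classical Whitney projection argument, valid as long as the target dimension is at least $2\dim F_1$ (for embeddings) or $2\dim F_1 - 1$ (for immersions); verifying that the claimed dimensions satisfy this inequality — equivalently that $4m_i(n-m_i) - 2\beta(n,m_i)$ comfortably exceeds $2\dim F_i$ in all relevant configurations, which follows from the fact that $\dim U(n_i)/\prod_j U(n^i_j) \leq 2m_i(n_i - m_i) \leq 2m_i(n-m_i)$ together with a bound on $\beta$ — is the one place where a short but genuine computation is required. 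Everything else is bookkeeping with the combinatorial functions $\alpha, \alpha_1, \beta$ and the product structure.
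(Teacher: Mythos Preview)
Your slice-and-project strategy has a genuine gap, and it is exactly at the step you flag as ``the one place where a short but genuine computation is required.'' For the Whitney projection argument to push an immersion of $F_1$ down to $\mathbb{R}^{b_1}$ with $b_1 := 4m_1(n_1-m_1) - 2\beta(n_1,m_1) - 1$, you need $b_1 \geq 2\dim F_1 - 1$. But this inequality is precisely what \emph{cannot} hold for a non-vacuous bound: by Whitney's immersion theorem $F_1$ does immerse in $\mathbb{R}^{2\dim F_1 -1}$, so any genuine non-immersion statement must have $b_1 < 2\dim F_1 - 1$. Your supporting computation is in fact the wrong way round: since $\dim F_i = 2\sum_{j<l} n^i_j n^i_l$ while $2m_i(n_i - m_i)$ counts only the cross terms between $K_i$ and its complement, one has $2m_i(n_i - m_i) \leq \dim F_i$, hence $4m_i(n_i-m_i) - 2\beta(n_i,m_i) \leq 2\dim F_i$. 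Whitney projection therefore stops strictly above $b_1$ and yields no contradiction.

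There is a second, independent problem: even granting your ``key observation,'' slicing and projecting would at best show that the product fails to immerse in dimension $b_{i_0} + \sum_{i\neq i_0}\dim F_i$, not in dimension $\sum_i b_i + (k-1)$; the latter is strictly larger whenever some $b_i > \dim F_i$, which is exactly the interesting case. Non-immersion bounds of this type simply do not add by elementary slice arguments; one has to show that the underlying \emph{obstruction} is multiplicative. The paper does exactly this: Walgenbach's result rests on Mayer's theorem, which produces the non-immersion dimension as $-2\nu_2(H(\tfrac12))-1$ for a Hilbert polynomial $H_{X,z,d}(t)$ built from $\hat{A}(X)$, a class $z\in\text{ch}(X)$, and $d\in H^2(X;\mathbb{Z})$. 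The paper proves the short lemma $H_{X_1\times X_2,z,d}(t)=H_{X_1,z_1,d_1}(t)\cdot H_{X_2,z_2,d_2}(t)$ for the natural product data, and then $\nu_2(ab)=\nu_2(a)+\nu_2(b)$ gives the additive bound.
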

The proof of this result can be found in appendix \ref{appendixb}. Of cource, \ref{upflag} also transfers to this situation and gives upper bounds on the dimension of stably $\mathcal{S}(\mathbb{C}^{n})_{H,s}$-complete POVMs.

\subsection{Bob-Unitary Orbit}\label{sec24}
Let $\alpha\in\mathcal{H}_{A}\otimes\mathcal{H}_{B}$, $\langle\alpha|\alpha\rangle= 1$. In this section we investigate pure states of the form
\begin{align*}
\mathcal{S}_{B}(\alpha):=\{|\beta\rangle\langle\beta|\in\mathcal{S}(\mathcal{H}_{A}\otimes\mathcal{H}_{B}):\beta=(\id\otimes U)\alpha, U\in U(\mathcal{H}_{B})\}.
\end{align*}
Let $\{e_{1},...,e_{\dim \mathcal{H}_{A}}\}$, $\{f_{1},...,f_{\dim \mathcal{H}_{B}}\}$ be orthonormal bases of $\mathcal{H}_{A}$ respectively $\mathcal{H}_{B}$ such that
\begin{align*}
\alpha=\sum_{i=1}^{r}\alpha_{i} e_{i}\otimes f_{i}
\end{align*}
is a Schmidt decomposition, where $r$ is the Schmidt rank of $\alpha$. Then, $\mathcal{S}_{B}(\alpha)$ is diffeomorphic to the projective Stiefel manifold $PW_{n,r}$. In order to see this, note that $(\id\otimes U(\mathcal{H}_{B}))\alpha$ is diffeomorphic to the complex Stiefel manifolds $W_{n,r}:=\{m\in\mathbb{C}^{d\times r}:m^{*}\cdot m=\id\}$ via
\begin{align*}
i:&(\id\otimes U(\mathcal{B}))\alpha\to W_{n,k},\\
&M_{i,j}(\sum_{i=1}^{r}\alpha_{i} e_{i}\otimes Uf_{i}):=\langle Uf_{i}|f_{j}\rangle
\end{align*}
Factoring both sides over the free action of the cyclic group $S^{1}\subseteq\mathbb{C}$, $m\mapsto z\cdot m$, then yields the desired map \cite{warner1971foundations}.

In order to state the main result of this section we introduce two functions,
\begin{definition} Let $n,k\in\mathbb{N}$ and $n\ge k$.
\begin{align*}
&\text{1.}\ N(n,k):=\min\{ n-k<i\leq n:\ {n \choose i}\pmod{2}\equiv 1\},\\
&\text{2.}\ \sigma(n,k):=2\cdot\max\{0\leq i<N(n,k):\ {nk+i-1 \choose i}\pmod{2}\equiv 1\}.
\end{align*}
\end{definition}
\begin{proposition}\label{propbob}
Let $\alpha\in\mathcal{H}_{A}\otimes\mathcal{H}_{B},\ \langle\alpha|\alpha\rangle= 1$, with Schmidt rank $k$ and $n=\dim \mathcal{H}_{B}$. Then $\mathcal{S}_{B}(\alpha)$ cannot be immersed in Euclidean space of dimension $(2n-k)k-1+\sigma(n,k)$ and cannot be embedded in Euclidean space of dimension 
$(2n-k)k-1+\sigma(n,k)+1$.
\end{proposition}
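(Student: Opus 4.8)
The plan is to reduce the statement to a known non-immersion/non-embedding result for projective Stiefel manifolds $PW_{n,k}$, since the paper has already established that $\mathcal{S}_B(\alpha)$ is diffeomorphic to $PW_{n,k}$ (with $k$ the Schmidt rank, $n=\dim\mathcal{H}_B$). So everything hinges on citing or re-deriving the sharp non-immersion theorem for $PW_{n,k}$ in terms of the two arithmetic functions $N(n,k)$ and $\sigma(n,k)$. First I would recall that $\dim PW_{n,k} = \dim W_{n,k} - 1 = (2n-k)k - 1$ (the real dimension of the complex Stiefel manifold $W_{n,k}$ is $k(2n-k)$, and quotienting by the free $S^1$-action drops it by one). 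Thus the claimed bound says $PW_{n,k}$ does not immerse in $\R^{\dim PW_{n,k} + \sigma(n,k)}$ and does not embed in $\R^{\dim PW_{n,k} + \sigma(n,k) + 1}$.

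The core of the argument is a characteristic-class obstruction. The standard route (going back to work of Gitler--Handel, Lam, Antoniano--Gitler--Ucci and others on projective Stiefel manifolds) is: identify the mod-2 cohomology of $PW_{n,k}$, in particular the structure of its tangent bundle and the relevant line bundle $\xi$ associated to the $S^1$-quotient; compute the total Stiefel--Whitney class (or, for the immersion bound, the dual/normal class $\overline{w}$) and locate the top nonzero term. The functions $N(n,k)$ and $\sigma(n,k)$ are precisely designed to read off that top term: $N(n,k)$ picks out the smallest $i$ in the range $n-k < i \le n$ with $\binom{n}{i}$ odd (this governs where the "interesting" part of $H^*(PW_{n,k};\Z/2)$ sits, coming from the cohomology of $\C P^{n-1}$ restricted along the bundle projection), and $\sigma(n,k)$ doubles the largest $i < N(n,k)$ with $\binom{nk+i-1}{i}$ odd — this is the familiar Stong-type formula for the dual Stiefel--Whitney class of a multiple of the canonical line bundle, $\overline{w}(\,k\gamma\,)$ truncated appropriately. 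I would state the cohomology and the bundle isomorphism $T(PW_{n,k}) \oplus \R^? \cong$ (sum of copies of $\xi$) $\oplus$ (pullback of $T\C P^{n-1}$-type bundle), then compute $\overline{w}$, find its top nonzero degree is $\sigma(n,k)$, and invoke the classical fact that a closed manifold $M^d$ with $\overline{w}_j(M) \ne 0$ cannot immerse in $\R^{d+j-1}$ and, with the analogous result for the twisted/normal class, cannot embed in $\R^{d+j}$ — here a slightly sharper embedding obstruction (e.g. via the mod-2 Euler class of the normal bundle, as in the Hopf-type argument for the extra "$+1$") gives the stated embedding bound $d + \sigma(n,k) + 1$.

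I expect the main obstacle to be the cohomology bookkeeping: getting the ring structure of $H^*(PW_{n,k};\Z/2)$ and the Stiefel--Whitney class of the tangent bundle exactly right, including the correct truncation so that the binomial-coefficient conditions in $N$ and $\sigma$ come out matching the definitions verbatim. The fibration $W_{n,k} \to PW_{n,k} \to $ (or the fibration $S^{2n-1} \to$ something) together with the Gysin sequence for the $S^1$-bundle $W_{n,k} \to PW_{n,k}$ is the tool, but the edge cases (when $\binom{n}{i}$ is odd near the top, when $k$ is close to $n$) need care. Given that this is a known result about projective Stiefel manifolds, I would actually structure the final write-up as: (i) recall $\mathcal{S}_B(\alpha) \cong PW_{n,k}$ from the preceding discussion; (ii) cite the non-immersion and non-embedding theorem for $PW_{n,k}$ from the topology literature, stated in the $N$-$\sigma$ form; (iii) substitute $\dim PW_{n,k} = (2n-k)k-1$. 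The genuinely new content is only the translation of the tomography question into this topological one, which the paper has already done, so the proof of this proposition can be short and citation-driven, with the characteristic-class computation either sketched or deferred to the cited sources and the appendix.
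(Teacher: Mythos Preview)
Your proposal is correct and follows essentially the same route as the paper: reduce to $PW_{n,k}$, compute the dual Stiefel--Whitney class via the line bundle coming from the $S^1$-quotient, and read off the top nonvanishing degree as $\sigma(n,k)$ using the mod-$2$ cohomology truncation at $x^{N(n,k)}$. The only refinement worth noting is that the stable tangent bundle is cleaner than you suggest: the paper uses (via \cite{astey2000parallelizability}) that $TPW_{n,k}$ is stably isomorphic to $nk\,L^*$ with no additional ``$T\mathbb{C}P^{n-1}$-type'' summand, so $\overline{w}(PW_{n,k}) = (1+x)^{-nk}$ directly, and the binomial formula for $\sigma(n,k)$ falls out immediately from the ring $\mathbb{Z}_2[x]/(x^{N})$ (citing \cite{astey1999cohomology}).
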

The proof of this result is very similar to \cite{barufatti1994obstructions} and can be found in appendix \ref{appendixa}.  This non-immersion result is obtained deploying a standard approach based on the dual Stiefel-Whitney class of the tangent bundle \cite{milnor1974characteristic}.\\

For $k=1$ the complex projective Stiefel manifold is just the complex Projective space, so in this chase we can compare the result obtained here to the upper bounds of Milgram \cite{milgram1967immersing}, which are known to be close to optimal. Table \ref{tabwn1} shows these bounds for some dimensions. For $n=2^{k}$, $k\in\mathbb{N}$, the dual Stiefel-Whitney classes give no obstructions, whereas they essentially equal Milgram's result in \cite{milgram1967immersing} for $n=2^{k}+1$, $k\in\mathbb{N}$, and hence are close to optimal in this case. 
 
  \begin{table}[h]
  \setlength{\tabcolsep}{3mm}
 \begin{tabular}{lccccccccccccccc}
 2 & 6 & 6 & 14 & 14 & 14 & 14 & 30 & 30 & 30 & 30 & 30 & 30 & 30 & 30 & 62 \\
 2 & 6 & 8 & 14 & 16 & 21 & 22 & 30 & 32 & 37 & 38 & 45 & 46 & 52 & 52 & 62 
 \end{tabular}
 \caption{Lower bounds on immersions of $W_{n,1}\simeq P\mathbb{C}^{n}$ for $n=2,...,17$. In the first row the result is obtained from the dual Stiefel-Whitney classes in the second row the results of \cite{milgram1967immersing} are presented.}
 \label{tab:wn1}
 \end{table}
 
 \begin{table}[h]
  \setlength{\tabcolsep}{3mm}
 \begin{tabular}{lcccccccccccc}
  n\textbackslash r   & 2&3&4&5&6&7&8&9&10&11&12&13\\
  2 & 2  & 3 \\
  3 & 6  & 7 & 8\\ 
  4 & 6  & 11 & 14 & 15\\
  5 & 14 & 19 & 22 & 23 & 24\\
  6 & 14 & 27 & 30 & 31 & 34  & 35\\
  7 & 14 & 27 & 38 & 39 & 46  & 47  & 48\\
  8 & 14 & 27 & 38 & 47 & 54  & 59  & 62  & 63\\
  9 & 30 & 43 & 54 & 63 & 70  & 75  & 78  & 79  & 80\\
 10 & 30 & 51 & 54 & 63 & 86  & 91  & 94  & 95  & 98  & 99\\
 11 & 30 & 55 & 72 & 79 & 86  & 107 & 110 & 111 & 118 & 119 & 120\\
 12 & 30 & 59 & 78 & 79 & 102 & 107 & 126 & 127 & 134 & 139 & 142 & 143\\
 \end{tabular}
 \caption{Lower bounds on immersion dimension of $PW_{n,r}$ obtained from dual Stiefel-Whitney classes \ref{propbob}.}
\label{tab:wnk}
\end{table}

In \cite{astey2000parallelizability}, it is shown that $PW_{n,n}$ and $PW_{n,n-1}$ is parallelizable for $n\neq2$ and thus can be immersed in Euclidean space of codimension one by a result of Hirsch \cite{hirsch1959immersions}. For $PW_{4,k}$ and $PW_{8,k}$, there are no obstructions because the dual Stiefel-Whitney classes vanish for $nk=q2^{r}$, $q,r\in\mathbb{Z}$ and $N(n,k)<2^{r}$ \cite{barufatti1994obstructions}.\\ 
The dual Stiefel-Whitney classes do not generally give good obstructions, but can be supplemented by other methods. In Table \ref{tab:wnk} these bounds are presented for some explicit scenarios. Another approach to the non-immersion problem is due to \cite{atiyah1988immersions}. In a similar vein, another method is given to obtain non-immersion results, with the exterior powers $\gamma_{i}$ of $KO(X)$ playing the role of the Stiefel-Whitney classes. Both of these methods are worked out and compared in \cite{barufatti1994obstructions}.\\

Next, we give upper bounds on the dimension of an $\mathcal{S}_{B}(\alpha)$-embedding, presenting two different approaches.

The first approach is based on the upper bounds obtained for states of fixed spectrum. The problem is split into determining the minor obtained by tracing over $\mathcal{H}_{A}$ and afterwards determining the relative phases.

Before stating the upper bounds, let us first prove the following lemma which will be useful later on.
\begin{lemma}\label{lemtr}
Let $\alpha:=\sum_{i=1}^{r} \lambda_{i} e_{i}\otimes f_{i}$, $O\in H(\mathcal{H}_{A})$, $S\in H(\mathcal{H}_{B})$, $U\in U(\mathcal{H}_{B})$ and $P_{\alpha}:=\sum_{i=1}^{r} \lambda_{i}|e_{i}\rangle\langle f_{i}|$. Then $\text{tr}(O\otimes USU^{\dagger} |\alpha\rangle\langle \alpha|)=\text{tr}((P_{\alpha}U)^{\dagger}O^{T}(P_{\alpha}U)S)$.
\end{lemma}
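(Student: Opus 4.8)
The plan is to compute both sides directly in the Schmidt basis and check that they agree term by term. First I would write out the left-hand side using the definition of the trace on $\mathcal{H}_A\otimes\mathcal{H}_B$: since $|\alpha\rangle\langle\alpha| = \sum_{i,j=1}^{r}\lambda_i\lambda_j\, |e_i\rangle\langle e_j|\otimes |f_i\rangle\langle f_j|$, linearity of the trace and the factorization $\text{tr}(X\otimes Y) = \text{tr}(X)\text{tr}(Y)$ give
\begin{align*}
\text{tr}(O\otimes USU^{\dagger}\,|\alpha\rangle\langle\alpha|) = \sum_{i,j=1}^{r}\lambda_i\lambda_j\,\langle e_j|O|e_i\rangle\,\langle f_j|USU^{\dagger}|f_i\rangle.
\end{align*}

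Next I would expand the right-hand side. Using $P_\alpha = \sum_{k=1}^{r}\lambda_k|e_k\rangle\langle f_k|$, one has $P_\alpha U = \sum_k \lambda_k |e_k\rangle\langle U^\dagger f_k|$ and hence $(P_\alpha U)^\dagger = \sum_l \lambda_l |U^\dagger f_l\rangle\langle e_l|$. Multiplying out $(P_\alpha U)^\dagger O^T (P_\alpha U) S$ and taking the trace, the key observation is the bookkeeping identity $\langle e_l|O^T|e_i\rangle = \langle e_i|O|e_l\rangle$ (valid since $O$ is hermitian, so $O^T = \bar O$ and the matrix elements in the real-orthonormal basis behave accordingly — more simply, $\langle e_l | O^T | e_i\rangle$ is by definition the $(l,i)$ entry of $O^T$, i.e. the $(i,l)$ entry of $O$). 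Collecting terms yields exactly the same double sum $\sum_{i,j}\lambda_i\lambda_j\langle e_j|O|e_i\rangle\langle f_j|USU^\dagger|f_i\rangle$, after correctly matching which index of $S$ pairs with which copy of $U^\dagger f$.

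The only subtlety — and the step I'd be most careful about — is the transpose/conjugate bookkeeping: the statement involves $O^T$ rather than $O$, and one must track whether the relevant basis change is a transpose or a hermitian conjugate at each of the four matrix factors, making sure the $e_i$ on the $A$ side and the $U^\dagger f_i$ on the $B$ side get contracted with the correct indices of $O^T$ and $S$. This is purely a matter of writing indices carefully; there is no real analytic content. I would organize the computation so that both sides are reduced to the displayed double sum over $i,j\in\{1,\dots,r\}$, at which point equality is manifest. (Note also that $P_\alpha$ is a map $\mathcal{H}_B\to\mathcal{H}_A$, so $P_\alpha U\colon \mathcal{H}_B\to\mathcal{H}_A$, $O^T$ acts on $\mathcal{H}_A$, and $(P_\alpha U)^\dagger\colon \mathcal{H}_A\to\mathcal{H}_B$, so the composite with $S$ is a well-defined operator on $\mathcal{H}_B$ whose trace makes sense — worth a one-line remark to confirm all the types line up before grinding the indices.)
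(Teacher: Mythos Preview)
Your proposal is correct and is essentially the same computation as the paper's proof: both expand in the Schmidt basis, reduce to the double sum $\sum_{i,j}\lambda_i\lambda_j\langle e_j|O|e_i\rangle\langle f_j|USU^\dagger|f_i\rangle$, and use the defining relation $\langle e_l|O^T|e_k\rangle=\langle e_k|O|e_l\rangle$ for the transpose. The only cosmetic difference is that the paper runs the manipulation as a single chain from the left-hand side to the right-hand side, whereas you compute each side separately and match them at the common double sum; also note that the transpose identity you use is just the definition of $O^T$ in the $\{e_i\}$ basis and does not require $O$ to be hermitian.
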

\begin{proof}
The prove of this is a straightforward computation.
\begin{align*}
\text{tr}(O\otimes USU^{\dagger} |\alpha\rangle\langle \alpha|)&=\langle\alpha|O \otimes USU^{\dagger}|\alpha\rangle
=\sum_{i,j=1}^{r}\lambda_{i}\lambda_{j}\langle e_{i}|\otimes\langle f_{i}|O\otimes USU^{\dagger}|e_{j}\rangle\otimes |f_{j}\rangle\\
&=\sum_{i,j=1}^{r}\lambda_{i}\lambda_{j}\langle e_{i}|O |e_{j}\rangle\langle f_{i}|USU^{\dagger}| f_{j}\rangle
=\sum_{i,j=1}^{r}\lambda_{j}\langle e_{j}|O^{T}|e_{i}\rangle \lambda_{i}\langle f_{i}|USU^{\dagger}|f_{j}\rangle\\
&=\sum_{i,j=1}^{r}\langle f_{i}|f_{j}\rangle\lambda_{j}\langle e_{j}|O^{T}P_{\alpha}USU^{\dagger}| f_{i}\rangle
=\sum_{i=1}^{r}\langle f_{i}|U^{\dagger}P_{\alpha}^{\dagger}O^{T}P_{\alpha}US| f_{i}\rangle\\
&=\text{tr}((P_{\alpha}U)^{\dagger}O^{T}(P_{\alpha}U)S)
\end{align*}

\end{proof}
The following proposition is motivated by a method to embed Lie groups in Euclidean space, introduced in \cite{rees1971some}.
\begin{proposition}\label{up1}
Let $\alpha\in\mathcal{H}_{A}\otimes\mathcal{H}_{B}$ with Schmidt rank $k$ and $n=\dim \mathcal{H}_{B}$. Then, there is a $\mathcal{S}_{B}(\alpha)$-embedding of dimension $4r(n-r)-1+4n-5$ for $r<n/2$ and $n^{2}-1+4n-5$ for $r\geq d/2$.
\end{proposition}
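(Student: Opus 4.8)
The plan is to construct, by hand, a smooth injective immersion of the compact manifold $\mathcal{S}_{B}(\alpha)$ into Euclidean space of the stated dimension whose coordinate functions all have the form $\rho\mapsto\text{tr}(A\rho)$ with $A$ Hermitian on $\mathcal{H}_{A}\otimes\mathcal{H}_{B}$: by compactness such a map is an embedding, and after an affine rescaling of the coordinates (so that each $A$ becomes positive with the $A$'s summing to at most $\id$) and adjoining the identity complement one gets a genuine POVM that is an $\mathcal{S}_{B}(\alpha)$-embedding. Following the splitting indicated above, the coordinates come in two blocks.

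First, the reduced-state block. If $\beta=(\id\otimes U)\alpha$ then $\rho_{B}:=\text{tr}_{A}|\beta\rangle\langle\beta|=U\rho_{B}^{(0)}U^{\dagger}$ with $\rho_{B}^{(0)}:=\text{tr}_{A}|\alpha\rangle\langle\alpha|$ of rank $r$, so the reduced states run exactly over $\mathcal{S}^{n}_{s}$ with $s$ the spectrum of $\rho_{B}^{(0)}$; since $r<n/2$, the largest eigenvalue multiplicity in $s$ is $n-r$. Proposition \ref{upflag} then provides a POVM $\{P^{B}_{1},\dots,P^{B}_{4r(n-r)}\}$ on $\mathcal{H}_{B}$ that is an $\mathcal{S}^{n}_{s}$-embedding, and the effect operators $\id_{A}\otimes P^{B}_{i}$ have expectation $\text{tr}(P^{B}_{i}\rho_{B})$. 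Hence this block of $4r(n-r)-1$ coordinates reconstructs $\rho_{B}$ and is injective and immersive in every direction transverse to the fibres of the smooth surjection $q:\mathcal{S}_{B}(\alpha)\to\mathcal{S}^{n}_{s}$, $|\beta\rangle\langle\beta|\mapsto\rho_{B}$. For $r\geq n/2$ one instead takes the trivial full-information POVM on $\mathcal{H}_{B}$, of dimension $n^{2}-1$, which reconstructs $\rho_{B}$ unconditionally.

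Second, the phase block. Once $\rho_{B}$, and thus its spectral projections and the subspace of $\mathcal{H}_{B}$ supporting $\beta$, is known, the residual ambiguity is precisely the relative phases $\beta$ carries on those eigenspaces, i.e.\ a point in a fibre of $q$. To resolve it I would adjoin $4n-5$ effect operators of the product form $O_{l}\otimes S_{l}$ with $O_{l}\in H(\mathcal{H}_{A})$, $S_{l}\in H(\mathcal{H}_{B})$. By Lemma \ref{lemtr} their expectations on $|\beta\rangle\langle\beta|$ are the bilinear expressions in the partial isometry $P_{\alpha}U$ appearing there, so, read against the frame already fixed by $\rho_{B}$, they record the coherences of $\beta$; the pairs $(O_{l},S_{l})$ are to be chosen in analogy with the pure-state tomography construction of \cite{heinosaari2013quantum} on $\mathbb{C}^{n}$ (which uses exactly $4n-5$ outcomes), transplanted onto the Schmidt support via $P_{\alpha}$ in the spirit of the homogeneous-space embeddings of \cite{rees1971some}, so as to make the total map injective and immersive along the fibre directions of $q$ as well.

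Concatenating the two blocks then yields a smooth injective immersion of $\mathcal{S}_{B}(\alpha)$ into $\mathbb{R}^{(4r(n-r)-1)+(4n-5)}$, respectively into $\mathbb{R}^{(n^{2}-1)+(4n-5)}$ when $r\geq n/2$, hence an embedding, and the POVM conversion above completes the argument. The main obstacle is the phase block: one must verify that a number of coordinates growing only like $4n$, \emph{independently of} $r$, suffices both to separate points of each fibre of $q$ and to kill its tangent directions. This is straightforward when the Schmidt coefficients of $\alpha$ are pairwise distinct, the fibres then being $(r-1)$-tori that a handful of interferometric coordinates $O_{l}\otimes S_{l}$ already embed; when $\alpha$ has a degenerate Schmidt spectrum the fibres are larger, and the reduced-state data from the first block must be used jointly with the phase block rather than trying to embed a fibre on its own. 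Pinning down the explicit choice of the $(O_{l},S_{l})$ and carrying out this joint analysis is where the real work lies.
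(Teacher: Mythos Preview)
Your two-block outline matches the paper's, and your Block~1 is essentially what the paper does. The real gap is Block~2, where you gesture at $4n-5$ product effect operators ``in analogy with the pure-state tomography construction'' and invoke \cite{rees1971some} but never say what the map is. The paper's construction is concrete and uses a \emph{single} operator $O$ on the $A$ side, chosen so that (in the notation of Lemma~\ref{lemtr}) the effective $B$-side operator is the rank-one projector onto $v\oplus 0$ with $v=(1,\dots,1)\in\mathbb{C}^r$. Then $\text{tr}(O\otimes S\,|\beta\rangle\langle\beta|)=\langle U(v\oplus 0)|\,S\,|U(v\oplus 0)\rangle$, so letting $S$ run over the pure-state POVM for $P\mathbb{C}^{n}$ implements the map $\tilde\psi:PW_{n,r}\to P\mathbb{C}^{n}$, $[U]\mapsto[U(v\oplus 0)]$, followed by the $(4n-5)$-dimensional embedding of projective space from \cite{heinosaari2013quantum}. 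A short diagonal-matrix computation then shows that $\tilde\psi$ is injective on each fibre of the Block-1 projection and that its differential kills no fibre-tangent vector; together with Block~1 embedding the base, this yields the immersion. Your Rees instinct is right, but this specific map to $P\mathbb{C}^n$ is the missing ingredient.

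You also mis-diagnose the degenerate-Schmidt-coefficient case. With Block~1 taken as the literal reduced-state map, its coordinates are functions of $\rho_B$ alone and hence constant along each fibre of $q$; no ``joint analysis'' can make them help in the fibre direction. And when all Schmidt coefficients coincide the fibre is $PU(r)$ of dimension $r^2-1$, which for $r$ near $n/2$ already exceeds $4n-5$, so the phase block cannot possibly absorb it either. The paper's remedy is upstream, in Block~1: replace $\id_A$ there by a diagonal $O$ with distinct entries. Via Lemma~\ref{lemtr} the effective $B$-spectrum is then nondegenerate, Block~1 lands in the full flag $U(n)/(U(1)^r\times U(n-r))$ regardless of $\alpha$, the fibre is always the $(r-1)$-torus, and the $P\mathbb{C}^{n}$ trick handles it uniformly. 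This is exactly the ``w.l.o.g.\ $\lambda_i\neq\lambda_j$'' line at the opening of the paper's proof.
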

\begin{proof}
First, note that by lemma \ref{lemtr} we can assume w.l.o.g that $\lambda_{i}\neq\lambda_{j}$ for $1\leq i,j \leq r,\ i\neq j$, because this can always be achieved by choosing $O$ appropriately.

The idea is to take advantage of the natural projection $\pi:PW_{n,r}\to \frac{U(n)}{U(1)^{r}\times U(n-r)}$, which just amounts to choosing $O=\id$ in lemma \ref{lemtr}. More precisely, for $O=\id$ we get
\begin{align*}
P_{\alpha}OP_{\alpha}^{\dagger}=\begin{pmatrix}
\lambda_1 &  &  & &  \\
 & \ddots &  & &  \\
 &  & \lambda_r & &  \\
 &  &  & 0 & \\
 &  &  &   & \ddots
\end{pmatrix}
\end{align*}
and thus we are in the situation discussed in the last section with the isotropy group given by $U(1)^{r}\times U(n-r)$. This means, the projected state can be embedded in dimension $4r(d-r)-1$ using the POVM of \ref{upflag}. Let us call this map $\phi_{1}$.

Let $v\in \mathbb{C}^{r}$ be the vector with a one in every entry and consider the map
\begin{align*}
\psi:U(n)/U(n-r)&\to \mathbb{C}^{n}\\
U&\mapsto U(v\oplus 0).
\end{align*}
This is clearly well defined and also observe that $\psi$ descents to a map $\tilde{\psi}:PW_{n,r}\simeq U(n)/(U(n-r)\times U(1))\to P\mathbb{C}^{n}$. Let $U,V\in U(n)$ with $U\sim V$ in $U(n)/U(1)^{r}\times U(n-r)$, i.e. $U=V(D\oplus\id)(\id\oplus W)$ with $D\in U(1)^{r}$ and $W\in U(n-r)$. Then, $Uv=\lambda Vv$ just has a solution for $U\sim V$ in $U(n)/(U(1)\times U(n-r))$. To see this, note that
\begin{align*}
U(v\oplus 0)&=\lambda V(v\oplus 0)\\
V(D\oplus\id)(\id\oplus W)(v\oplus 0)&=\lambda V(v\oplus 0)\\
(D\oplus\id)(v\oplus 0)&=\lambda (v\oplus 0)\\
Dv&=\lambda v
\end{align*}
and thus $D=\lambda \id$ is the only solution.

Hence, supplementing the embedding above by $\tilde{\psi}$ guarantees injectivity and the only problem left, is embedding $P\mathbb{C}^{n}$.

In terms of lemma \ref{lemtr} $\tilde{\psi}$ corresponds to choosing $P_{\alpha}O^{T}P_{\alpha}^{\dagger}=v^\dagger v\oplus 0$, the projective version of $v\oplus 0$. Then choose $S$ according to the POVM of \cite{heinosaari2013quantum} to obtain an embedding in Euclidean space. Let us call this mapping $\phi_{2}$

The map $\phi:=(\phi_{1},\phi_{2})$ is clearly smooth as well as injective and thus a topological embedding by compactness of $\mathcal{S}_{B}(\alpha)$. 
From lemma \ref{lemtr} it is easy to see that $\frac{d}{dt}|_{t=0}(\id\otimes Ue^{iHt})|\alpha\rangle\langle\alpha| (\id\otimes e^{-iHt}U^{\dagger})$ for $H=h\oplus 0$ and $h\neq \id$ diagonal gives a $n-1$ dimensional subspace $V_{U}$ of the tangent space at $\alpha_{U}:=(\id\otimes U)|\alpha\rangle\langle\alpha| (\id\otimes U^{\dagger})$. $V_{U}$ is clearly in the kernel of $d\phi_{1}|_{\alpha_{U}}$. Thus, by dimensional reasoning, it is enough to see that $d\phi_{2}|_{\alpha_{U}}$ is injective on $V_{U}$. Since the POVM of \cite{heinosaari2013quantum} can identify all rank one matrices, it is enough to see that $h\oplus 0\mapsto [v\dagger v\oplus 0,h\oplus 0]$ is injective for $h\neq \id$ diagonal. This can be easily verified by a direct computation.
\end{proof}
\begin{remark}
This result is best for $r$ close to $n$, so in particular for $PW_{n,n}$, the set of maximally entangled states. For $PW_{n,n}$ we obtain an embedding in Euclidean space of codimension $4n-5$.

Furthermore, note that in the context of quantum process tomography, the result for maximally entangled states can be used to identify a unitary time evolution. Preparing a certain maximally entangled state, the POVM given above can identify unitary processes up to a phase, i.e. with $\mathcal{O}(n^{2})$ measurements.
\end{remark}

The second approach relies on the direct construction of an $\mathcal{S}_{B}(\alpha)$-embedding.
\begin{proposition}\label{up2}
Let $\alpha\in\mathcal{H}_{A}\otimes\mathcal{H}_{B}$ with Schmidt rank $r$ and $n=\dim \mathcal{H}_{B}$. Then, there is a $\mathcal{S}_{B}(\alpha)$-embedding of dimension $2nr+2n-3$.
\end{proposition}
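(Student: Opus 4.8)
\emph{Reduction to the projective Stiefel picture.} As recalled above, a state of the orbit is $|\beta\rangle\langle\beta|$ with $\beta=(\id\otimes U)\alpha=\sum_{i=1}^{r}\lambda_i\,e_i\otimes u_i$, $u_i:=Uf_i$, and it is determined by, and determines, the ordered orthonormal $r$-frame $(u_1,\dots,u_r)$ modulo one global phase, i.e.\ a point of $PW_{n,r}$. By Lemma \ref{lemtr}, a product effect $O\otimes S$ with $O\in H(\mathcal H_A)$, $S\in H(\mathcal H_B)$ acts as
\begin{align*}
\text{tr}\big((O\otimes S)\,|\beta\rangle\langle\beta|\big)=\sum_{a,b=1}^{r}\lambda_a\lambda_b\,O_{ab}\,\langle u_a|S|u_b\rangle ,
\end{align*}
so only the $r\times r$ corner of $O$ is relevant and (after the reweighting of the $\lambda_i$ permitted by Lemma \ref{lemtr}, so that we may assume the $\lambda_i$ pairwise distinct) the information accessible to a separable measurement is exactly $\langle u_a|S|u_b\rangle$ for the chosen pairs $(a,b)$ and operators $S$. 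The whole task then becomes: choose finitely many Hermitian $r\times r$ weights together with a small family of Hermitian $S$'s so that the induced map on $PW_{n,r}$ is a smooth embedding, of total dimension $2nr+2n-3$.

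\emph{The construction.} I would assemble the POVM from two blocks. In the first block the weights are chosen so that the measured operator $\sum_{a,b}\lambda_a\lambda_b O_{ab}|u_a\rangle\langle u_b|$ runs through the rank-one matrix $u_1u_1^{*}$ and the Hermitian parts $u_1u_a^{*}+u_au_1^{*}$, $i(u_1u_a^{*}-u_au_1^{*})$ for $a=2,\dots,r$. Since the column space of $u_1u_a^{*}$ is $[u_1]$, its row space is $[u_a]$, and the matrix itself fixes the relative phase (concretely, once a unit $\hat u_1\in[u_1]$ is chosen one reads off $u_a=\big(\hat u_1^{*}(u_1u_a^{*})\big)^{\dagger}$ up to the common phase of $\hat u_1$), recovering these rank-one matrices recovers the whole frame modulo one global phase. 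To keep the dimension low I do not record these $n\times n$ matrices in full: I fix once and for all a family $S^{(1)},\dots,S^{(L)}$ of Hermitian operators realising a smooth embedding of the associated ``pinned rank-one'' variety --- morally a copy of $\mathbb{CP}^{n-1}$ carrying $[u_1]$ together with, for each further index $a$, a copy of $\mathbb{C}^{n}$ pinned to it --- and one shows that $L=2nr+2n-3$ suffices by a Veronese/Milgram-type argument: once $[u_1]$ has been pinned, each additional column costs only about $2n$ real functionals rather than the naive $\simeq 4n$, while a small second block recovers the overall normalisation and the last phase. Injectivity on $PW_{n,r}$ then follows from the reconstruction just described, and compactness of $\mathcal S_B(\alpha)$ promotes it to a topological embedding.

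\emph{The immersion property.} Finally I would check that the induced map is an immersion, using the description of the tangent space from the proof of Proposition \ref{up1}: $T_{\alpha_U}\mathcal S_B(\alpha)$ is spanned by the derivatives $\frac{d}{dt}|_{t=0}(\id\otimes Ue^{iHt})|\alpha\rangle\langle\alpha|(\id\otimes e^{-iHt}U^{\dagger})$, $H\in H(\mathcal H_B)$. Differentiating the measurement map along such a curve produces an explicit expression in $[H,\,\cdot\,]$, the $u_i$ and the fixed $S^{(k)}$, and injectivity of this differential reduces to a finite-dimensional linear-algebra verification analogous to --- but more involved than --- the ``$h\oplus0\mapsto[v^{\dagger}v\oplus0,\,h\oplus0]$ is injective'' step in the proof of Proposition \ref{up1}.

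\emph{Main obstacle.} The hard part will be the second step: producing and verifying the fixed family $\{S^{(k)}\}$ so that it simultaneously has exactly the stated size, is globally injective on all of $PW_{n,r}$ (the naive affine-chart constructions fail because no linear coordinate of $u_1$ is globally nonvanishing, which forces a genuinely Veronese-type family), and yields an immersion; pinning down the exact constant $2nr+2n-3$ --- in particular accounting for the three dimensions saved by the trace constraint and the phase quotient --- will require careful bookkeeping.
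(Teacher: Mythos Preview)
Your proposal has a genuine gap at precisely the place you flag as the ``main obstacle'': you never construct the family $\{S^{(k)}\}$, and the appeal to a ``Veronese/Milgram-type argument'' is not a proof. The rough bookkeeping you sketch (pin $[u_1]\in\mathbb{CP}^{n-1}$, then spend $\approx 2n$ functionals per remaining column) does not obviously land on the exact constant $2nr+2n-3$; in particular, the embedding dimension of $\mathbb{CP}^{n-1}$ depends on the binary expansion of $n$, so a uniform count of the form $4n-c$ is not available. Without an explicit family and a verification of global injectivity and the immersion property, the argument is a plausible outline rather than a proof.

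The paper's approach is entirely different and much more concrete. After normalising the Schmidt coefficients via Lemma~\ref{lemtr}, it flattens the $r\times n$ frame matrix $P_\alpha U$ into a single vector $(M_1,\dots,M_{nr})$, pads with zeros, and forms the $nr+n-1$ ``autocorrelation'' expressions
\[
G_k(U)=\sum_{i\le k+1-i} M_i(U)\,\overline{M_{k+1-i}(U)},\qquad k=1,\dots,nr+n-1,
\]
each of which corresponds to an explicit effect operator. The dimension count $2nr+2n-3$ then falls out directly (non-hermitian operators contribute two real dimensions). Injectivity is proved by a short induction: since the first row of $P_\alpha U$ is a unit vector, the first nonzero entry $M_m$ has index $m\le n$; assuming $M_j(U)=e^{i\phi}M_j(V)$ for $j\le l$, the equation $G_{m+l}(U)=G_{m+l}(V)$ isolates $M_{l+1}$ and forces the same phase relation. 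The immersion check is the differentiated version of the same induction. There is no separate embedding of $\mathbb{CP}^{n-1}$ and no chart-by-chart argument; the construction is global from the outset and treats all columns of the frame on an equal footing. If you want to salvage your geometric picture, you would need to replace the missing second step with something of comparable explicitness --- and at that point the polynomial/autocorrelation trick is both simpler and sharper.
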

\begin{proof}
First, note that w.l.o.g we can assume $\lambda_{1}=...=\lambda_{r}=1$, as can be easily seen from lemma \ref{lemtr}.

Then, for $O=|e_{i}\rangle\langle e_{j}|, S=|f_{k}\rangle\langle f_{l}|$ we obtain
\begin{align*}
\text{tr}((P_{\alpha}U)^{\dagger}O^{T}(P_{\alpha}U)S)=\langle f_{i}|U|f_{k}\rangle \langle f_{j}|U|f_{l}\rangle^{*},
\end{align*}
so from this point of view any linear combination of such products of elements of $P_{\alpha}U$ determines an operator, that need not be hermitian, and a set of such equations determines an operator system (here we think of $P_{\alpha}U$ as a matrix in the $\{e_{i}\}_{i\in\{1,...r\}},\ \{f_{l}\}_{l\in\{1,...n\}}$ basis). It is worth noting that an equation not corresponding to a non-hermitian operator actually corresponds to two operators in the operator system, namely its hermitian and anti-hermitian part.

Let $M_{n(i-1)+j}(U):=\langle e_{i}|P_{\alpha}U|f_{j}\rangle$ for $i\in\{1,...,r\},\ j\in\{1,...,n\}$ and $M_{k}:=0$ for $k>nr$. For $k\in\{1,...,nr+n-1\}$, define operators $\tilde{G}_{k}$ via the equations
\begin{align*}
G_{k}(U):=\sum_{i=1,i\leq k+1-i}^{n}M_{i}(U)M_{k+1-i}^{*}(U).
\end{align*}
Then, the operator system $\sigma_{G}$ spanned by the $\tilde{G}_{k}$ is an $\mathcal{S}_{B}(\alpha)$-embedding. It is clear that the dimension of $\sigma_{G}$ is $2nr+2n-3$, noting that non-hermitian operators count twice. 

Let $U,V\in U(\mathcal{H}_{B})$. In order to prove injectivity, we have to show that if $G_{k}(U)=G_{k}(V)$, then there is a $\phi\in\mathbb{R}$ such that $P_{\alpha}U=e^{i\phi}P_{\alpha}V$. First, observe that for $M_{1}(U)=...=M_{k}(U)=0$ we have $k\leq n$ because $P_{\alpha}U$ has full rank. Let $m$ be the smallest number such that $M_{m}$ does not vanish. Then the claim is clearly true for all $j<m$. Now, let $l>m$ and assume that the claim holds for all $j\leq l$, then 
\begin{align*}
G_{m+l}(U)&=\sum_{i=1,i\leq m+l+1-i}^{n}M_{i}(U)M_{m+l+1-i}^{*}(U)\\
&=M_{m}(U)M_{l+1}(U)^{*}+\sum_{i=m+1,i\leq m+l+1-i}^{n}M_{i}(U)M_{m+l+1-i}^{*}(U)\\
&=M_{m}(V)e^{i\phi}M_{l+1}(U)^{*}+\sum_{i=m+1,i\leq m+l+1-i}^{n}M_{i}(V)M_{m+l+1-i}^{*}(V)\\
&=G_{m+l}(V),
\end{align*}
thus $e^{-i\phi}M_{l+1}(U)=M_{l+1}(V)$.

To conclude the proof, we need to show that the measurement constructed above is an immersion. For $h\in H(\mathcal{H}_{B})$, $U\in U(\mathcal{H}_{B})$ and and define a curve $\gamma(t):=(\id\otimes e^{iUhU^{\dagger}t}U)|\alpha\rangle\langle\alpha|(\id\otimes U^{\dagger}e^{-iUhU^{\dagger}t})$. The derivative of this curve yields tangent vectors at $(\id\otimes U)|\alpha\rangle\langle\alpha|(\id\otimes U^{\dagger})$ and by lemma \ref{lemtr} an effect operator $O\otimes S$ maps these to
\begin{align*}
&\frac{d}{dt}|_{t=0}\text{tr}((P_{\alpha}e^{iUhU^{\dagger}t}U)^{\dagger}O^{T}(P_{\alpha}e^{-iUhU^{\dagger}t}U)S)\\
&=i\text{tr}(P_{\alpha}^{\dagger}O^{T}P_{\alpha}U[h,S]U^{\dagger}).
\end{align*}
For $k\in\{1,...,nr+n-1\}$, this yields the equations
\begin{align*}
F_{k}(U,h)=\sum_{i=1,i\leq m+l+1-i}^{n}M_{i}(U)M_{k+1-i}^{*}(Uh)-M_{i}(Uh)M_{k+1-i}^{*}(U).
\end{align*}

Observe that $F_{k}(U,h)=F_{k}(U,h+c\id)$ holds for each $c\in\mathbb{C}$. Furthermore, it is easy to see that for every $k\in\{1,...,nr\}$ and every $h\in H(\mathcal{H}_{B})$ there is a $\lambda\in\mathbb{C}$ such that $M_{k}(U(\lambda\id+h))=0$ and thus we can assume w.l.o.g that $M_{k}(Uh)=0$.

Let $m$ be the smallest number such that $M_{m}(U)$ does not vanish and assume w.l.o.g $M_{m}(Uh)=0$. Let $l\in\{1,...,nr+m-1\}$. It is easy to see that the vanishing of these equations for all $i\leq l$ implies that $M_{j}(Uh)=0$ for $j\leq l+1-k$ and thus, we obtain injectivity on a real vector space of dimension $2nr-r^{2}-1$.
\end{proof}

 \begin{table}[h]
 \begin{tabular}{lcccccccccccc}
n\textbackslash r   &                   5 & 9                   & 17                  & 65\\
\\
  5  & 24/24;40;57 \\
  \\
  9  &  64/70;112;105      & 80/80;112;177\\ 
  \\
 17  & 144/166;302;201     & 224/238;352;337     & 288/288;352;609\\
 \\
 65  & 624/742;1454;777    & 1088/1198;2270;1297 & 1920/2014;3518;2337 & 4224/4224;4480;8577\\
 \\
 129 & 1264/1510;2990;1545 & 2240/2478;4830;2577 & 4096/4318;8126;4641 & 12544/12670;17152;17025\\

 \end{tabular}
 \caption{Dimension/Lower bounds on immersion dimension \ref{propbob}; First upper bound on embedding dimension \ref{up1}; Second upper bound on embedding dimension \ref{up2} for $PW_{n,r}$.}
\label{stiefelbounds}
\end{table}

In table \ref{stiefelbounds} both of these methods are compared. It is clear that the embedding in \ref{up2} works best for $k/m\ll 1$ , because this approach does not take the orthogonality of the $f_{i}$ into account. The embedding in \ref{up1} works best for $k/m\sim 1$, because just in this case the projected state can be determined efficiently.

\appendix
\section{Technical appendix}\label{sec3}
\subsection{Proof of Theorem \ref{thmstability}}\label{appendixthmstability}
Before we give the proof, let us first fix some notion.

Let $\mathcal{SO}(H(\mathcal{H}))$ be the orthogonal group on the inner product space $H(\h)$. The generalized Pauli basis together with the identity $\id_{\mathcal{H}}$ gives an identification of $H(\h)\simeq \R^{d^2}$ and the Hilbert-Schmidt inner product induces the standard inner product on $\R^{d^{2}}$. Thus $\mathcal{SO}(H(\mathcal{H}))$ can be identified with $\mathcal{SO}(\R^{d^{2}})$, the standard orthogonal group on $\R^{d^{2}}$. Denote by $\mathcal{SO}(H(\mathcal{H}))_{v}:=\{O\in \mathcal{SO}(H(\mathcal{H})):O v=v\}$ the stabilizer subgroup of $v\in H(\mathcal{H})\simeq \R^{d^2}$. Note that for $O\in\mathcal{SO}(H(\mathcal{H}))_{\id_{\mathcal{H}}}$ and $\sigma\in\Sigma (n)$, we have $O\sigma\in\Sigma (n)$. Thus there is an action of $\mathcal{SO}(H(\mathcal{H}))_{\id_{\mathcal{H}}}$ on $\Sigma(n)$,
\begin{align*}
\Sigma(n)\times\mathcal{SO}(H(\mathcal{H}))_{\id_{\mathcal{H}}}&\to \Sigma(n)\times\Sigma(n)\\
(\sigma,O) &\mapsto (\sigma,O\sigma).
\end{align*}
The geometric intuition of thinking of operator systems as planes in $H(\h)\simeq \R^{d^2}$ is essential for the following proof. Then, for $O\in\mathcal{SO}(H(\mathcal{H}))_{\id_{\mathcal{H}}}$, $O\sigma$ is just a rotated plane and the intuition is that for small rotations these operator systems are close.

\begin{proof}
Let $\sigma\in\Sigma (n)$ be stably $\mathcal{P}$-complete and $\pi_{\sigma}$ be the associated orthogonal projection. Furthermore let $SH(\mathcal{H}):=\{B\in H(\mathcal{H}):\|B\|=1\}$ be the unit sphere in $H(\mathcal{H})$.  Assume by contradiction that $\sigma$ is not a $\mathcal{P}$-embedding, i.e. $\pi_{\sigma}$ is not an immersion. Then, since $d(\pi_{\sigma})_{\rho}=\pi_{\sigma}|_{T_\rho\mathcal{P}}$, there exists a point $\rho\in\mathcal{P}$ and a smooth curve $\gamma:(-1,1)\to\mathcal{P}$ with $\gamma (0)=\rho$ and $v=\dot{\gamma}(0)\in\sigma^{\bot},\ v\in SH(\mathcal{H})$. The idea is that $\gamma(t)\approx \rho+vt$ for small $t$ and to obtain a contradiction we construct for a point $\rho^{\prime}=\gamma(t^{\prime})\approx \rho+vt^{\prime}$ an operator system $\sigma^{\prime}$ with $\pi_{\sigma^{\prime}}(\rho^{\prime}-\rho)=0$. This procedure is presented in figure \ref{fig2}.
\begin{figure}[ht]
\includegraphics[width=7cm]{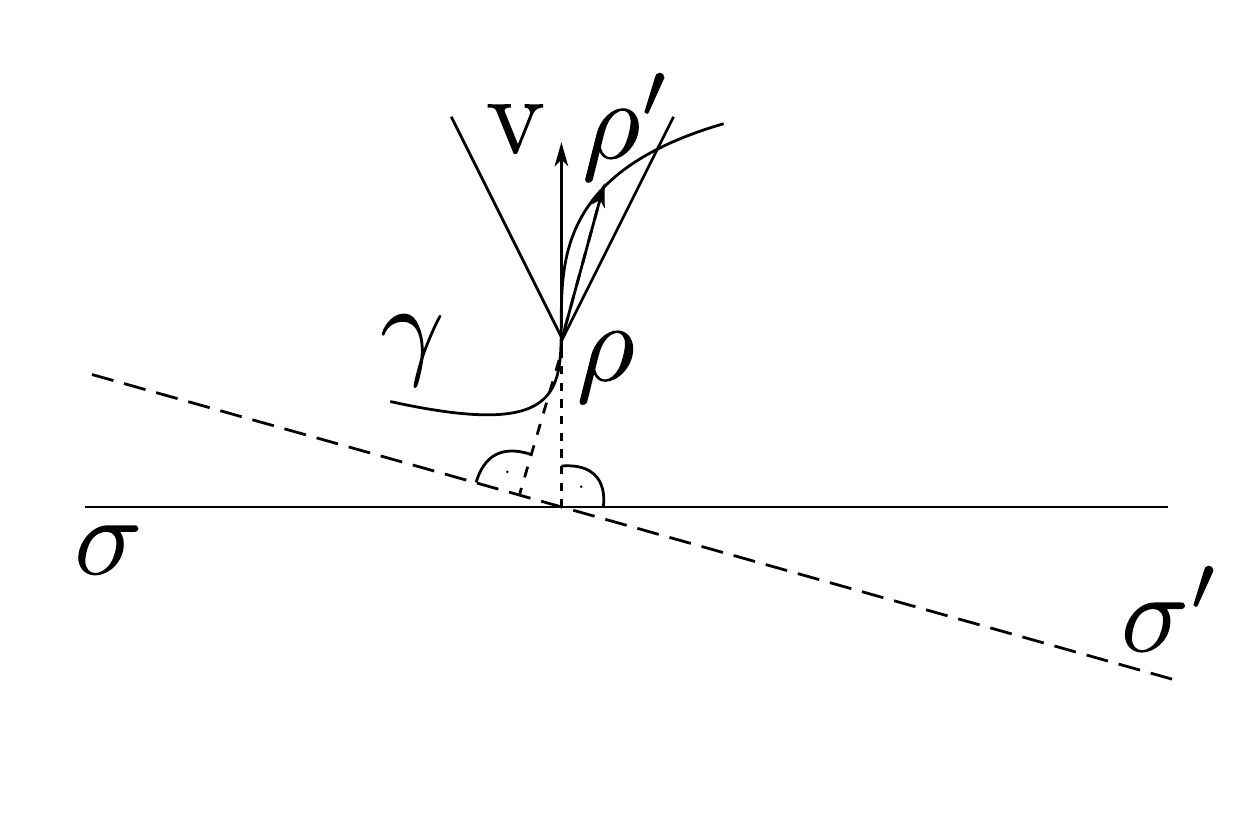}

\caption{\label{fig2} This figure shows the curve $\gamma$ with $\gamma(0)=\rho$, $\gamma(t^{\prime})=\rho^{\prime}$ and $\frac{d}{dt}\gamma(0)=v\in \sigma^{\bot}$ together with the operator system $\sigma^{\prime}$ that is constructed such that $\rho^{\prime}-\rho\in(\sigma^{\prime})^{\bot}$.}
\end{figure}

More precisely, we prove that for each $\delta>0$ there is a $t\in(0,1)$ such that $\gamma(t)\in V_{\rho}^{\delta}:=\{\rho+\lambda\cdot Ov:\ \lambda>0,\ O\in\mathcal{SO}(H(\mathcal{H})),\ \|\id-O\|_{op}<\delta\}$.

First, we prove that $V_{\rho}^{\delta}$ is open. Note that the left action
\begin{align*}
\mathcal{SO}(H(\mathcal{H}))\times SH(\mathcal{H})&\to SH(\mathcal{H})\times SH(\mathcal{H})\\
(O,v)&\mapsto(Ov,v)
\end{align*}
is smooth and transitive. Thus, the orbit map $\beta_{v}: \mathcal{SO}(H(\mathcal{H}))\to SH(\mathcal{H}),\ O\mapsto Ov$ is smooth and factors over the natural projection $\pi:\mathcal{SO}(H(\mathcal{H}))\to \mathcal{SO}(H(\mathcal{H}))/\mathcal{SO}(H(\mathcal{H}))_{v}$ (theorem 3.62 of \cite{warner1971foundations}), i.e. $\beta_v=\tilde{\beta}_{v}\circ\pi$ with $\tilde{\beta}_{v}$ a diffeomorphism. In particular $\beta_v$ is open because $\pi$ is open \footnote{For an open set $A\subseteq \mathcal{SO}(H(\mathcal{H}))$ we find $\pi^{-1}(\pi(A))=\mathcal{SO}(H(\mathcal{H}))_{v}\cdot A=\bigcup_{O\in\mathcal{SO}(H(\mathcal{H}))_{v}}O(A)$. So $\pi^{-1}(\pi(A))$ is open for any open set $A\subseteq \mathcal{SO}(H(\mathcal{H}))$ and thus $\pi$ is open.}.

Since $\beta_v$ is open there is an $\eta>0$ such that $\emptyset\neq B_{\rho+v}(\eta)\cap B_{\rho}(1)\subseteq V_{\rho}^{\delta}$. By possibly shrinking $\eta$ we can even assume that $\overline{B}_{\rho+v}(\eta)\subseteq V_{\rho}^{\delta}$ because of the conic structure of $V_{\rho}^{\delta}$. It follows that $B_{\rho+sv}(s\eta)\subseteq V_{\rho}^{\delta}$ for $s>0$.

Then,
\begin{align*}
\frac{\|\gamma(t)-(\rho+tv)\|}{t}=\frac{\|\gamma(t)-\gamma(0)-tv\|}{t}\to 0\ \text{as}\ t\to 0,
\end{align*}
whereas we find for the distance $d(\rho+tv,\partial B_{\rho+tv}(t\eta))/t:=\inf_{\rho'\in\partial B_{\rho+tv}(t\eta)}\|\rho-\rho'\|/t=\eta$. So by continuity of the norm there is a $t>0$ such that $\gamma((0,t))\subseteq V_{\rho}^{\delta}$.
\\

But then, for every $\delta>0$, there is an $O\in\mathcal{SO}(H(\mathcal{H}))$, a $\lambda>0$ and a $t>0$ with
\begin{align*} 
d(\sigma,O\sigma)&=\|\pi_{\sigma}-\pi_{O\sigma}\|_{op}\\
&=\|\pi_{\sigma}-O\pi_{\sigma}O^{-1}\|_{op}\\
&\leq \|\pi_{\sigma}-\pi_{\sigma}O^{-1}\|_{op}+\|\pi_{\sigma}-O\pi_{\sigma}\|_{op}\\
&\leq 2\|\id-O\|_{op}<2\delta
\end{align*}
such that $\gamma(t)-\gamma(0)=\lambda Ov\neq 0$. But then, $\pi_{O\sigma}(\gamma(t)-\gamma(0))=O\pi_{\sigma}(O^{-1}(\gamma(t)-\gamma(0)))=\lambda O\pi_{\sigma}(v)=0$ by assumption on $v$. Also note that $\langle\id_{\mathcal{H}},\gamma (t)-\gamma (0)\rangle:=tr[\id_{\mathcal{H}}\left(\gamma (t)-\gamma (0)\right)]=0$ and $\langle\id,T_{\rho}\mathcal{P}\rangle=0$ and thus we can choose $O\in\mathcal{SO}(H(\mathcal{H}))_{\id_{\mathcal{H}}}$. So $O\sigma$ is an operator system but it is not $\mathcal{P}$-complete, contradicting the stability of $\sigma$.\\

Conversely, suppose $\sigma$ is a $\mathcal{P}$-embedding. \ref{linap} states, that there is an $\epsilon>0$ such that every $\sigma^{\prime}\in\Sigma (n)$ with $\sup_{B\in H(\mathcal{H}),\ \|B\|\leq 1}\|\pi_{\sigma}(B)-\pi_{\sigma^{\prime}}(B)\|<\epsilon$
is a $\mathcal{P}$-embedding and thus in particular $\mathcal{P}$-complete..
\end{proof}

\subsection{Proof of Lemma \ref{lemtangentapp}}\label{appendixlemtangentapp}
The following proof uses geometric concepts and is based on the identification of the tangent spaces with planes in $H(\h)$. 
\begin{proof}
$\pi^{T}$ is smooth, as can be easily seen in local coordinates. The mapping
\begin{align*}
\psi: \mathcal{P}\times\mathcal{P}&\to \R \\
	(\rho^{\prime},\rho)&\mapsto \|\pi^{T}_{\rho_{\prime}}-\pi^{T}_{\rho}\|_{op}
\end{align*}
is clearly continuous as a composition of continuous mappings and thus, for every $\eta>0$, there is an open neighbourhood $N_{\rho_{0}}$ of $\rho_{0}\in\mathcal{P}$ such that $\psi(\rho,\rho_{0})<\eta/4$ for all $\rho\in N_{\rho_{0}}$. Let $\nu_0>0$ and let $B_{5\nu_0}(\rho_{0})$ be the open ball of radius $5\nu_0$ around $\rho_{0}$, such that $B_{5\nu_0}(\rho_{0})\cap\mathcal{P}$ is contained in $N_{\rho_{0}}$.

Let $\rho\in B_{\nu_0}(\rho_{0})\cap\mathcal{P}$. Then, for all $\tilde{\rho}\in B_{4\nu_0}(\rho)\cap\mathcal{P}$, we find
\begin{align}
\psi(\rho,\tilde{\rho})<\psi(\rho,\rho_{0})+\psi(\tilde{\rho},\rho_{0})<\eta/2.
\end{align}
Let $\rho^{\prime}\in \partial B_{\epsilon}(\rho)\cap\mathcal{P}$, $0<\epsilon<\nu_0$. Furthermore, let $\gamma:[0,\lambda]\to\mathcal{P}\subseteq \mathcal{S}(\mathcal{H})$ be a geodesic that connects $\rho$ and $\rho^{\prime}$ with $\gamma(0)=\rho$ and $\frac{d}{dt}\gamma(t)|_{t=0}=v, \|v\|=1$. Since $(\rho,v)\mapsto\frac{d^2}{dt^2}\gamma(t)|_{t=0}=\frac{d^2}{dt^2}\exp(\rho,vt)|_{t=0}$ is a smooth function from the compact set $\mathcal{P}\times S^{\dim\mathcal{P}-1}$ to $H(\mathcal{H})$, there is $k\geq 0$ such that $k:=\max_{(\rho,v)\in\mathcal{P}\times S^{\dim\mathcal{P}-1}}\|\frac{d^2}{dt^2}exp(\rho,vt)|_{t=0}\|$. It follows from the geodesic equation
\begin{align*}
\pi^{T}_{\gamma(t)}\left(\frac{d^2}{dt^2}\gamma(t)\right)&=0 \\
\pi^{T}_{\gamma(0)}\left(\frac{d^2}{dt^2}\gamma(t)\right)+(\pi^{T}_{\gamma(t)}-\pi^{T}_{\gamma(0)})\left(\frac{d^2}{dt^2}\gamma(t)\right)&=0 \\
\pi^{T}_{\gamma(0)}\left(\frac{d}{dt}\gamma(t)\right)+\int_{0}^{t}dt^{\prime}(\pi^{T}_{\gamma(t^{\prime})}-\pi^{T}_{\gamma(0)})\left(\frac{d^2}{{dt^{\prime}}^2}\gamma(t^{\prime})\right)&=v\\
\pi^{T}_{\gamma(0)}\left(\gamma(t)\right)+\int_{0}^{t}dt^{\prime}\int_{0}^{t^{\prime}}dt^{\prime\prime}(\pi^{T}_{\gamma(t^{\prime\prime})}-\pi^{T}_{\gamma(0)})\left(\frac{d^2}{{dt^{\prime\prime}}^2}\gamma(t^{\prime\prime})\right)&=vt+\pi^{T}_{\gamma(0)}\left(\rho\right).
\end{align*}
However, for $t\in[0,4\nu_0]$,
\begin{align*}
&\left\|\int_{0}^{t}dt^{\prime}(\pi^{T}_{\gamma(t^{\prime})}-\pi^{T}_{\gamma(0)})\left(\frac{d^2}{{dt^{\prime}}^2}\gamma(t^{\prime})\right)\right\|\\
\leq&\int_{0}^{t}dt^{\prime}\left\|(\pi^{T}_{\gamma(t^{\prime})}-\pi^{T}_{\gamma(0)})\left(\frac{d^2}{{dt^{\prime}}^2}\gamma(t^{\prime})\right)\right\|\\
<&\int_{0}^{t}dt^{\prime}k\eta/2=kt\eta/2\leq 2 \eta k \nu_0.
\end{align*}
as well as
\begin{align*}
&\left\|\int_{0}^{t}dt^{\prime}\int_{0}^{t^{\prime}}dt^{\prime\prime}(\pi^{T}_{\gamma(t^{\prime\prime})}-\pi^{T}_{\gamma(0)})\left(\frac{d^2}{{dt^{\prime\prime}}^2}\gamma(t^{\prime\prime})\right)\right\|\\
\leq&\int_{0}^{t}dt^{\prime}\int_{0}^{t^{\prime}}dt^{\prime\prime}\left\|(\pi^{T}_{\gamma(t^{\prime\prime})}-\pi^{T}_{\gamma(0)})\left(\frac{d^2}{{dt^{\prime\prime}}^2}\gamma(t^{\prime\prime})\right)\right\|\\
<&\int_{0}^{t}dt^{\prime}\int_{0}^{t^{\prime}}dt^{\prime\prime}k\eta/2= k t^{2}\eta/4\leq 4 \eta k (\nu_0)^{2}.
\end{align*}
Thus, we find
\begin{align*}
\left\|\pi^{T}_{\gamma(0)}\left(\gamma(t)-(vt+\rho)\right)\right\|&<k t^{2}\eta/4 < 4 \eta k (\nu_0)^{2}
\end{align*}
and
\begin{align*}
\left\|\frac{d}{dt}\gamma(t)-v\right\|&=\left\|\pi^{T}_{\gamma(t)}\left(\frac{d}{dt}\gamma(t)\right)-\pi^{T}_{\gamma(0)}\left(\frac{d}{dt}\gamma(t)\right)+\pi^{T}_{\gamma(0)}\left(\frac{d}{dt}\gamma(t)\right)-v\right\|\\
&\leq \left\|\pi^{T}_{\gamma(t)}\left(\frac{d}{dt}\gamma(t)\right)-\pi^{T}_{\gamma(0)}\left(\frac{d}{dt}\gamma(t)\right)\right\|+\left\|\pi^{T}_{\gamma(0)}\left(\frac{d}{dt}\gamma(t)\right)-v\right\|\\
&\leq \eta/2+2 \eta k \nu_0,\\
\end{align*}
Note that $\gamma$ stays inside $ B_{5\nu_0}(\rho_{0})$. Furthermore, for $\eta$ and $\nu_0$ small enough, $\gamma$ intersects $\partial B_{2\nu_0}(\rho_{0})$ and $\gamma$ intersects $\partial B_{\epsilon}(\rho)$ close to radial. In particular it follows that  $t\mapsto\|\gamma(t)-\rho\|$ is strictly increasing as long as $\gamma$ stays inside $B_{2\nu_0}(\rho_{0})$. Then, each geodesic intersects $\partial B_{\epsilon}(\rho)$ exactly once before it passes through $\partial B_{2\nu_0}(\rho_{0})$ for each $\epsilon$ with $0<\epsilon<\nu_0$. Let $K_{\rho}$ be the connected component of $\mathcal{P}\cap B_{\nu_0}(\rho)$ containing $\rho$ and let $K_{\rho_{0}}$ be the connected component of $\mathcal{P}\cap B_{2\nu_0}(\rho_{0})$ containing $\rho_{0}$. The above reasoning implies that $K_{\rho}\cap B_{\epsilon}(\rho)$ is connected and that $K_{\rho_{0}}\cap B_{\tilde{\epsilon}}(\rho_{0})$ is connected for $0<\tilde{\epsilon}<2\nu_0$.

We can assume w.l.o.g. that  $B_{\nu_0}(\rho)\cap\mathcal{P}$ is connected for all $\rho\in B_{\nu_0}(\rho_{0})$. Because if not, we can shrink $\nu_0$  until $B_{2\nu_0}(\rho_{0})$ contains a single connected component. To see this, shrink $\nu_0$ such that $0<\nu_0<\max_{\rho\in \overline{K_{\rho_{0}}},\rho^{\prime}\in(\overline{B_{2\nu_0}}(\rho_{0})\cap\mathcal{P})-\overline{K_{\rho_{0}}}}\|\rho-\rho^{\prime}\|$.

We then find for  $0<\epsilon<\nu_0$ and $t\in[0,4\epsilon]$,
\begin{align*}
&\|\pi^{T}_{\gamma(0)}\left(\gamma(t)-\rho\right)\|=\|\pi^{T}_{\gamma(0)}\left(\gamma(t)-(vt+\rho)+vt\right)\|\\
>&\|vt\|-\|\pi^{T}_{\gamma(0)}\left(\gamma(t)-(vt+\rho)\right\|>t-4 \eta k \epsilon^{2}.
\end{align*}

Hence, $\gamma(t)=\rho^{\prime} \text{ and }\gamma([0,t])\subseteq B_{2\nu_0}(\rho_{0})\Rightarrow t<\epsilon+4 \eta k \epsilon^{2}$.

Choosing $\nu_0$ such that $4 k  \nu_0<1$, we find for the component of $\rho^{\prime}-\rho=\gamma(t)-\gamma(0)$ normal to $T_{\rho}\mathcal{P}$,
\begin{align*}
\left\|\pi^{N}_{\rho}(\rho^{\prime}-\rho)\right\|=&\left\|\int_{0}^{t}dt^{\prime}\frac{d}{dt^{\prime}}\gamma(t^{\prime})-\pi_{\gamma(0)}^{T}\left(\frac{d}{dt^{\prime}}\gamma(t^{\prime})\right)\right\|\\
\leq&\int_{0}^{t}dt^{\prime}\left\|\pi_{\gamma(t^{\prime})}^{T}\left(\frac{d}{dt^{\prime}}\gamma(t^{\prime})\right)-\pi_{\gamma(0)}^{T}\left(\frac{d}{dt^{\prime}}\gamma(t^{\prime})\right)\right\|\\
<&\int_{0}^{t}dt^{\prime}\eta/2=t\eta/2<\eta(\epsilon+4 k \epsilon^2)/2\leq\epsilon\eta(1+4 k \nu_0)/2<\eta\epsilon=\eta\|\rho-\rho^{\prime}\|.
\end{align*}

Now, for a given $\eta>0$, construct such neighbourhoods for all $\rho\in\mathcal{P}$ to obtain a cover of $\mathcal{P}$ by open sets, $\{B_{\nu_{\rho}}(\rho)\cap\mathcal{P}\}_{\rho\in\mathcal{P}}$. By compactness of $\mathcal{P}$ there is a finite subcover $\{B_{\nu_{\rho_{i}}}(\rho_{i})\cap\mathcal{P}\}_{i\in I}$ and set $\epsilon:=\min_{i\in I}\nu_{\rho_{i}}$.
\end{proof}
\begin{remark}
Note that the proof of this lemma shows that for $0<\tilde{\epsilon}<\epsilon$, $B_{\tilde{\epsilon}}(\rho)\cap\mathcal{P}$ is connected.
\end{remark}
\subsection{Proof of Proposition \ref{propinc}}\label{appendixpropinc}

\begin{proof}
To see that $i$ is smooth, choose an orthonormal basis $\{\sigma_{i}\}_{i\in I}$ of hermitian operators for $H(\mathcal{H})$. Expansion in this basis gives global coordinates on $H(\mathcal{H})$. Expansion in $\{\sigma_{i_{1}}\otimes...\otimes\sigma_{i_{k}}\}_{i_{1},...,i_{k}\in I}$ gives global coordinates on $H(\mathcal{H})^{\otimes k}$. In these coordinates $i$ is just a polynomial and hence smooth.\\
$\mathcal{P}$ is a smooth submanifold and since $\mathcal{P}\subseteq\mathcal{S}(\mathcal{H}) \subseteq H(\mathcal{H})$, $i|_{\mathcal{P}}$ is smooth. We prove that $i|_{\mathcal{P}}$ is injective. Note that $\rho^{\otimes k}=\sigma^{\otimes k}\ \text{iff}\ \sigma=a\cdot\rho,\ a^{k}=1$. But then $\sigma,\rho\in H(\mathcal{H})$ implies $a\in\{-1,1\}$ and the positivity of both $\sigma$ and $\rho$ yields $a=1$.\\
Finally, $i|_{\mathcal{P}}$ is an immersion. To see this let $\rho\in\mathcal{P}$ and $v\in T_{\rho}\mathcal{P}$. Furthermore, let $\gamma:(-1,1)\to H(\mathcal{H})$ be a smooth curve with $\gamma (0)=\rho$ and $\frac{d}{dt}\gamma(0)=v$. First, observe that for $k=2$,
\begin{align*}
di_{\rho}v&=\frac{d}{dt}|_{t=0}(i\circ\gamma)=\lim_{t\to 0}\frac{\gamma (t)^{\otimes 2}-\gamma (0)^{\otimes 2}}{t}\\
&=\lim_{t\to 0}\frac{(\gamma (t)-\gamma (0))\otimes \gamma (t)-(\gamma (0)-\gamma (t))\otimes \gamma (0)}{t}=\rho\otimes v+v\otimes\rho.
\end{align*}
This inductively generalizes to arbitrary $k\in\mathbb{N}$ and we get
\begin{align*}
di_{\rho}v=\sum_{i=1}^{k}\rho^{\otimes i-1}\otimes v\otimes\rho^{\otimes k-i}.
\end{align*}
This is zero if and only if $v=0$, what can be easily seen by orthogonally decomposing $v$ with respect to $\rho$.

Finally, since $i|_{\mathcal{P}}$ is smooth, injective and an immersion, it is a smooth embedding by the compactness of $\mathcal{P}\subseteq\mathcal{S}(\mathcal{H})$.
\end{proof}
\begin{remark}
Note that the proof shows that $i|_{H(\h)-\{0\}}$ is an immersion.
\end{remark}

\subsection{Proof of Lemma \ref{lemapprox}}\label{appendixlemapprox}
In this lemma we prove that, for a submanifold $\mathcal{P}\subseteq \mathcal{S}(\h)$, every smooth embedding $\psi:\mathcal{P}\to \R^m$ can be approximated by a polynomial map $F:\id_{\mathcal{H}}+H_0(\h)\simeq \R^{d^2-1}\to \R^m$.
Let us first state lemma 1.3 of \cite{hirsch1976differential}.
\begin{lemma}\cite{hirsch1976differential}\label{lememb}
Let $U\subseteq\mathbb{R}^{n}$ be open and $W\subseteq U$ be open with compact closure $\overline{W}\subseteq U$. Let $f:U\to \mathbb{R}^{n}$ be a smooth embedding. There exists  $\epsilon>0$ such that if $g:U\to\mathbb{R}^{n}$ is smooth and 
\begin{align*}
\|D_{\alpha}g(x)-D_{\alpha}f(x)\|_{2}<\epsilon\ \ and\ \ \|g(x)-f(x)\|_{2}<\epsilon
\end{align*}
for all $x\in W$, $|\alpha|=1$, then $g|_{W}$ is an embedding.
\end{lemma}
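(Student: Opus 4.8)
The plan is to verify, for a suitable $\epsilon$ depending only on $f$ and $W$, the two defining properties of an embedding of $W$: that $g|_{W}$ is an immersion and that it is a homeomorphism onto its image. The key observation that makes the second property cheap is that here the domain and target have the same dimension $n$, so an immersion is automatically a local diffeomorphism and hence an \emph{open} map. I would first extract uniform constants from $f$ alone. Since $\overline{W}$ is compact and contained in the open set $U$, fix $r>0$ so that the compact tube $K:=\{x:\mathrm{dist}(x,\overline{W})\le r\}$ is contained in $U$. Because $f$ is an embedding between equidimensional spaces, each $df_{x}$ is an invertible $n\times n$ matrix; by continuity of $x\mapsto df_{x}$ and compactness of $K$ there is $M>0$ with $\|df_{x}^{-1}\|\le M$ on $K$, and by uniform continuity of $df$ on $K$ there is $\delta\in(0,r)$ with $\|df_{x}-df_{x'}\|\le\frac{1}{2M}$ whenever $x,x'\in K$, $\|x-x'\|\le\delta$. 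Finally, the set $\{(x,y)\in\overline{W}\times\overline{W}:\|x-y\|\ge\delta\}$ is compact and $f$ is injective on it, so $\mu:=\min\|f(x)-f(y)\|$ over that set is strictly positive. All of $M,\delta,\mu$ depend only on $f$ and $W$, so I may now fix $\epsilon$ in their terms.

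For the immersion property, suppose $\|dg_{x}-df_{x}\|<\epsilon<\frac{1}{2M}$. From $\|v\|=\|df_{x}^{-1}df_{x}v\|\le M\|df_{x}v\|$ we get $\|df_{x}(v)\|\ge\frac{1}{M}\|v\|$, hence $\|dg_{x}(v)\|\ge(\frac{1}{M}-\epsilon)\|v\|>0$ for all $x\in W$. So each $dg_{x}$ is injective, $g|_{W}$ is a local diffeomorphism, and in particular an open map.

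Injectivity is the heart of the argument, and I would split a pair $x\ne y$ in $W$ into a far regime and a near regime. In the far regime $\|x-y\|\ge\delta$: choosing also $\epsilon<\mu/3$,
\[
\|g(x)-g(y)\|\ge\|f(x)-f(y)\|-\|g(x)-f(x)\|-\|g(y)-f(y)\|\ge\mu-2\mu/3>0 .
\]
In the near regime $\|x-y\|<\delta$, the straight segment from $x$ to $y$ stays within distance $r$ of $\overline{W}$, hence inside $K\subseteq U$, so the fundamental theorem of calculus gives $f(y)-f(x)=\int_{0}^{1}df_{x+t(y-x)}(y-x)\,dt$, and the choice of $\delta$ yields $\|f(y)-f(x)\|\ge(\frac{1}{M}-\frac{1}{2M})\|y-x\|=\frac{1}{2M}\|y-x\|$; applying the same identity to $g-f$ together with $\|dg-df\|<\epsilon$ bounds $\|(g-f)(y)-(g-f)(x)\|\le\epsilon\|y-x\|$, so $\|g(x)-g(y)\|\ge(\frac{1}{2M}-\epsilon)\|y-x\|>0$. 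Thus $g|_{W}$ is injective. To upgrade this to an embedding, note that $g|_{W}$ is continuous, injective, and open, so for open $V\subseteq W$ the set $g(V)$ is open in $g(W)$; that is, $(g|_{W})^{-1}$ is continuous and $g|_{W}$ is a homeomorphism onto its image. Combined with the immersion property, this is exactly the embedding property, and every constant was fixed before $\epsilon$, so a single $\epsilon$ works.

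The step I expect to be the main obstacle is the near-diagonal injectivity: one must produce a \emph{uniform} local-injectivity radius $\delta$ valid simultaneously for all base points, which is precisely where compactness of $\overline{W}$ and the quantitative mean-value estimate are indispensable. The delicate technical point is that the short segments joining nearby points must remain inside the domain where the derivative estimates are available; this is the reason for introducing the compact tube $K$ and taking $\delta<r$, and for a fully rigorous treatment one applies the $C^{1}$ smallness on this neighborhood (legitimate since $g$ is smooth on all of $U$ and $W$ may be shrunk inside $K$). By contrast, the far-apart estimate and the immersion bound are routine, and the passage from injective immersion to embedding is essentially free, being available only because the dimensions of domain and target agree.
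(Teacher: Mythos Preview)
The paper does not prove this lemma: it is quoted verbatim from Hirsch's \emph{Differential Topology} and used as a black box in the proof of Lemma~\ref{lemapprox}. So there is no ``paper's own proof'' to compare against; your argument is the standard quantitative proof (immersion from a perturbation bound on $df$, then a near/far dichotomy for injectivity), and the use of equidimensionality to upgrade an injective immersion to an embedding via openness is clean and correct.

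There is one genuine soft spot, and you have put your finger on exactly the right place. In the near regime you write
\[
\|(g-f)(y)-(g-f)(x)\|\le \epsilon\,\|y-x\|
\]
using the mean-value estimate along the straight segment $[x,y]\subset K$. That inequality needs $\|dg_z-df_z\|<\epsilon$ for every $z$ on the segment, but the hypothesis only gives this for $z\in W$, and segments between points of $W$ need not stay in $W$ (or even in $\overline W$) when $W$ is not convex. Your parenthetical fix (``$W$ may be shrunk inside $K$'') does not repair this: shrinking $W$ does not enlarge the set on which you control $dg$. What does repair it is precisely what Hirsch assumes, namely $C^1$-closeness in the strong topology, which gives the bound on every compact subset of $U$ and in particular on your tube $K$; and this stronger hypothesis is exactly what is available in the paper's application, where the approximating map is $C^1$-close to the original on a fixed compact $K\supset\mathcal P$ before one ever passes to the chart balls. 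Alternatively, note that the strict bounds extend to $\overline W$ by continuity and then run the near-regime estimate along short paths in $\overline W$ rather than straight segments; this works whenever $\overline W$ has intrinsic distance comparable to Euclidean distance (e.g.\ for the images of balls used in Lemma~\ref{lemapprox}), but for arbitrary open $W$ the straight-segment version you wrote is not quite enough.
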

Now we give the proof of lemma \ref{lemapprox}.
\begin{proof}
Note that $\psi^{\prime}=\tilde{\psi}^{\prime}|_{\mathcal{P}}$ is smooth because it is a restriction of smooth functions to a smooth submanifold. $\psi$ can be extended to a compactly supported smooth map $\tilde{\psi}$ on $\id_{\mathcal{H}}+H(\mathcal{H})_{0}\simeq \R^{n^2-1}$ and let $K\subseteq\R^{n^2-1}$ be a compact set containing $\text{supp}\,\tilde{\psi}$. In the following we make use of an approximation result given by theorem 1 in \cite{bagby2002multivariate}. The relevant part for us is that for every $\eta>0$, there is a $k\in\mathbb{N}$ such that $\tilde{\psi}$ and $d\tilde{\psi}$ can be approximated simultaneously by a map $\tilde{\psi}^{\prime}=(p_{1},...,p_{i}),\ p_{i}\in P^{\leq k}(\mathbb{R}^{n\times n-1})$, i.e $\sup_{x\in K}\|\tilde{\psi}(x)-\tilde{\psi}^{\prime}(x)\|_{2}<\eta$ and $\sup_{(x,v)\in TK, \|v\|\leq 1}\|d\tilde{\psi}_{x}(v)-d\tilde{\psi}^{\prime}_{x}(v)\|_{2}<\eta$.

Let $\{(\phi_{i},W_{i})\}_{i\in I}$ be a finite atlas on $\mathcal{P}$ and let $\tilde{B}(r_{i}):=\phi_{i}(B(r_{i}))\subseteq W_{i}$ be the image of an open ball of radius $r_{i}$ around the origin such that $\bigcup_{i\in I} \tilde{B}(r_{i}/2)=\mathcal{P}$ and $\overline{\tilde{B}(r_{i})}\subseteq W_{i}$. Applying lemma 1.3 of \cite{hirsch1976differential} to $\psi\circ\phi_{i}$, $\phi^{-1}(W_{i})$ and $B(r_{i})$, we obtain for each $i\in I$ an $\epsilon_{i}>0$ such that for all $\psi^{\prime}$ with $\sup_{B(r_{i})}\|D_{\alpha}(\psi\circ\phi_{i})-D_{\alpha}(\psi^{\prime}\circ\phi_{i})\|_{2}<\epsilon_{i}$, $|\alpha|\leq 1$, $\psi^{\prime}|_{\tilde{B}(r_{i})}$ is an embedding. 

For $|\alpha|=1$, we have
\begin{align*}
&\sup_{x\in B(r_{i})}\|D_{\alpha}(\psi\circ\phi_{i})(x)-D_{\alpha}(\psi^{\prime}\circ\phi_{i})(x)\|_{2}\\
=&\sup_{x\in B(r_{i})}\|(d\psi_{\phi(x)}-d\psi^{\prime}_{\phi(x)})\circ D_{\alpha}\phi_{i}(x)\|_{2}\\
\leq& \sup_{(x,v)\in T\tilde{B}(r_{i}),\|v\|\leq 1}\|(d\psi_{x}-d\psi^{\prime}_{x})v\|_{2} \sup_{B(r_{i})}\|D_{\alpha}\phi_{i}\|_{2}\\
\leq& \sup_{(x,v)\in TK,\|v\|\leq 1}\|(d\tilde{\psi}_{x}-d\tilde{\psi^{\prime}}_{x})v\|_{2} \sup_{B(r_{i})}\|D_{\alpha}\phi_{i}\|_{2}.
\end{align*}

For $|\alpha|=1$,  let $\kappa_{i,\alpha}:=\epsilon_{i}/\sup_{B(r_{i})}\|D_{\alpha}\phi_{i}\|_{2}$ and let $\epsilon:=\min_{i,|\alpha|=1}\{\epsilon_{i},\kappa_{i,\alpha}\}$. Then, for every  $\tilde{\psi}^{\prime}$ with $\sup_{x\in K}\|\tilde{\psi}(x)-\tilde{\psi}^{\prime}(x)\|_{2}<\epsilon$ and $\sup_{(x,v)\in TK, \|v\|\leq 1}\|d\tilde{\psi}_{x}(v)-d\tilde{\psi}^{\prime}_{x}(v)\|_{2}<\epsilon$, $\tilde{\psi}^{\prime}|_{\tilde{B}(r_{i})}$ is an embedding for all $i\in I$ and by theorem 1 of \cite{bagby2002multivariate} such a $\tilde{\psi}^{\prime}$ exists for some $k\in\mathbb{N}$. 

Finally we show that there is an $\epsilon\geq 0$ such that for every $\tilde{\psi}^{\prime}$ with $\|D_{\alpha}\tilde{\psi}-D_{\alpha}\tilde{\psi}^{\prime}\|<\epsilon$, $\psi^{\prime}|_{\mathcal{P}}$ is injective. Then $\psi^{\prime}$ is both an immersion and injective and thus a smooth embedding by the compactness of $\mathcal{P}$.\\
$\tilde{B}(r_{i})^{c}:=\mathcal{P}-\tilde{B}(r_{i})$ and $\overline{\tilde{B}(r_{i}/2)}$ are closed and therefore compact as closed subsets of a compact set. Then, by the continuity of the norm, $\eta_{i}:=\min_{q\in \tilde{B}(r_{i})^{c},\ p\in \overline{\tilde{B}(r_{i}/2)}}\|\psi(p)-\psi(q)\|$ exists and it is bigger than $0$ because $\psi$ is injective and $\tilde{B}(r_{i})^{c}\cap \tilde{B}(r_{i}/2)=\emptyset$. By possibly shrinking $\epsilon$, make sure that $\epsilon\leq \frac{1}{4}\min_{i\in I}\eta_{i}$. 

Assume $\psi^{\prime}(p)=\psi^{\prime}(q)$, $p,q\in\mathcal{P}$. Since $\psi^{\prime}|_{\mathcal{P}}$ is an embedding around $p$, there is an $i\in I$ such that $p\in \tilde{B}(r_{i}/2)$ and $p\in \tilde{B}(r_{i})^{c}$. Thus,
\begin{align*}
&\|\psi^{\prime}(p)-\psi^{\prime}(q)\|_{2}=\|\psi(p)-\psi(q)+\psi^{\prime}(p)-\psi(p)+\psi(q)-\psi^{\prime}(q)\|_{2}  \\
\geq & |\|\psi(p)-\psi(q)\|_{2}-\|\tilde{\psi}^{\prime}(p)-\tilde{\psi}(p)+\tilde{\psi}(q)-\tilde{\psi}^{\prime}(q)\|_{2} \\
\geq & 4\epsilon-2\epsilon> 0,
\end{align*}
a contradiction.
\end{proof}
A direct consequence of this proof is the following corollary, which was used in the third section.
\begin{corollary}\label{linap}
Let $\mathcal{P}\subseteq\mathcal{S}(\h)$ be a submanifold and let $L:H(\h)\to\mathbb{R}^m$ be a linear map such that $L|_{\mathcal{P}}$ is a smooth embedding. Then, there is an $\epsilon>0$ such that for every linear map $L^{\prime}:H(\h)\to\mathbb{R}^m$ with $\sup_{v\in H(\h),\|v\|\leq 1}\|(L-L^{\prime})v\|_{2}<\epsilon$, $L^{\prime}|_{\mathcal{P}}$ is a smooth embedding.
\end{corollary}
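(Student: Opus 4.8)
The plan is to re-run the argument in the proof of Lemma \ref{lemapprox}, observing that it used nothing about the approximating map beyond $C^{1}$-closeness to $L|_{\mathcal{P}}$ on a compact neighbourhood of $\mathcal{P}$, and that a linear map $L^{\prime}$ with $\sup_{\|v\|\le 1}\|(L-L^{\prime})v\|<\epsilon$ is automatically $C^{1}$-close to $L$ in that sense. The polynomial-approximation input (Theorem~1 of \cite{bagby2002multivariate}) and the compactly supported extension $\tilde{\psi}$ of that proof are now superfluous, since $L^{\prime}$ is already globally defined and smooth.

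First I would fix a compact neighbourhood $K$ of $\mathcal{P}$ in $H(\mathcal{H})$ and set $R:=\max_{x\in K}\|x\|$. Since $L$ and $L^{\prime}$ are linear their differentials are constant, $dL_{x}=L$ and $dL^{\prime}_{x}=L^{\prime}$ for all $x$, so whenever $\sup_{\|v\|\le 1}\|(L-L^{\prime})v\|<\epsilon$ we get at once
\begin{align*}
\sup_{x\in K}\|L(x)-L^{\prime}(x)\|<R\epsilon,\qquad \sup_{(x,v)\in TK,\ \|v\|\le 1}\|dL_{x}(v)-dL^{\prime}_{x}(v)\|<\epsilon .
\end{align*}
Thus $L^{\prime}$ plays exactly the role of $\tilde{\psi}^{\prime}$ in Lemma \ref{lemapprox}, with $\psi=L|_{\mathcal{P}}$.

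Next I would import the two mechanisms used there. Picking a finite atlas $\{(\phi_{i},W_{i})\}_{i\in I}$ with $\bigcup_{i}\tilde{B}(r_{i}/2)=\mathcal{P}$ and $\overline{\tilde{B}(r_{i})}\subseteq W_{i}$, Lemma \ref{lememb} applied to $L\circ\phi_{i}$ on $B(r_{i})$ yields $\epsilon_{i}>0$ such that any $L^{\prime}$ with $\sup_{B(r_{i})}\|D_{\alpha}(L\circ\phi_{i})-D_{\alpha}(L^{\prime}\circ\phi_{i})\|<\epsilon_{i}$ for $|\alpha|\le 1$ restricts to an embedding on $\tilde{B}(r_{i})$; the chain-rule estimate $\sup_{B(r_{i})}\|D_{\alpha}(L\circ\phi_{i})-D_{\alpha}(L^{\prime}\circ\phi_{i})\|\le \sup_{\|v\|\le 1}\|(dL_{x}-dL^{\prime}_{x})v\|\cdot\sup_{B(r_{i})}\|D_{\alpha}\phi_{i}\|$ converts this into a bound on $\sup_{\|v\|\le 1}\|(L-L^{\prime})v\|$, so in particular $L^{\prime}|_{\mathcal{P}}$ is an immersion and locally injective. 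For global injectivity, $\eta_{i}:=\min\{\|L(p)-L(q)\|:\ p\in\overline{\tilde{B}(r_{i}/2)},\ q\in\mathcal{P}\setminus\tilde{B}(r_{i})\}$ is strictly positive by injectivity of $L|_{\mathcal{P}}$, and if $\sup_{x\in K}\|L(x)-L^{\prime}(x)\|\le\tfrac14\min_{i}\eta_{i}$ the triangle-inequality argument of Lemma \ref{lemapprox} excludes $L^{\prime}(p)=L^{\prime}(q)$ for $p\ne q$.

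Finally I would choose $\epsilon>0$ small enough that $R\epsilon$ lies below $\min_{i}\epsilon_{i}$, below every $\kappa_{i,\alpha}:=\epsilon_{i}/\sup_{B(r_{i})}\|D_{\alpha}\phi_{i}\|$, and below $\tfrac14\min_{i}\eta_{i}$. Then any linear $L^{\prime}$ with $\sup_{\|v\|\le 1}\|(L-L^{\prime})v\|<\epsilon$ is both an immersion and injective on $\mathcal{P}$, hence a smooth embedding by compactness of $\mathcal{P}$. There is no real obstacle here; the only point to keep track of is the rescaling by $R=\max_{x\in K}\|x\|$ when passing from the operator-norm bound on $L-L^{\prime}$ to the sup-norm bound on $K$ that the proof of Lemma \ref{lemapprox} actually consumes, together with the trivial observation that the differential of a linear map is the map itself.
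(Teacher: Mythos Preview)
Your proposal is correct and follows essentially the same route as the paper: invoke the $C^{1}$-stability output of the proof of Lemma~\ref{lemapprox} on a compact set $K\supseteq\mathcal{P}$, then use linearity of $L,L'$ (so that $dL_x=L$, $dL'_x=L'$) together with the rescaling $\sup_{x\in K}\|L(x)-L'(x)\|\le R\cdot\sup_{\|v\|\le1}\|(L-L')v\|$ to convert the operator-norm hypothesis into the required $C^{0}/C^{1}$ closeness. The paper packages this by composing with a bump function $b$ equal to the identity on $K$ and citing the proof of Lemma~\ref{lemapprox} wholesale, while you re-run the atlas/Lemma~\ref{lememb}/global-injectivity steps explicitly; the content is the same.
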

\begin{proof}
Let $K\subseteq H(\h)$ be a compact set containing $\mathcal{P}$. Furthermore let $b:H(\h)\to H(\h)$ be a smooth and compactly supported bump function which equals the identity on $K$. Then, the proof of \ref{lemapprox} shows that there is an $\eta>0$ such that for every smooth map $\psi:H(\h)\to\R^m$ with $\sup_{x\in K}\|\psi(x)-(L\circ b)(x)\|_{2}<\epsilon$ and $\sup_{(x,v)\in TK,\|v\|\leq 1}\|d\psi_{x}(v)-d(L\circ b)_{x}(v)\|_{2}<\eta$, $\psi|_{\mathcal{P}}$ is a smooth embedding. But for $\psi$ linear we find
\begin{align*}
&\sup_{(x,v)\in TK,\|v\|\leq 1}\|d\psi_{x}(v)-d(L\circ b)_{x}(v)\|_{2}\\
=&\sup_{(x,v)\in TK,\|v\|\leq 1}\|\psi(v)-L(v)\|_{2}\\
=&\sup_{v\in H(\h),\|v\|\leq 1}\|\psi(v)-L(v)\|_{2}
\end{align*}
and 
\begin{align*}
&\sup_{x\in K}\|\psi(x)-(L\circ b)(x)\|_{2}\\
\leq&\sup_{x\in K}\|x\|\sup_{v\in H(\h),\|v\|\leq 1}\|\psi(v)-L(v)\|_{2}.
\end{align*}
Thus, the claim holds for $\epsilon:=\eta/\sup_{x\in K}\|x\|>0$.
\end{proof}

\subsection{Proof of Proposition \ref{propflagprod}}\label{appendixb}
In order to prove \ref{propflagprod}, let us first fix some notation. Let $X$ be an oriented smooth compact manifold and $K(X)$ be the K-ring of $X$, i.e. the ring of equivalence classes of complex vector bundles on $X$, where $E\sim E^{\prime}$ if $E+n\simeq E^{\prime}+m$ (An introductory text on this topic is e.g. \cite{hatcher2003vector}). \footnote{Here $m$ denotes the $m$-dimensional trivial bundle.} Let $p_{i}(X):=p_{i}(TX)\in H^{2i}(X,\mathbb{Q})$ be the image of $i$-th rational Pontryagin class evaluated on the tangent bundle $TX$. Furthermore, let $\hat{A}(p_{1},...,p_{n})$ be the $\hat{A}$-genus, i.e. the genus associated to the power series $\frac{\sqrt{z}/2}{sinh(\sqrt{z}/2)}$\cite{hirzebruch1995topological}.\footnote{This means that by construction the identitiy $p(E\oplus F)=p(E)p(F)$ of the total rational Pontryagin class transfers to the $\hat{A}$-genus.}  Let $\text{ch}:K(X)\to H^{*}(X,\mathbb{Q})$ be the Chern class and let $\text{ch}(X):=\text{ch}(K(X))\subseteq H^{*}(X,\mathbb{Q})$. For $z:=\sum_{i=0}^{\infty}z^{2i}\in H^{*}(X,\mathbb{Q})$, with $z_{2j}\in H^{2i}(X,\mathbb{Q})$, let $z^{(t)}:=\sum_{i=0}^{\infty}z^{2i}t^{j}$ for $t\in\mathbb{Q}$. Note that $(yz)^{(t)}=y^{(t)}z^{(t)}$.\\
For $z\in\text{ch}(X)$, $d\in H^{2}(X,\mathbb{Q})$, $t\in\mathbb{Q}$ we define the Hilbert polynomial in $t$ to be $H_{X,z,d}(t):=(z^{t}e^{d/2}\hat{A}(p(X)))[X]$, where $[X]$ is the fundamental class of $X$\cite{borel1959characteristic,atiyah1959quelques}. Furthermore for $q\in\mathbb{Q}$ let $\nu_2(q):=\text{exponent of }2\text{ as primefactor of }q$. The result of Walgenbach in \cite{walgenbach2001lower} is based on the result of Mayer in \cite{mayer},
\begin{theorem}\cite{mayer}
Let $X$ be a $2n$-dimensional compact oriented smooth manifold and $H$ be the Hilbert polynomial associated with $d\in H^{2}(X;Z)$ and $z\in\text{ch}(X)$.\\
Then $X$ cannot be immersed in Euclidean space of dimension $-2\nu_{2}(H(\frac{1}{2}))-1$ and cannot be embedded in Euclidean space of dimension $-2\nu_{2}(H(\frac{1}{2}))$.
\end{theorem}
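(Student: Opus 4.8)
The plan is to read the Hilbert polynomial $H(t)$ as a family of indices of twisted spin$^{c}$ Dirac operators and then to convert a hypothetical low--codimension immersion into a $2$--adic divisibility statement that contradicts the given value of $\nu_{2}(H(\tfrac12))$. First I would establish integrality at integers. Since $z\in\mathrm{ch}(X)$ there is a virtual bundle $\xi\in K(X)$ with $\mathrm{ch}(\xi)=z$, and the Adams operation satisfies $\mathrm{ch}(\psi^{t}\xi)=\sum_{i}t^{i}\,\mathrm{ch}_{i}(\xi)=z^{(t)}$ for $t\in\mathbb{Z}$. Equipping $X$ with the spin$^{c}$ structure whose determinant class is $d$ (so that $d\equiv w_{2}(X)\pmod 2$), the Atiyah--Singer theorem identifies
\[
H(t)=\big(\mathrm{ch}(\psi^{t}\xi)\,e^{d/2}\,\hat{A}(p(X))\big)[X]=\mathrm{ind}\,D_{\xi}^{(t)}\in\mathbb{Z}.
\]
Hence $H$ is an integer--valued polynomial of degree $\le n$, and I may expand $H(t)=\sum_{j=0}^{n}a_{j}\binom{t}{j}$ with all $a_{j}\in\mathbb{Z}$.

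Next I would pin down the arithmetic of the value at $t=\tfrac12$. Using $\nu_{2}(m!)=m-\alpha(m)$ a direct computation yields the clean identity $\nu_{2}\big(\binom{1/2}{j}\big)=\alpha(j)-2j$ (for instance $\binom{1/2}{2}=-\tfrac18$, $\binom{1/2}{3}=\tfrac{1}{16}$), so that
\[
\nu_{2}\big(H(\tfrac12)\big)\ \ge\ \min_{0\le j\le n}\big(\nu_{2}(a_{j})+\alpha(j)-2j\big).
\]
For a generic integer--valued polynomial the worst term is at $j=n$, giving only $\nu_{2}(H(\tfrac12))\ge\alpha(n)-2n$; the whole point of the theorem is that an immersion forces the high coefficients $a_{j}$ to be far more $2$--divisible than this, which is what makes the actual (computed) value of $\nu_{2}(H(\tfrac12))$ an obstruction.

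I would then feed in the immersion. Suppose $X$ immerses in $\mathbb{R}^{2n+k}$ with normal bundle $\nu$ of rank $k$; then $TX\oplus\nu$ is stably trivial, so $\hat{A}(p(X))=\hat{A}(\nu)^{-1}$. Writing $\pm x_{i}$ for the Pontryagin roots of $\nu$ and letting $\Delta(\nu)$ be its spin$^{c}$ spinor bundle (fibre dimension $2^{\lfloor k/2\rfloor}$), one has
\[
\mathrm{ch}\,\Delta(\nu)\cdot\hat{A}(\nu)^{-1}=\prod_{i}\frac{2\sinh x_{i}}{x_{i}}=2^{\lfloor k/2\rfloor}\big(1+\text{higher }p_{\bullet}(\nu)\big).
\]
Thus twisting $D_{\xi}^{(t)}$ additionally by $\Delta(\nu)$ trades $\hat{A}(X)$ for $2^{\lfloor k/2\rfloor}$ times a class built from the rank--$k$ bundle $\nu$, all inside an honest integral index. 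Carrying the substitution $t=\tfrac12$ through this bookkeeping should show that $2^{\,n+\lfloor k/2\rfloor}H(\tfrac12)$ is a $2$--local integer, i.e. $\nu_{2}(H(\tfrac12))\ge -n-\lfloor k/2\rfloor=-\lfloor(2n+k)/2\rfloor$. Choosing $2n+k=-2\nu_{2}(H(\tfrac12))-1$ then violates this inequality, proving the non--immersion statement. For the embedding refinement I would use that $[X]$ is nullhomologous in the ambient space, so the self--intersection vanishes and the normal Euler number $e(\nu)[X]=0$; the standard consequence (a section after one stabilisation, equivalently one further factor of $2$ in the spinor count) sharpens the estimate to $\nu_{2}(H(\tfrac12))\ge -\lfloor(2n+k-1)/2\rfloor$, so that $2n+k=-2\nu_{2}(H(\tfrac12))$ is likewise excluded.

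The main obstacle is Step~3: making the purely formal substitution $t=\tfrac12$ compatible with a genuine spinor--bundle integrality, so that the power of $2$ harvested from $\Delta(\nu)$ matches \emph{exactly} the $2$--adic denominators $\alpha(j)-2j$ of $\binom{1/2}{j}$. Concretely, one must interpret the index of the Dirac operator twisted by $\Delta(\nu)\otimes\psi^{1/2}\xi\otimes L^{1/2}$ through $KO$-- or mod--$2$ $K$--theory and check that it accounts precisely for the claimed divisibility, while verifying that the spin$^{c}$ corrections coming from $d$ and from a possibly non--spin normal bundle $\nu$ do not disturb the count; the embedding improvement, clean only in top codimension, also needs care to be pushed to arbitrary $k$.
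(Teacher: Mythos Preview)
The paper does not prove this theorem at all: it is quoted verbatim from Mayer's work (the \texttt{\textbackslash cite\{mayer\}} in the theorem header is the giveaway) and used as a black box. The only thing the paper does with it is invoke it together with Walgenbach's computation of $H(\tfrac12)$ for flag manifolds, and then prove the easy multiplicativity lemma $H_{X_1\times X_2,z,d}(t)=H_{X_1,z_1,d_1}(t)\,H_{X_2,z_2,d_2}(t)$ so that the bound extends to products. So there is nothing in the paper to compare your argument against.

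That said, your sketch is in the right spirit for how Mayer-type results are actually established: integrality of $H$ at integers via the spin$^{c}$ index theorem, the expansion in the basis $\binom{t}{j}$ with integer coefficients, and the $2$-adic valuation $\nu_2\binom{1/2}{j}=\alpha(j)-2j$ are all standard ingredients. The place where your outline is genuinely incomplete is exactly where you flag it: the passage from ``twist by $\Delta(\nu)$'' to a sharp divisibility of $H(\tfrac12)$ is not a one-line substitution but requires a careful $KO$-theoretic (or $\gamma$-operation) argument in the style of Atiyah--Hirzebruch and Mayer, and the embedding improvement by one needs the normal Euler class argument done precisely. If you want to see this carried out, consult Mayer's original paper; the present paper simply imports the conclusion.
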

Walgenbach obtains his results by computing $H(\frac{1}{2})$ for some $d\in H^{2}(X;Z)$ and $z\in\text{ch}(X)$ using combinatorical methods.
With the following lemma, \ref{propflagprod} follows directly by observing that $\nu_{2}(a\cdot b)=\nu_{2}(a)+\nu_{2}(b)$. Let $X_{1}$, $X_{2}$ be $2n$-dimensional compact oriented smooth manifolds and let $\pi_{i}:X_{1}\times X_{2}\to X_{i}$, $i=1,2$ , be the canonical projections.
\begin{lemma}
For $z_{i}\in\text{ch}(X_{i})$, $d_{i}\in H^{2}(X_{i},\mathbb{Q})$ and $[X_{i}]=(\pi_{i})_{*}[X_{1}\times X_{2}]$, $i=1,2$, let $z:=\pi_{1}^{*}(z_{1})\pi_{2}^{*}(z_{2})$, $d:=\pi_{1}^{*}(d_{1})+\pi_{2}^{*}(d_{2})$. Then,
\begin{align*}
H_{X_{1}\times X_{2},z,d}(t)=H_{X_{1},z_{1},d_{1}}(t)H_{X_{2},z_{2},d_{2}}(t).
\end{align*}
\end{lemma}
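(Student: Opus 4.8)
The plan is to show that the cohomology class $z^{(t)}e^{d/2}\hat{A}(p(X_1\times X_2))\in H^*(X_1\times X_2;\mathbb{Q})$ whose top-degree part is evaluated on $[X_1\times X_2]$ is itself an external product of the corresponding class on $X_1$ with that on $X_2$, and then to invoke the standard fact that pairing an external product with the fundamental class of a product manifold factors as the product of the two pairings.

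First I would record three multiplicativity facts, one for each factor of the class. (i) The tangent bundle of a product splits, $T(X_1\times X_2)\cong\pi_1^*TX_1\oplus\pi_2^*TX_2$, so by Whitney-sum multiplicativity of the total rational Pontryagin class, $p(X_1\times X_2)=\pi_1^*p(X_1)\cdot\pi_2^*p(X_2)$; since $\hat{A}$ is the genus attached to a multiplicative power series, this propagates (cf.\ the footnote above) to $\hat{A}(p(X_1\times X_2))=\pi_1^*\hat{A}(p(X_1))\cdot\pi_2^*\hat{A}(p(X_2))$. (ii) Because $d=\pi_1^*d_1+\pi_2^*d_2$ and the two summands lie in commuting subrings, $e^{d/2}=\pi_1^*(e^{d_1/2})\cdot\pi_2^*(e^{d_2/2})$. (iii) Since $\pi_i^*$ preserves cohomological degree we have $(\pi_i^*w)^{(t)}=\pi_i^*(w^{(t)})$, so from $z=\pi_1^*(z_1)\pi_2^*(z_2)$ together with the identity $(yz)^{(t)}=y^{(t)}z^{(t)}$ noted in the text, $z^{(t)}=\pi_1^*(z_1^{(t)})\cdot\pi_2^*(z_2^{(t)})$.

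Multiplying these three identities and using that each $\pi_i^*$ is a ring homomorphism and that classes pulled back along $\pi_1$ commute with those pulled back along $\pi_2$, I obtain, writing $u_i:=z_i^{(t)}e^{d_i/2}\hat{A}(p(X_i))\in H^*(X_i;\mathbb{Q})$,
\[
z^{(t)}e^{d/2}\hat{A}(p(X_1\times X_2))=\pi_1^*(u_1)\cdot\pi_2^*(u_2).
\]
It then remains to evaluate on the fundamental class. Using $[X_1\times X_2]=[X_1]\times[X_2]$ and the K\"unneth theorem, one has, for arbitrary $\alpha\in H^*(X_1;\mathbb{Q})$ and $\beta\in H^*(X_2;\mathbb{Q})$, the identity $(\pi_1^*\alpha\cup\pi_2^*\beta)[X_1\times X_2]=(\alpha[X_1])(\beta[X_2])$, in which only the component of $\alpha$ in degree $\dim X_1$ and of $\beta$ in degree $\dim X_2$ contribute (all other bidegrees pair trivially with $[X_1\times X_2]$). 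Applying this with $\alpha=u_1$ and $\beta=u_2$ gives $H_{X_1\times X_2,z,d}(t)=(u_1[X_1])(u_2[X_2])=H_{X_1,z_1,d_1}(t)\,H_{X_2,z_2,d_2}(t)$, which is the assertion.

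The step I expect to require the most care is the final one: the class in question has mixed cohomological degree, and one must use the K\"unneth/cross-product formula to see that only its ``balanced'' part (top degree in each tensor factor) survives the pairing with $[X_1\times X_2]$. The rest is routine bookkeeping, with the single caveat that the multiplicativity $p(E\oplus F)=p(E)p(F)$ should be applied in rational cohomology, where it holds exactly and the $2$-torsion ambiguity of the integral statement is immaterial.
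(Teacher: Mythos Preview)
Your proof is correct and follows essentially the same approach as the paper: split $\hat{A}(p(X_1\times X_2))$, $e^{d/2}$, and $z^{(t)}$ into external products via the tangent-bundle decomposition and naturality, then evaluate on $[X_1\times X_2]$ using the product structure. The only item the paper adds that you omit is a one-line verification that $z=\pi_1^*(z_1)\pi_2^*(z_2)$ actually lies in $\text{ch}(X_1\times X_2)$ (needed so that $H_{X_1\times X_2,z,d}$ is defined as in the text and Mayer's theorem applies); this is immediate from multiplicativity of the Chern character under tensor product.
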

\begin{proof}
First, note that $z\in K(X_{1}\times X_{2})$, since 
\begin{align*}
\pi_{1}^{*}(z_{1})\pi_{2}^{*}(z_{2})=\pi_{1}^{*}(\text{ch}(E_{1}))\pi_{2}^{*}(\text{ch}(E_{2}))=\text{ch}(\pi_{1}^{*}(E_{1}))\text{ch}(\pi_{2}^{*}(E_{2}))=\text{ch}(\pi_{1}^{*}(E_{1})\oplus\pi_{2}^{*}(E_{2})).
\end{align*}
Furthermore note that
\begin{align*}
p(X_{1}\times X_{2})&=p(T(X_{1}\times X_{2}))=p(\pi_{1}^{*}TX_{1}\oplus \pi_{2}^{*}TX_{2}))\\
&=p(\pi_{1}^{*}(TX_{1}))p(\pi_{2}^{*}(TX_{2}))=\pi_{1}^{*}(p(TX_{1}))\pi_{2}^{*}(p(TX_{2})).
\end{align*}
This, together with the fact that all cohomology classes involved are even dimensional and hence commute, yields
\begin{align*}
H_{X_{1}\times X_{2},z,d}(t)&=\left(z^{(t)}e^{d/2}\hat{A}(X_{1}\times X_{2})\right)[X_{1}\times X_{2}]\\
&=\left(\pi_{1}^{*}(z_{1})^{(t)}\pi_{2}^{*}(z_{2})^{(t)}e^{\pi_{1}^{*}(d_{1}/2)+\pi_{2}^{*}(d_{2}/2)}\pi_{1}^{*}(\hat{A}(TX_{1}))\pi_{2}^{*}(\hat{A}(TX_{2})\right)[X_{1}\times X_{2}]\\
&=\left(\pi_{1}^{*}\left(z_{1}^{(t)}e^{\pi_{1}^{*}(d_{1})/2)}\hat{A}(X_{1})\right)\pi_{2}^{*}\left(z_{2}^{(t)}e^{\pi_{2}^{*}(d_{2})/2)}\hat{A}(X_{2})\right)\right)[X_{1}\times X_{2}]\\
&=\left(z_{1}^{(t)}e^{\pi_{1}^{*}(d_{1})/2)}\hat{A}(X_{1})\right)[X_{1}]\left(z_{2}^{(t)}e^{\pi_{2}^{*}(d_{2})/2)}\hat{A}(X_{2})\right)[X_{2}]\\
&=H_{X_{1},z_{1},d_{1}}(t)H_{X_{2},z_{2},d_{2}}(t).
\end{align*}
\end{proof}

\subsection{Proof of Proposition \ref{propbob}}\label{appendixa}
Let $X$ be a smooth compact $n$-manifold with tangent bundle $TX$. Let $\phi :X\to\mathbb{R}^{n+k}$ be an immersion and let $NX$ be the normal bundle, i.e. $TX\oplus NX\simeq n+k$. Furthermore let $\omega$ be the total Stiefel-Whitney class. Let us state the following  well-known result.
\begin{proposition}\cite{milnor1974characteristic}
Let $i$ be the degree of $\overline{\omega}(X)=\omega(NX)\in H^{*}(X,\mathbb{Z}_{2})$. Then $X$ cannot be immersed in Euclidean space of dimension $n+i$ and cannot be embedded in Euclidean space of dimension $n+i+1$.
\end{proposition}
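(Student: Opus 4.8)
This is the classical Stiefel--Whitney non-immersion obstruction \cite{milnor1974characteristic}, and the plan is to deduce it directly from Whitney duality. First I would record that, for the immersion $\phi:X\to\mathbb{R}^{n+k}$, one has $TX\oplus NX\cong\phi^{*}T\mathbb{R}^{n+k}$, a trivial bundle of rank $n+k$; applying the Whitney product formula together with $\omega(\text{trivial})=1$ gives the identity $\omega(TX)\smile\omega(NX)=1$ in $H^{*}(X,\mathbb{Z}_{2})$. Since $\omega(TX)=1+(\text{higher-degree terms})$ is a unit in the ring $H^{*}(X,\mathbb{Z}_{2})$ (the positive-degree part is nilpotent because $X$ is compact), this relation determines $\omega(NX)$ uniquely as its multiplicative inverse; in particular $\overline{\omega}(X):=\omega(NX)$ is independent of the chosen immersion $\phi$ and depends only on $X$.

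Next I would invoke the elementary fact that a real vector bundle of rank $k$ has vanishing Stiefel--Whitney classes in all degrees exceeding $k$. Applied to $NX$, which has rank $k$, this gives $\overline{\omega}_{j}(X)=\omega_{j}(NX)=0$ for every $j>k$. Reading this contrapositively: if $i$ denotes the top degree in which $\overline{\omega}(X)$ does not vanish, then every immersion of $X$ into Euclidean space must have codimension at least $i$, which is the asserted non-immersion statement. For the embedding half one argues in the same way but gains one further unit of codimension --- the standard strengthening for closed manifolds (a closed hypersurface in $\mathbb{R}^{m}$ is two-sided and hence has trivial normal bundle, and the analogous improvement in higher codimension is exactly the embedding part of the quoted result). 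I would take this last point from \cite{milnor1974characteristic} rather than reprove it; it is the only step that is not pure bookkeeping with the product formula and the rank bound, and hence the place I would be most careful.

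Finally, to connect this to the intended application (Proposition \ref{propbob}), with $X=PW_{n,r}$ one would compute $\overline{\omega}(PW_{n,r})$ from the known mod-$2$ cohomology ring and the Stiefel--Whitney classes of the tangent bundle of the projective Stiefel manifold, and then locate its top nonvanishing degree; that degree is precisely what the combinatorial functions $N(n,k)$ and $\sigma(n,k)$ record, after which the proposition above yields the stated non-immersion and non-embedding dimensions.
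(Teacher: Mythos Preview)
Your sketch is the standard Whitney-duality argument and is correct. Note, however, that the paper does not actually \emph{prove} this proposition: it is quoted verbatim from \cite{milnor1974characteristic} as a black-box tool and then applied to $PW_{n,k}$. So there is no in-paper proof to compare against; what you have written is precisely the textbook argument that the citation points to.

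One remark worth making. The conclusion your argument yields is ``every immersion has codimension at least $i$'', i.e.\ $X$ does not immerse in $\mathbb{R}^{n+i-1}$. The paper's statement reads ``cannot be immersed in Euclidean space of dimension $n+i$'', which is one stronger than what the rank bound on $\omega(NX)$ alone gives. This appears to be an off-by-one slip in the paper's phrasing rather than a gap in your reasoning; the downstream numerics in Proposition~\ref{propbob} should be read with this in mind. Your treatment of the embedding half---deferring the extra unit of codimension to the cited source rather than reproving it---is appropriate, since that step genuinely requires more than the product formula (it uses that the top normal Stiefel--Whitney class of a closed submanifold of Euclidean space vanishes).
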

In order to use this result, we need to compute $\overline{\omega}(PW_{n,k})$. The following is similar to \cite{barufatti1994obstructions}, where the dual Stiefel-Whitney class of the real projective Stiefel manifolds is computed. Let $L$ be the complex line bundle associated to the $U(1)$-principal bundle $W_{n,k}\to PW_{n,k}$ and let $x$ be the $\text{mod}\ 2$ Euler class of $L$\footnote{I.e. the image of the Euler class of $L$ under the coefficient homomorphism $H^{*}(PW_{n,k},\mathbb{Z})\to H^{*}(PW_{n,k},\mathbb{Z}_{2})$}. In \cite{astey1999cohomology}, the cohomology ring $H^{*}(PW_{n,k},\mathbb{Z}_{2})$ for $k<n$ is found to be
\begin{align*}
\mathbb{Z}_{2}[x]/(x^{N})\oplus\Lambda(y_{n-k+1},...,y_{n}),
\end{align*}
with $y_{i}\in H^{2i-1}(PW_{n,k},\mathbb{Z}_{2})$.
It is shown in \cite{astey2000parallelizability}, that $TPW_{n,k}$ is stably isomorphic to $nkL^{*}$, where $L^{*}$ is regarded as a real vector bundle. Hence $\omega (TPW_{n,k})=\omega (L^{*})^{nk}$ and we obtain
\begin{align*}
\omega (NPW_{n,k})&=\overline{\omega}(TPW_{n,k})=\omega (L^{*})^{-nk}.
\end{align*}
Since the odd Stiefel-Whitney classes of complex vector bundles (regarded as real vector bundles) vanish \cite{hatcher2003vector}, and the Euler class is mapped to the top Stiefel Whitney class under the coefficient homomorphism $H^{*}(PW_{n,k},\mathbb{Z})\to H^{*}(PW_{n,k},\mathbb{Z}_{2})$\cite{hatcher2003vector}, we get $\omega (L)=\omega (L^{*})=1+x$. Thus,
\begin{align*}
\omega(NPW_{n,k})&=(1+x)^{-nk}\\
&=\sum_{i=1}^{\infty}(-1)^{j}{nk+j-1 \choose j}x^{j}.
\end{align*}
We now want to find
\begin{align*}
\gamma(m,k)= \text{the biggest}\ j\ \text{such that the coefficient of}\ x^{j}\  \text{does not vanish in}\ \mathbb{Z}_{2}[x]/(x^{N})
\end{align*}
Since we factor over the ideal generated by $x^{N}$, we clearly have $j\leq N(n,k)$. Passing to $\text{mod} 2$, we get $2\gamma(m,k)=\sigma(n,k)$. This proves \ref{propbob}.

\bibliographystyle{unsrt}
\bibliography{bibliography}

\end{document}